\documentclass[conference,compsoc]{IEEEtran}
\usepackage[utf8]{inputenc} % allow utf-8 input
\usepackage[T1]{fontenc}    % use 8-bit T1 fonts
\usepackage{hyperref}       % hyperlinks
\usepackage{url}            % simple URL typesetting
\usepackage{booktabs}       % professional-quality tables
\usepackage{amsfonts}       % blackboard math symbols
\usepackage{nicefrac}       % compact symbols for 1/2, etc.
\usepackage{microtype}      % microtypography
\usepackage{xcolor}         % colors
\definecolor{mygreen}{RGB}{34,139,34}
\definecolor{myorange}{RGB}{230,120,0}
\usepackage{lipsum}
\usepackage{subcaption}
\usepackage{capt-of}

\usepackage{dsfont}
\usepackage{mleftright}

\usepackage{bm}
\usepackage{amsthm}

\theoremstyle{definition}
\newtheorem{definition}{Definition}[section]

\theoremstyle{plain}

\newtheorem{lemma}{Lemma}[section]
\newtheorem{proposition}{Proposition}[section]

\usepackage{multirow}

\usepackage{amsmath}
\usepackage{pifont}         % fancy checkmark and cross symbols
\usepackage[disable]{todonotes}      % todo notes
\usepackage{soul}
\usepackage{forest}         % visualize case distinction

\usepackage{amssymb}
\DisableLigatures{encoding = T1, family = tt* }

\renewcommand{\vec}[1]{\boldsymbol{\mathbf{#1}}}
\newcommand{\up}[1]{\overline{#1}}
\newcommand{\lo}[1]{\underline{#1}}

\DeclareMathOperator{\net}{\mathcal{N}}  % neural network

\DeclareMathOperator{\gelu}{\mathrm{GELU}}
\DeclareMathOperator{\relu}{\mathrm{ReLU}}
\DeclareMathOperator{\silu}{\mathrm{SiLU}}
\DeclareMathOperator{\elu}{\mathrm{ELU}}
\DeclareMathOperator{\R}{\mathbb{R}}   % real numbers
\usepackage[createShortEnv]{proof-at-the-end}

\usepackage{placeins}
\usepackage{wrapfig}
\usepackage[inline]{enumitem}
\usepackage{algorithm}
\usepackage{algpseudocode}
\usepackage{dsfont}
\usepackage{makecell}

\newtheorem{corollary}{Corollary}[section]



\ifCLASSOPTIONcompsoc
  \usepackage[nocompress]{cite}
\else
  \usepackage{cite}
\fi
\ifCLASSINFOpdf
\else
\fi
\begin{document}

\pagestyle{plain}
%
% paper title
% Titles are generally capitalized except for words such as a, an, and, as,
% at, but, by, for, in, nor, of, on, or, the, to and up, which are usually
% not capitalized unless they are the first or last word of the title.
% Linebreaks \\ can be used within to get better formatting as desired.
% Do not put math or special symbols in the title.
\title{Encrypted Neural Networks without Overflows}

% author names and affiliations
% use a multiple column layout for up to three different
% affiliations
%\author{
%\IEEEauthorblockN{Anonymous Author(s)}
%\IEEEauthorblockA{Paper under double-blind review}
%}
 \author{\IEEEauthorblockN{Philipp Kern}
 \IEEEauthorblockA{Karlsruhe Institute of Technology\\
 Karlsruhe, Germany\\
 Email: philipp.kern@kit.edu}
 \and
 \IEEEauthorblockN{Lorenzo Rovida}
 \IEEEauthorblockA{Polytechnic University of Turin\\
 Turin, Italy\\
 Email: lorenzo.rovida@polito.it}
 \and
 \IEEEauthorblockN{Samuel Teuber}
 \IEEEauthorblockA{Karlsruhe Institute of Technology\\
 Karlsruhe, Germany\\
 Email: teuber@kit.edu}
 \and
 \IEEEauthorblockN{Edoardo Manino}
 \IEEEauthorblockA{The University of Manchester\\
 Manchester, UK\\
 Email: edoardo.manino@manchester.ac.uk}
 \and
 \IEEEauthorblockN{Carsten Sinz}
 \IEEEauthorblockA{Karlsruhe University of Applied Sciences\\
 Karlsruhe, Germany\\
 Email: carsten.sinz@h-ka.de}
 \and
 \IEEEauthorblockN{Alberto Leporati}
 \IEEEauthorblockA{University of Milan-Bicocca\\
 Milan, Italy\\
 Email: alberto.leporati@unimib.it}
 }

% conference papers do not typically use \thanks and this command
% is locked out in conference mode. If really needed, such as for
% the acknowledgment of grants, issue a \IEEEoverridecommandlockouts
% after \documentclass

% for over three affiliations, or if they all won't fit within the width
% of the page (and note that there is less available width in this regard for
% compsoc conferences compared to traditional conferences), use this
% alternative format:
% 
%\author{\IEEEauthorblockN{Michael Shell\IEEEauthorrefmark{1},
%Homer Simpson\IEEEauthorrefmark{2},
%James Kirk\IEEEauthorrefmark{3}, 
%Montgomery Scott\IEEEauthorrefmark{3} and
%Eldon Tyrell\IEEEauthorrefmark{4}}
%\IEEEauthorblockA{\IEEEauthorrefmark{1}School of Electrical and Computer Engineering\\
%Georgia Institute of Technology,
%Atlanta, Georgia 30332--0250\\ Email: see http://www.michaelshell.org/contact.html}
%\IEEEauthorblockA{\IEEEauthorrefmark{2}Twentieth Century Fox, Springfield, USA\\
%Email: homer@thesimpsons.com}
%\IEEEauthorblockA{\IEEEauthorrefmark{3}Starfleet Academy, San Francisco, California 96678-2391\\
%Telephone: (800) 555--1212, Fax: (888) 555--1212}
%\IEEEauthorblockA{\IEEEauthorrefmark{4}Tyrell Inc., 123 Replicant Street, Los Angeles, California 90210--4321}}

% use for special paper notices
%\IEEEspecialpapernotice{(Invited Paper)}

% make the title area
\maketitle

% As a general rule, do not put math, special symbols or citations
% in the abstract
\begin{abstract}
The popular Cheon-Kim-Kim-Song (CKKS) scheme enables efficient private inference in neural networks by evaluating them on encrypted data. Since CKKS only supports addition, multiplication, and array rotation operations, turning neural networks into CKKS circuits requires approximating all activation functions (e.g. ReLU) with polynomials over fixed input ranges. Traditionally, these ranges are estimated via heuristic sampling techniques.
%
%In this paper, we demonstrate for the first time that CKKS circuits are vulnerable to overflow events, where legitimate inputs exceed the tolerances of the circuit and cause corrupted outputs. Furthermore, we propose an efficient algorithm to find such inputs in the (visual) vicinity of benign inputs and demonstrate that it can, e.g., lead to corrupted outputs for 47\% of CIFAR10 images processed by a CKKS circuit. 
%%%%%%%%%%%%%%%%%%%%%%%%%%%%%%%%%%%%%%%%%%%%%%%%%%%%%%%%%%%%%%%%%%%
In this paper, we empirically demonstrate that sampled ranges leave the CKKS network exposed to overflow events, i.e. legitimate inputs that exceed the ranges of the circuit and cause corrupted outputs, and propose an improved design technique that completely avoids them.
%
% It is known that sampled ranges leave the CKKS network exposed to overflow exposed to overflow events, i.e. legitimate inputs that exceed the ranges of the circuit and cause corrupted outputs, in this paper we propose an improved design technique that completely avoids them.
%
First, we propose a practical algorithm to induce overflow events in a CKKS circuit, i.e. we automatically craft legitimate inputs that exceed the design ranges within the circuit and cause corrupted outputs. As an example, our algorithm can quickly corrupt around 47\% of benign inputs in a CIFAR10 neural network.
Second, we propose an end-to-end formal verification approach that constructs executable CKKS circuits. These circuits are guaranteed to be overflow-free and come with certified error bounds on the output difference w.r.t. the original neural network.
Our experiments demonstrate that our approach supports a larger set of neural network architectures than prior certified techniques, while also delivering executable CKKS artifacts.
%%%%%%%%%%%%%%%%%%%%%%%%%%%%%%%%%%%%%%%%%%%%%%%%%%%%%%%%%%%%%%%%%%%%%
% Our approach extends beyond simple ReLU-based fully connected networks supported by prior certified techniques  and handles convolutions and GELU activations, where polynomial approximation is more effective.
% Our experiments demonstrate that this enables us to compute significantly tighter certified error bounds on GELU neural networks than for previously supported ReLU networks of similar size.

%
% Second, we propose...
%
%To completely eliminate overflows, we propose a formal verification technique that computes certified bounds on the ranges of all neurons in the network. 
%Our overflow-free solution is compatible with most CKKS-based frameworks,
%PROPOSAL: Our overflow-free solution is compatible with fully connected and convolution CKKS-based networks,
%as it allows to simply substitute heuristically-constructed polynomials with rigorously designed ones.
\end{abstract}
%\todo[inline]{Given Theorem 2 from \cite{10.1007/978-3-031-15802-5_20}, we can write down the theorem for overflow free ckks networks. (in abstract terms using heuristics, but still nice to present)}

% no keywords

% For peer review papers, you can put extra information on the cover
% page as needed:
% \ifCLASSOPTIONpeerreview
% \begin{center} \bfseries EDICS Category: 3-BBND \end{center}
% \fi
%
% For peerreview papers, this IEEEtran command inserts a page break and
% creates the second title. It will be ignored for other modes.
\IEEEpeerreviewmaketitle

\section{Introduction}
\label{sec:introduction}
Private inference schemes based on fully-homomorphic encryption (FHE) are designed to delegate the bulk of the computation to a third-party, without revealing the data itself~\cite{pulido2021privacy}. This is especially relevant for the so-called machine learning as a service (MLaaS) business model, where clients buy access to a machine learning model and run it on third-party servers~\cite{ribeiro2015mlaas}. An ideal FHE private inference scheme would encrypt the client data before transmission to the server, run the full inference over encrypted data, and only decrypt the output once it is transmitted back to the client. % (see Figure \ref{fig:oflow_atk}\todo{LR: The figure does not really relate with this sentence}).
That way, the third-party server does never access the plaintext data, thus guaranteeing its confidentiality.%which might contain personal %or sensitive
%information.\footnote{The honest-but-curious threat model for the third-party server is a staple of the private inference literature.}

\begin{figure}[h]
  \begin{center}
    \includegraphics[width=\linewidth]{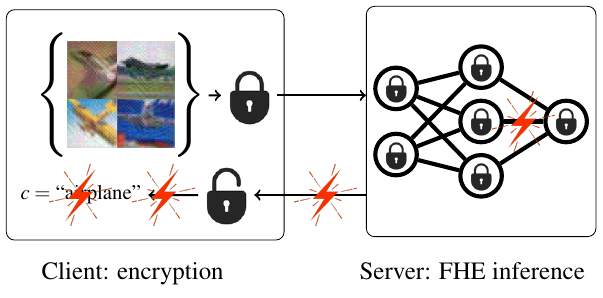}
  \end{center}
\caption{%
FHE allows neural network inference on encrypted data.
%Benign inputs (e.g.\ perturbed CIFAR images) can exceed expected ranges of the FHE circuit, corrupting the output.}
However, some legitimate inputs can exceed the expected ranges of the FHE circuit, thus corrupting the output.}
\label{fig:oflow_atk}
\end{figure}
Several FHE schemes have been designed to realize this vision, including BGV~\cite{BGV12}, BFV~\cite{Bra12,FV12} and TFHE~\cite{CGGI16}.
Among them, the CKKS scheme~\cite{CKKS17} has emerged as the de facto standard for efficiently implementing private neural inference, as it allows to natively encode vectors of complex numbers and perform massive arithmetic operations in parallel. Recent work demonstrates that executing deep convolutional networks~\cite{Lee2023convolutional} and large language models~\cite{10.1145/3643651.3659893,LLAMACKKS} with CKKS is possible.
However,
%At the same time,
the CKKS scheme imposes severe restrictions on the arithmetic operations available over the ciphertext: it only supports addition, multiplication and array rotation.
Thus,
%For this reason,
existing CKKS inference frameworks must replace common %all
non-linear activation functions like the ReLU with (%potentially 
high degree) polynomials within a certain interval range~\cite{Lee2022minimax,Ebel2025orion}.

\begin{figure*}[t]
 \centering
 \includegraphics[width=0.94\textwidth]{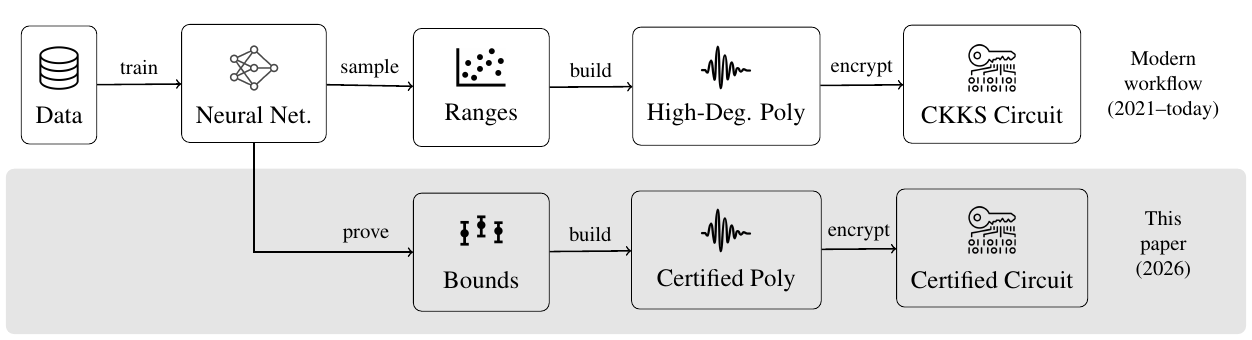}
 \caption{Designing CKKS neural networks requires approximating activations within their expected ranges. This paper avoids overflow events by computing provable bounds and certified polynomials.}
\label{fig:lit_map}
\end{figure*}

%\looseness=-1
%In this paper, we demonstrate a major weakness in modern CKKS designs for neural networks (NNs):\todo{Fix this sentence (the weakness is known)} 
However, there is a major weakness in modern CKKS designs for neural networks (NNs):
there exists a sizeable set of legitimate inputs that make the CKKS network arbitrarily diverge from its reference plaintext implementation. This is because the polynomial approximations of non-linear activation functions are only tight for a limited range of values~\cite{Manino2023fhe,Kern2025fhe}.
A specific input needs only to exceed the range of a \emph{single} activation to cause a cascade of overflows through the CKKS circuit and yield unrecoverable output values.
%need to only exceed the expected range of a single activation to cause a cascade of overflows through the CKKS circuit, which eventually leads to unrecoverable output values.
We call this phenomenon an \textit{overflow event} and note that it breaks the property of availability for the CKKS network: some inputs, no matter how reasonable-looking, will not produce a usable inference result %\footnote{We understand overflow attacks similar to adversarial attacks, not in the sense of cryptographic attacks.}
(see Figure \ref{fig:oflow_atk}).
%
%\alberto{Related with the previous observation: If I ignore the reference to FHE networks instead of circuits, up to now it seems that we present a result concerning FHE circuits. So why are we submitting it to a conference which is about neural networks and the like?}

To address this challenge, we introduce a framework for the \emph{certified} design of FHE NNs.
%\todo{Give it a catchy name? E.g. CertiFHENN. Or more simply FHENN, or HomoNet :-)} 
Our framework guarantees the absence of overflow events, without any noticeable increase in the size of the FHE circuit. %\todo{@Lorenzo: check this}
%while keeping the size of the FHE circuit as small as possible.
In detail, we propose the following contributions:
% More in detail,
%We propose the following contributions to the state of the art:
\begin{itemize}[leftmargin=2em]
    \item We present an efficient algorithm to find inputs that exceed the expected ranges of NNs with polynomial activations. With it, we demonstrate that the modern design of CKKS NNs is vulnerable to overflows. Our empirical result shows that even small perturbations of benign inputs can cause up to $47\%$ of unrecoverable outputs for a CIFAR10 network.
    %\item We show that the modern approach of approximating activation functions with polynomials has two issues. First, it makes the network vulnerable to adversarial perturbations, as they can make the internal neurons overflow their expected ranges. Second, it requires exceptionally high degrees ($d>100$) to compensate for the approximation error on non-analytic/non-smooth activation functions such as ReLU.
%
    \item To mitigate this, we introduce a differential verification framework that can programmatically compute certified bounds on the value range of all neurons in the NN and on the approximation error of the output. The framework can handle both analytic (e.g. GELU) and non-analytic activation functions (e.g. ReLU), fully-connected and convolutional layers.
    \item We give theoretical proofs that our bounds are sound for any possible input, i.e. both benign and adversarial, and thus can be used at design time to eliminate the possibility of overflows. We also show that NNs with analytical activation functions (e.g. GELUs) yield an asymptotically smaller approximation error than NNs with non-analytical ones (e.g. ReLUs). This gives theoretical grounding to a recent empirical observation reported by Ebel~\emph{et al.}~\cite{Ebel2025orion}. %\todo{Do we want to claim retraining? Need to show that we are better than standard ReLU networks.}.
    \item We further improve the design of CKKS NNs by allowing for the polynomial activations of different neurons in the same layer to be different, as opposed to uniform approximations in existing work.
    We thus generalize the Paterson-Stockmeyer algorithm~\cite{PS73,CCS19} from evaluating (ciphertext $\times$ constant) to (ciphertext $\times$ plaintext). Our experiments show that a heterogeneous design can achieve at least a $4.7\times$ reduction in error compared to the uniform one (as shown in Figure \ref{fig:ablation}).%\todo{LR: works for me, double check :) We don't show it via OpenFHE but just on $f_{\pi}$ Right!}
    %\todo{LR:Remove this $\%$? We'd need to perform comparisons with other works to justify the claim. We might say that to the best of our knowledge this has never been done in literature?}

    %\item \textcolor{orange}{We improve the IND-CPA$^\text{D}$ security \cite{LM21} of CKKS-based neural networks by allowing a simpler study of noise growth during circuit evaluation. The IND-CPA$^\text{D}$ security of a CKKS-based circuit can be improved by having strong estimates on the error contained in the final result \cite{JCMNT25}; our work helps towards that direction as it allows us to construct a CKKS circuit whose output error is not increased by polynomial approximations of activation functions, unlike what typically occurs in standard CKKS-based machine learning.}\todo{Discuss and polish}

    %LR: put this below (after experiments)
    
\end{itemize}

%\todo{Supplemental material or anonymous repo?}
We provide an open-source implementation\footnote{Refer to \url{https://github.com/lorenzorovida/encrypted-neural-networks-without-overflows}}
%\alberto{Modify the sentence to include the anonymous repo?} 
of our framework, based on OpenFHE~\cite{OpenFHE}, which includes construction of certified polynomials and conversion to CKKS NNs.

\subsubsection*{Overview}
Section \ref{sec:background} provides the necessary background on FHE (in particular CKKS) as well as NNs, their verification and their polynomial approximation.
Section \ref{sec:overflow} introduces the notion of overflow events and presents an efficient algorithm for the discovery of overflowing inputs.
Section \ref{sec:cert_design} describes our methodology for the certified design of overflow-free CKKS circuits.
Section \ref{sec:poly-approx-smooth} provides a theoretical argument for the usage of $\gelu{}$ activation functions  and Section \ref{sec:concrete-ckks} describes how our approximations can be efficiently implemented in CKKS circuits.
We evaluate our approach in Section \ref{sec:experiments}.

\section{Background}
%\todo[inline]{LR: todo :)}
State-of-the-art FHE NNs, especially those based on the CKKS scheme, are typically constructed in a multi-stage process, as shown in the upper part of Figure~\ref{fig:lit_map}: a regular neural network $f(\mathbf{x})$ is first approximated as a high-degree polynomial $f_\pi(\mathbf{x})$, after which the corresponding CKKS circuit $f_\text{CKKS}(\mathbf{x})$ can be built. Throughout the process, several sources of error are introduced:
\begin{equation}
\begin{split}
\label{eq:error_sources}
    f_\text{CKKS}(\mathbf{x})&=f_\pi(\mathbf{x}) + {\bm e}_\text{CKKS}(\mathbf{x})\\&=f(\mathbf{x})+{\bm e}_{\pi}(\mathbf{x})+{\bm e}_\text{CKKS}(\mathbf{x}) \enspace.
    \end{split}
\end{equation}
where ${\bm e}_{\pi}(\mathbf{x})=f_{\pi}(\mathbf{x})-f(\mathbf{x})$ is the \emph{deterministic} polynomial approximation error stemming from the approximate representation of activation functions in $f_\pi$, while ${\bm e}_{\text{CKKS}}(\mathbf{x})$ is the stochastic error stemming from the CKKS encryption.

In the remainder of the section, we will characterise these sources of errors, from the encryption itself to polynomial approximation and range estimation. %by going backwards through Figure \ref{fig:lit_map}: Starting from the encryption and following up with polynomial approximation and range estimation.\todo{Paragraph  not very intuitive}

%\todo[inline]{EM: Move Eq. 1 here. Say explicitly we are going through Figure 2 backwards.\\
%LR: Done, what you think? Perhaps we can stress the circuit->high deg poly->ranges flux further (moved to section 2.3)}

\label{sec:background}

\subsection{Fully Homomorphic Encryption}
\label{sec:fhe-intro}

We begin by introducing some abstract definitions of Fully Homomorphic Encryption (FHE) schemes reported in the literature~\cite{10.1007/978-3-031-15802-5_20,10.62056/ayl83z10k} and extended to our needs.
\begin{definition}[FHE Scheme]
A (public-key) homomorphic encryption scheme with plaintext space
$\mathcal{M}$, ciphertext space $\mathcal{C}$, public key space $\mathcal{PK}$,
secret-key space $\mathcal{SK}$, and space of evaluatable circuits
$\mathcal{L}$ is a tuple of four probabilistic polynomial-time algorithms:
\begin{align*}
\mathsf{KeyGen} &: 1^{\mathbb{N}} \rightarrow \mathcal{PK} \times \mathcal{SK}&
\mathsf{Enc} &: \mathcal{PK} \times \mathcal{M} \rightarrow \mathcal{C}\\
\mathsf{Eval} &: \mathcal{PK} \times \mathcal{L} \times \mathcal{C}&
\mathsf{Dec} &: \mathcal{SK} \times \mathcal{C} \rightarrow \mathcal{M}
\rightarrow \mathcal{C}
\end{align*}
\end{definition}
In particular, in this work we will be dealing with \emph{approximate-FHE}, i.e., such that $\mathsf{Dec}(\mathsf{Enc}(m)) \approx m$ for any $m \in \mathcal{M}$. Since the result of a decryption is approximate, standard correctness notions cannot be applied here. We thus first introduce a definition related to the impact of such approximation.
\begin{definition}[Ciphertext Error \cite{10.1007/978-3-031-15802-5_20}]
    Let $\prod = (\mathsf{KeyGen}, \mathsf{Enc}, \mathsf{Dec}, \mathsf{Eval})$ be an
FHE scheme with message space $\mathcal{M} \subseteq \tilde{\mathcal{M}}$, which is a normed space with norm
$\|\cdot\|: \tilde{\mathcal{M}} \rightarrow \mathbb{R}_{\geq 0}$. For any ciphertext $\mathsf{ct}$, secret key $\mathsf{sk}$, and message $m$, the
\emph{ciphertext error} of $(\mathsf{ct}, m, \mathsf{sk})$ is defined to be
\[
\mathsf{Error}(\mathsf{ct}, m, \mathsf{sk}) = \|\mathsf{Dec}_{\mathsf{sk}}(\mathsf{ct}) - m\|.
\]
\end{definition}
Note that we will conveniently require for $m_1,\dots, m_k$ to be bounded, and this can be captured in the above definition by choosing $\mathcal{M}$ to be a set of bounded norm. It turns out that this definition is compatible with neural network verification (see Section \ref{sec:background_nnv}) which also requires inputs  $m \in \mathcal{M}$ with bounded norm.

%It is sufficient for the sake of our work to assume working with a definition of correctness that is static, i.e., it only depends on a class of circuits $\mathcal{L}_k$ and it can be (publicly) precomputed offline, given a circuit and a set of bounded inputs. 
For this work, it suffices to assume static correctness, i.e., we assume a statically computable function $\mathsf{Estimate}$ that bounds the ciphertext error after circuit execution:

%\todo[inline]{EM: the purpose of Def. \ref{def:correctness} remains  obscure to me. I added a sentence ad the end of Sec. 3 that tries to close the loop, let me know!} % thanks, now at least we use the definition somewhere!

\begin{definition}[Static Approximate Correctness \cite{10.1007/978-3-031-15802-5_20}]
Let $\prod = (\mathsf{KeyGen}, \mathsf{Enc}, \mathsf{Dec}, \mathsf{Eval})$ be an
FHE scheme with message space $\mathcal{M} \subseteq \tilde{\mathcal{M}}$, which is a normed space with norm
$\|\cdot\|: \tilde{\mathcal{M}} \rightarrow \mathbb{R}_{\geq 0}$.
Let $\mathcal{L}$ be a space of circuits, $\mathcal{L}_k \subseteq \mathcal{L}$ the subset of parity $k$ circuits, and let
$\mathsf{Estimate}: \sqcup_{k \in \mathbb{N}} \mathcal{L}_k \times \mathbb{R}_{\ge 0}^k \rightarrow \mathbb{R}_{\ge 0}$ be an efficiently computable function. We
call the tuple $\tilde{\prod} = (\prod, \mathsf{Estimate})$ a \emph{statically approximate FHE scheme} if for
all $k \in \mathbb{N}$, for all $\mathcal{C} \in \mathcal{L}_k$, for all $(\mathsf{pk}, \mathsf{sk}) \gets \mathsf{KeyGen}(1^\kappa)$, if $\mathsf{ct}_1, \dots, \mathsf{ct}_k$ are the encryptions of
$m_1, \dots, m_k$ with $\mathrm{Error}(\mathsf{ct}_i, m_i, \mathsf{sk}) \leq t_i$, then
\[
\begin{split}
\mathsf{Error}(\mathsf{Eval}_{\mathsf{pk}}(C, \mathsf{ct}_1, \dots, \mathsf{ct}_k), C(m_1, \dots, m_k), \mathsf{sk}) \leq\\ \mathsf{Estimate}(C, t_1, \dots, t_k).
\end{split}
\]
\label{def:correctness}
\end{definition}
%\todo[inline]{LR: How does it deal with plaintext overflow though? I asked the authors}
At the core of this definition is the $\mathsf{Estimate}$ function, which can be computed using different approaches \cite{10.1007/978-3-031-15802-5_20,10.62056/ayl83z10k}.
In practice, it acts as a tool to measure how far the CKKS circuit diverges from the plaintext one.
%Note that, at the core of such definition lie the $\mathsf{Estimate}$ function, which can be computed using different approaches as discussed in \cite{10.1007/978-3-031-15802-5_20,10.62056/ayl83z10k}. In practice, it acts as a tool to measure how far the CKKS circuit is behaving with respect to the plain one.
%We note that different instantiations/encodings of CKKS can make the estimation easier, e.g., bivariate-CKKS \cite{10.1007/978-981-96-0875-1_6} seems like a very interesting direction in this sense.
%\todo{do we keep this in the appendix?}% added by ST
%\todo[inline]{LR: todo}
%We refer the reader to Appendix \ref{sec:generalfhe} for a more general introduction to the available FHE schemes

\subsection{The CKKS Scheme}
The Cheon--Kim--Kim--Song (CKKS) scheme~\cite{CKKS17} currently represents the standard framework to efficiently handle complex (and real) arithmetic in FHE. 
Similar to other FHE schemes, it is lattice-based and has a security reduction to the hardness of the Ring Learning with Errors (RLWE) problem \cite{LPR10}. 
This is a structured version of the original Learning with Errors (LWE) \cite{Reg09} introduced for efficiency reasons. 

%In this work, we employ an optimized version of the CKKS scheme which uses the double Chinese Remainder Theorem (DCRT) representation of polynomials \cite{KPP22}, implemented in the OpenFHE library \cite{OpenFHE}. 

\subsubsection{Basics of the Scheme}
Let $N = 2^k$ for some integer $k > 0$ and let $Q > 0$ be some large modulus.
Let $\mathsf{DFT}:\mathbb{R}[X]/(X^N+1) \rightarrow \mathbb{C}^{N/2}$ be a discrete Fourier transform (DFT) defined as ${\bm m}(X) \mapsto ({\bm m}(\zeta^{5^i}))_{0 \leq i < N/2}$,\footnote{$(1, 5, \dots, 2^{N/2} - 3)$ is a cyclic subgroup of the multiplicative group $\mathbb{Z}_{N}^\times$ generated by 5. Rotations are thus enabled via $X \mapsto X^5$ \cite{10.1007/978-3-319-78381-9_14}.} and let $\mathsf{iDFT}: \mathbb{C}^{N/2} \rightarrow \mathbb{R}[X]/(X^N+1)$ be its inverse, with $\zeta$ being a $2N$-th primitive root of unity.
The cleartext space of CKKS is $\mathbb{C}^{N/2}$, the plaintext space (i.e., after the encoding) is $\mathcal{R} := \mathbb{Z}[X]/(X^N+1)$, and the ciphertext space is $\mathcal{R}_{Q}^2$, where $\mathcal{R}_Q$ is the quotient ring  $\mathcal{R}_Q = \mathcal{R}/Q\mathcal{R}$.

%\todo[inline]{CS: What about moving Def. 2.4 here?
%The sentence before that def. about LWE could then perhaps be dropped or put at the end of this new paragraph.
%}
\begin{definition}[Ring Learning with Errors (RLWE) \cite{LPR10}] 
Let $\chi_s, \chi_e$ be distributions over $\mathcal{R}_Q$.% \todo{is the modulus missing here? Uhmm we may add it, typically those distributions are very small (e.g. from [-1, 0, 1])}. 
The goal of the RLWE problem is to distinguish $({\bm a}, {\bm x})$ from $({\bm a}, {\bm a}{\bm s} + {\bm e})$ for uniformly random ${\bm a}, {\bm x} \in \mathcal{R}_Q$, ${\bm e} \leftarrow \chi_{\bm e}$, and ${\bm s} \leftarrow \chi_{\bm s}$. The RLWE$_{N, Q, \chi_e, \chi_s}$ assumption is that solving the RLWE$_{N, Q, \chi_e, \chi_s}$ problem is computationally unfeasible.
%\todo{CS: Is it really $\chi_e$ twice and no $\chi_s$ in RLWE?}
\end{definition}

In order to make the cleartext space compatible with RLWE encryption (which takes a message in $\mathcal{R}$ and returns a ciphertext in $\mathcal{R}_{Q}^2$%\todo{why is there an $\ell$ index for the $Q$?})
, the CKKS scheme uses (the inverse of) a discrete Fourier transform encoding that moves a plaintext from $\mathbb{C}^{N/2}$ to $\mathcal{R}$ as follows:
\[
\begin{split}
\underbrace{\mathbb{C}^{N/2}}_{\text{Cleartext}} \xrightarrow[\text{Encoding}]{\;\text{Rounded iDFT}\;}  \underbrace{\mathcal{R}}_{\text{Plaintext}}
\xrightarrow[\text{Encryption}]{\,\;\;\;\text{RLWE}\;\;\;\,} \underbrace{\mathcal{R}_Q^2}_{\text{Ciphertext}},
\end{split}
\]
where the rounding follows ideas from Lyubashevsky, Peikert and Regev~\cite[Section 2]{LPR10}. As it will be shown later, this rounding operation implies that we are required to scale the message by some large factor $\Delta > 0$ to preserve the most significant digits.
In practice, numbers behave similarly to a fixed-point system. The scaling factor $\Delta$ is the key parameter governing computational precision: it yields greater accuracy but comes at the cost of increasing the ciphertext scale from $\Delta$ to $\Delta^2$ after each multiplication.

To tackle this, CKKS introduces a moduli chain $Q_\ell = \prod_{i=0}^\ell q_i$, with $q_i \approx \Delta$, which enables a technique called \emph{rescaling}: after a multiplication (whose result has scale $\Delta^2$), the scale is reduced back to $\Delta$ by dividing by $q_i$, where $i$ is the current multiplicative level. However, each rescaling operation consumes one modulus from the chain, so the depth of circuits that can be evaluated is bounded by the number of available $q_i$ (number of levels $\ell$) --- unless additional techniques such as bootstrapping \cite{10.1007/978-3-319-78381-9_14} are employed.
The homomorphisms induced by ($\mathsf{i}$)$\mathsf{DFT}$
\[
\begin{split}
a + b &= \mathsf{DFT}(\mathsf{iDFT}(a) + \mathrm{iDFT}(b)) \quad \text{and}\\ \quad a \odot b &= \mathsf{DFT}(\mathsf{iDFT}(a) \cdot \mathsf{iDFT}(b)),
\end{split}
\]
where $\odot$ is the slot-wise multiplication, naturally induce a Single-instruction multiple data (SIMD) computational paradigm on the plaintext space by simply adding/multiplying ciphertexts in the iDFT domain.
We can summarize the main operations of the scheme in what follows.
\begin{itemize}
	\item $\mathsf{CKKS.Setup}$$(1^\lambda, N, \ell)$. Given a security parameter
	$\lambda$, a ring dimension $N$, and a multiplicative circuit depth $\ell$,
	choose a modulus chain $Q_\ell = \prod_{i=0}^{\ell} q_i$, a secret key distribution
	$\chi_s$, an error distribution $\chi_e$, and a scaling factor
	$\Delta \in \mathbb{Z}_{>0}$. Output the public parameter tuple
	\[
	{\bm t} := \bigl(N,\; \Delta,\; Q_\ell,\; \chi_s,\; \chi_e\bigr).
	\]
	
	\item $\mathsf{CKKS.KeyGen}$$({\bm t})$. Given the public parameter tuple
	$\mathbf{t}$, sample ${\bm s} \sim \chi_s$ and $\mathbf{a} \leftarrow \mathcal{R}_{Q_\ell}$ uniformly, and
	$\mathbf{e} \sim \chi_e$. Output the secret and public keys
	\[
    \begin{split}
	\mathsf{sk} := {\bm s} \in \mathcal{R}, \quad
	\mathsf{pk} := \bigl(-{\bm a}\cdot {\bm s} + {\bm e},\; {\bm a}\bigr) \in \mathcal{R}_{Q_\ell}^2.
    \end{split}
	\]
	Note that $\mathsf{pk}_0 + \mathsf{pk}_1 \cdot {\bm s} = {\bm e} \approx 0$, so the
	public key is a noisy encryption of zero under ${\bm s}$.
	
	\item $\mathsf{CKKS.Encode}$$(\mathbf{m})$. Return:
	$
	\bigl\lfloor \Delta \cdot \mathsf{iDFT}({\bm m}) \bigr\rceil
	\;\in\; \mathcal{R}.
	$
	\item $\mathsf{CKKS.Decode}$$({\bm p})$. Return:
	$
	1/\Delta\cdot\mathsf{DFT}({\bm p})
	\;\in\; \mathbb{C}^{N/2}.
	$
	
	\item $\mathsf{CKKS.Encrypt}$$({\bm p})$. Given a plaintext
	${\bm p} \in \mathcal{R}_{Q_\ell}$ and the public key $pk$, sample a
	random mask ${\bm u} \sim \chi_s$ and noise terms
	${\bm e}_0, {\bm e}_1 \sim \chi_e$. Output the ciphertext
	\[
	\bigl(
	[pk_0\,{\bm u} + {\bm e}_0 + \mathbf{p}]_{Q_\ell},\;
	[pk_1\,{\bm u} + {\bm e}_1]_{Q_\ell}
	\bigr)
	\;\in\; \mathcal{R}_{Q_\ell}^2.
	\]
	
	\item $\mathsf{CKKS.Decrypt}$$({\bm c})$. Given a ciphertext
	${\bm c} = (c_0, c_1) \in \mathcal{R}_{Q_\ell}^2$ and the secret key
	${\bm s}$, compute the inner product $c_0 + c_1\,\mathbf{s}$. Output
	the approximate plaintext
	\[
	\begin{split}
		c_0 + c_1\,\mathbf{s}
		&= \bigl(pk_0\,{\bm u} + {\bm e}_0 + {\bm p}\bigr)
		+ \bigl(pk_1\,{\bm u} + {\bm e}_1\bigr){\bm s} \\
		%&= \bigl(-{\bm a}{\bm s} + {\bm e}\bigr){\bm u}
		%+ {\bm e}_0 + {\bm p}
		%+ {\bm a}\,{\bm u}\,{\bm s}
		%+ {\bm e}_1\,{\bm s} \\
		&= \underbrace{{\bm e}\cdot {\bm u}
			+ {\bm e}_0
			+ {\bm e}_1\,{\bm s}}_{\text{small noise}}
		+ {\bm p}
		\;\approx\; {\bm p}
		\;\in\; \mathcal{R}_{Q_\ell}.
	\end{split}
	\]
Notice that the term indicated above with \emph{small noise} is precisely the ${\bm e}_\text{CKKS}({\bm p})$ term abstractly introduced in Eq.\eqref{eq:error_sources}.
\end{itemize}
Encryption and decryption follow from standard RLWE \cite{LPR10}. Notice that the CKKS noise is mainly given by two factors: the rounding performed in the encoding and the error injected during encryption.

%\todo[inline]{EM: call back to the ${\bm e}_\text{CKKS}(\mathbf{x})$ term in Equation \ref{eq:error_sources}? Good idea!}

\subsubsection{Homomorphic Operations}
\label{subsubsec:background:ckks:ops}
%The CKKS scheme enables a limited number of homomorphic operations \cite{CKKS17}, namely:
The CKKS scheme only supports a limited number of homomorphic operations~\cite{CKKS17}, namely:
%\todo{ST: Before we strongly emphasize that CKKS only supports limited operations... Can we exhaustively list here which other operations it supports besides the ones mentioned here or are there still too many?}
\begin{itemize}
	\item $\mathsf{CKKS.Add}$$({\bm c}, {\bm c}')$. Given two ciphertexts
	${\bm c}, {\bm c}' \in \mathcal{R}_{Q_\ell}^2$, output the
		component-wise sum
		\[
		{\bm c}_{\text{add}}
		\;:=\;
		{\bm c} + {\bm c}'
		\;=\;
		(c_0 + c_0',\; c_1 + c_1')
		\;\in\; \mathcal{R}_{Q_\ell}^2,
		\]
		satisfying
		$\text{Decrypt}({\bm c}_{\text{add}})
		\approx \text{Decrypt}({\bm c}) + \text{Decrypt}({\bm c}')$.
		
		\item $\mathsf{CKKS.Mult}$$({\bm c}, {\bm c}')$. Given two ciphertexts
		${\bm c}, {\bm c}' \in \mathcal{R}_{Q_\ell}^2$, compute the
		degree-2 tensor product
		\[
		(c_0 c_0',\; c_0 c_1' + c_1 c_0',\; c_1 c_1')
		\;\in\; \mathcal{R}_{Q_\ell}^3,
		\]
		and apply $\text{KeySwitch}(\,\cdot\,, \mathsf{sk}^2 \to \mathsf{sk})$ to reduce back
		to two components. Output
		\[
		{\bm c}_{\text{mul}} \;\in\; \mathcal{R}_{Q_\ell}^2,
		\]
		satisfying
		$\mathsf{Decrypt}({\bm c}_{\text{mul}}) \approx
		\mathsf{Decrypt}({\bm c}) \cdot \mathsf{Decrypt}({\bm c}')$.
		Note that the resulting scaling factor is $\Delta^2$; a subsequent call
		to $\mathsf{Rescale}$ is typically needed to restore it to $\Delta$.
		\item $\mathsf{CKKS.KeySwitch}$$({\bm c}, \mathsf{evk})$. Given a ciphertext
		${\bm c} = (c_0, c_1, c_2) \in \mathcal{R}_{Q_\ell}^3$ encrypted
		under $\mathsf{sk}^2$ and an evaluation key $\mathsf{evk}$ --- a public encryption of
		$\mathsf{sk}^2$ under $\mathsf{sk}$ --- decompose $c_2$ into a base-$P$ representation (as in \cite{GHS12})
		$c_2 = \sum_j c_2^{(j)} P^j$ and absorb it into the first two
		components using $\mathsf{evk}$. Output the key-switched ciphertext
		\[
		\bigl(c_0 + \textstyle\sum_j c_2^{(j)}\, \mathsf{evk}_0^{(j)},\;
		c_1 + \textstyle\sum_j c_2^{(j)}\, \mathsf{evk}_1^{(j)}\bigr)
		\;\in\; \mathcal{R}_{Q_\ell}^2,
		\]
		which encrypts the same plaintext under $\mathsf{sk}$ with a controlled increase
		in noise.
        %\todo{Is it clear what $P$ of base $P$ is here? Is it clear what an evaluation key is?} % Added by ST
        %I added a reference where the base P representation is justified
		
		\item $\mathsf{CKKS.Rescale}$$({\bm c})$. Given a ciphertext
		${\bm c} \in \mathcal{R}_{Q_\ell}^2$ with scaling factor $\Delta^2$,
		drop the last modulus $q_\ell$ from the chain and divide. Output
		\[
		\bigl\lfloor \Delta^{-1} \cdot {\bm c} \bigr\rceil
		\;\in\; \mathcal{R}_{Q_{\ell-1}}^2,
		\]
		which restores the scaling factor to $\Delta$ at the cost of consuming
		one level of the modulus chain.
		
		\item $\mathsf{CKKS.Rotate}$$({\bm c}, i)$. Given a ciphertext
		${\bm c} \in \mathcal{R}_{Q_\ell}^2$ and a rotation index
		$i \in \mathbb{Z}$, apply the Galois automorphism
		$\sigma_i \colon X \mapsto X^{5^i \bmod 2N}$ component-wise to obtain
		$\sigma_i({\bm c}) \in \mathcal{R}_{Q_\ell}^2$, and apply
		$\mathsf{KeySwitch}(\,\cdot\,, \sigma_i(\mathsf{sk}) \to \mathsf{sk})$ using the
		corresponding rotation key. Output
		\[
		{\bm c}_{\text{rot}} \;\in\; \mathcal{R}_{Q_\ell}^2,
		\]
		which satisfies
		$\mathsf{Decrypt}({\bm c}_{\text{rot}}) \approx
		\rho^i\!\left(\mathsf{Decrypt}({\bm c})\right)$,
		where $\rho^i$ denotes a cyclic shift of the plaintext slot vector by
		$i$ positions.
\end{itemize}
%\todo[inline]{Add a few words about operations runtime and bootstrapping}
We summarize the noise growth and the runtime of the available operations in Table \ref{tab:ckksoperations}.
\begin{table}[t]
\centering
\normalsize
\setlength{\tabcolsep}{4pt} % Default value: 6pt
\newcommand{\lowSym}{$\bullet$}
\newcommand{\mediumSym}{$\bullet\bullet$}
\newcommand{\highSym}{$\bullet\bullet\bullet$}
\begin{tabular}{llll}
\toprule
Operation & Noise Growth & Latency & Levels Consumed \\
\midrule
$\mathsf{Add}({\bm c}_1, {\bm c}_2)$    & \lowSym  & \lowSym  & unchanged \\
$\mathsf{Sub}({\bm c}_1, {\bm c}_2)$    & \lowSym  & \lowSym  & unchanged \\
$\mathsf{Mul}({\bm c}_1, {\bm c}_2)$    & \highSym      & \highSym & $1$      \\
$\mathsf{Mul}({\bm c}, {\bm p})$        & \mediumSym & \lowSym  & $1$      \\
$\mathsf{Rotate}({\bm c}, i)$     & \lowSym & \highSym      & unchanged \\
\bottomrule
\end{tabular}
\caption{Overview on arithmetic and rotation operations in CKKS.
\lowSym~low,
\mediumSym~medium,
\highSym~high.}
\label{tab:ckksoperations}
\end{table}
Nevertheless, we refer the interested reader to the RNS-CKKS implementation by Kim, Papadimitriou and Polyakov \cite{KPP22} used in this work.

\subsection{Polynomial Networks}
\label{sec:poly_net}

\newcommand{\inputSpace}{\ensuremath{\mathcal{I}}}
\newcommand{\errorVal}{\ensuremath{\bot}}
\newcommand{\inputVal}{\ensuremath{\mathbf{x}}}
\newcommand{\errorBound}{\ensuremath{\mathcal{E}}}
\newcommand{\dataset}{\ensuremath{\mathcal{D}}}
\newcommand{\noiseVector}{\ensuremath{\boldsymbol{\delta}}}
\newcommand{\perturbedVersion}{\ensuremath{\gamma}}

%\label{def:arith_circuit}
%\label{def:ckks_circuit}

Due to the restricted set of operations available in the CKKS scheme (see Section \ref{subsubsec:background:ckks:ops}), a neural network $f$ must be converted into a sequence of addition, multiplication, rotations, and rescaling operations. More precisely, consider a standard feed-forward neural network:

\begin{definition}[Feed-Forward Neural Network]
\label{def:neural_net}
    A \emph{neural network} computes a function $f:\inputSpace \to \mathbb{R}^O$ that maps vectors from the $I$-dimensional input space $\inputSpace \subseteq \mathbb{R}^I$ to an $O$-dimensional output space. A feed-forward architecture is a composition of layers $l\in[1,L]$ defined as
    %\begin{equation}
    %\label{eq:neural_net}
    $
    \mathbf{z}^{(l)} = \sigma^{(l)}\mleft(\tilde{\mathbf{z}}^{(l)}\mright) = \sigma^{(l)}\mleft(W^{(l)} \mathbf{z}^{(l-1)} + \mathbf{b}^{(l)}\mright),
    $
    %\end{equation}
    where $\sigma^{(l)}$ is an activation function, $\inputVal=\mathbf{z}^{(0)}\in\inputSpace$ is the network input, and $\mathbf{z}^{(L)}\in\mathbb{R}^O$ the network output. We denote by $f^{(l)}\mleft(\inputVal\mright)$ the value of $\tilde{\mathbf{z}}^{(l)}$ at layer $l$ for input $\inputVal$.
\end{definition}

While the linear term $W^{(l)} \mathbf{z}^{(l-1)} + \mathbf{b}^{(l)}$ can be natively implemented with the operations listed above, the activations $\sigma^{(l)}$ must be replaced with polynomial functions as follows:

%\footnote{For the sake of simplicity, we will often treat $\sigma$ as a non-linear element-wise transformation, even though its definition covers other operations such as average pooling and softmax.}

\begin{definition}[Polynomial Neural Network]
\label{def:poly_net}
    A \emph{polynomial neural network} computes a function $f_\pi:\inputSpace \to \mathbb{R}^O$ obtained 
    from a neural network $f:\inputSpace \to \mathbb{R}^O$ by replacing the activation functions $\sigma^{(l)}$ in Definition \ref{def:neural_net} by their polynomial approximation counterparts $\pi^{(l)}$ (see Sec.~\ref{ssec:range-analysis} for their construction). To differentiate the two networks,
    we rename the pre- and post-activation vectors from $\vec{z}$ to $\vec{y}$, i.e.\
    %\begin{equation}
    %\label{eq:neural_net-approx}
     $   \mathbf{y}^{(l)} = \pi^{(l)}\mleft(\tilde{\mathbf{y}}^{(l)}\mright) = \pi^{(l)}\mleft(W^{(l)} \mathbf{y}^{(l-1)} + \mathbf{b}^{(l)}\mright)$
    %\end{equation}
    with $\mathbf{y}^{(0)}=\mathbf{z}^{(0)}$ and $\mathbf{y}^{(l)}\approx\mathbf{z}^{(l)}$ for all layers $l>0$. We call $f_\pi^{(l)}\mleft(\inputVal\mright)$ the value of $\tilde{\mathbf{y}}^{(l)}$ for input $\inputVal$.
\end{definition}

As per Equation \eqref{eq:error_sources}, we call ${\bm e}_{\pi}(\mathbf{x})=f_{\pi}(\mathbf{x})-f(\mathbf{x})$ the approximation error introduced by polynomial networks. The common strategy to reduce this source of error is the use of higher-degree polynomials, at the cost of higher multiplicative depth in the CKKS circuit and thus an increase in latency \cite{Lee2022minimax}.

\subsection{Range Estimation}
\label{sec:sampled_ranges}

%\todo[inline]{
%Is ``effective polynomial networks'' the main reason? Shouldn't it be
%guaranteed output ranges? Or is this meant by ``effective''. We should
%be more clear here.
%}

\looseness=-1
To construct accurate polynomial approximations of the activation functions $\sigma^{(l)}$,
we must bound the range of their input, i.e.\ their pre-activation values $f_\pi^{(l)}\left(\vec{x}\right)$.
%of the polynomials' inputs. \alberto{It is not clear what polynomials' inputs are. Inputs are vectors, and we approximate activation functions, right?}
Therefore, we need to identify intervals $[\vec{l}^{(l)},\vec{u}^{(l)}]$ for each layer $l$ such that $f_\pi^{(l)}(\mathbf{x})\in[\vec{l}^{(l)},\vec{u}^{(l)}]$ is satisfied
for all inputs $\mathbf{x} \in \mathcal{I}$.
%\todo{$f_\pi^{(l)}(x)$ or $f^{(l)}(x)$?} f_\pi
%
If there exist inputs $\mathbf{x}$ where $f_\pi^{(l)}(\mathbf{x})$ is not in the computed approximation interval,
the approximation may result in large errors and uncontrollable growth, eventually leading to arithmetic failures.
This phenomenon 
is visible in the $\mathrm{GELU}$ approximations in Fig. \ref{subfig:gelu:approx} which are precise for the range $\left[-3,3\right]$, but rapidly diverge outside of it.

% If the true pre-activation values are out of range, the polynomial
% approximation may result in large errors and uncontrollable growth,
% eventually leading to arithmetic overflows in the CKKS circuit's
% computations. \alberto{I would explain this part better, as it is fundamental to this work. Furthermore, it is unclear how the content of this paragraph relates to the example of the GELU function approximation discussed in the next line.}

% %To build effective polynomial networks, we need to identify the expected ranges $\tilde{\mathbf{y}}^{(l)}\in\left[\mathbf{l}^{(l)},\mathbf{u}^{(l)}\right]$ of the polynomial inputs $\tilde{\mathbf{y}}^{(l)}=W^{(l)} \mathbf{y}^{(l-1)} + \mathbf{b}^{(l)}$ at each layer $l$. Given the ranges across the network, we can construct accurate polynomial approximations of the activation functions. 
% An example of a polynomial approximation of $\gelu{}$ on $\left[-3,3\right]$ can be seen in Fig.~\ref{subfig:gelu:approx}.
% \todo{Should we rescale the y-axis in Fig.~\ref{subfig:gelu:approx}}

\begin{figure}
    \vspace*{-1.2em}
    \begin{center}
    \includegraphics[width=0.8\linewidth]{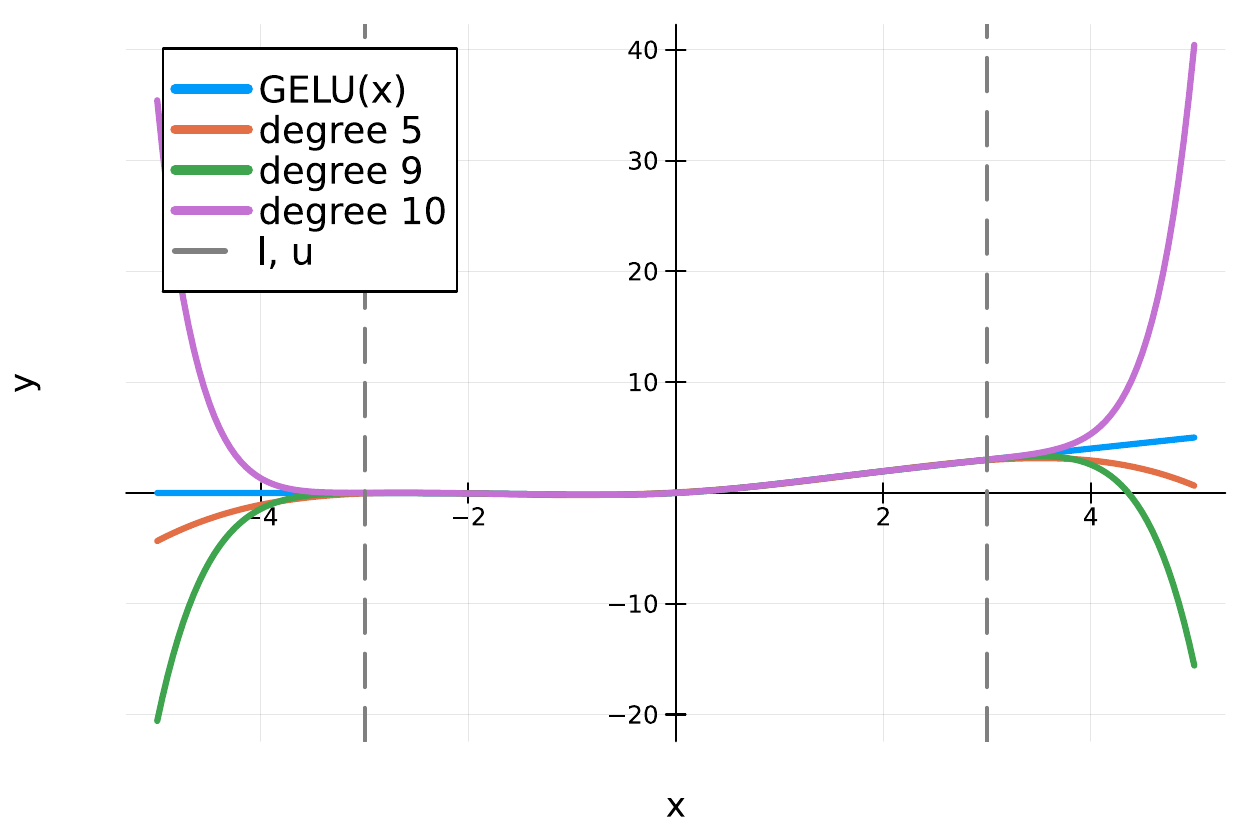}
    \end{center}
    \vspace*{-1.3em}
    \caption{Polynomial approximations diverge to $\pm \infty$ outside of the approximation interval.}
    \label{subfig:gelu:approx}
\end{figure}
\subsubsection{Sampling-Based Design} The most common approach to compute ranges is based on observations from a dataset $\dataset$ sampled from the training data distribution. For intermediate layers $l$, bounds $\mathbf{l}^{(l)},\mathbf{u}^{(l)}$ are computed such that $\mathbf{l}^{(l)} \leq f_\pi^{(l)}\mleft(\inputVal\mright) \leq \mathbf{u}^{(l)}$
%\todo{$f_\pi^{(l)}(x)$ or $f^{(l)}(x)$?} in my understanding typically f, not f_\pi
holds for all $\inputVal\in\dataset$.\footnote{For ease of implementation, the activation ranges are often estimated on the regular network $f^{(l)}\mleft(\inputVal\mright)$ rather than $f_\pi^{(l)}\mleft(\inputVal\mright)$.} 

Such sampling-based estimates provide no guarantees outside the dataset $\mathcal{D}$ and may significantly underestimate the true ranges.
%As a result, inputs that are out-of-distribution -- or perturbed -- might violate these ranges.
Existing countermeasures
%Countermeasures in the literature
include extending sampled
ranges by a factor of two~\cite{Ebel2025orion} or 
choosing a constant range of $[-c,c]$ for a suitable large constant $c>0$~\cite{Lee2022fhe}.
%choosing conservative ranges of $[-c,c]$ for the whole network \cite{Lee2022fhe} (for a suitable constant $c>0$).
%\todo{$[-40,40]$ sounds quite arbitrary; why not say $[-c,c]$ for a suitable constant $c$?}
Others propose to reduce the impact of outliers on ranges by pre-multiplying the weights by orthogonal matrices~\cite{LLAMACKKS}.
%Recently, Park \textit{et al.} \cite{LLAMACKKS} introduced a technique to reduce the impact of outliers on the range sizes by pre-multiplying the weights by orthogonal matrices.
%

In Section \ref{sec:overflow}, we show that such countermeasures do not protect against all range violations, thus leading to catastrophic failures in the CKKS network.

\subsubsection{Neural Network Verification}
\label{sec:background_nnv}
A principled alternative to the empirical estimation of ranges is the use of neural network verification techniques.
For classical (non-polynomial) neural networks, there are numerous tools~\cite{Xu21aCROWN,katz_marabou_2019,henriksen2020efficient} that can solve constraint and optimization problems on the computational graph of a neural network.
For example, such tools can be used on non-polynomial neural networks to compute pre-activation ranges by solving the following optimization problem for all $l\in\left[1,L\right]$:
\begin{align}
    &\min / \max~~ \tilde{\vec{z}}^{(l)}
    \label{eq:nn_optimization}
    \\
    &\text{s.t. }
    \vec{z}^{(0)}\in\inputSpace \text{ and }
    \left\{ 
        \begin{aligned}
            \tilde{\vec{z}}^{(j)} &= W^{(j)} \vec{z}^{(j-1)} + \vec{b}^{(j)}
            \\
            \vec{z}^{(j)} &= \sigma^{(j)}\left( \tilde{\vec{z}}^{(j)}\right)
        \end{aligned}
        \right.
        &&\text{for }j \leq l\;.
        \nonumber
\end{align}
The progress in NN verification scalability is recorded in the yearly competition~\cite{vnncomp2025}.
While this research field has seen significant advances in general,
many state-of-the-art techniques are not readily applicable to (high-degree) polynomial NNs.

\subsubsection{Differential Verification}
\label{sssec:diff-verification}
Standard NN verifiers are also notoriously bad at the verification of bounds on the \emph{difference} between two NN executions~\cite{Paulsen2020Reludiff,Paulsen2020Neurodiff,Teuber2025verydiff}, i.e.\ ${f_1\left(\vec{x}\right)-f_2\left(\vec{x}\right)}$ for distinct NNs $f_1$ and $f_2$.
Hence, specialized solvers~\cite{Paulsen2020Reludiff,Paulsen2020Neurodiff,Teuber2025verydiff} propose \emph{differential verification} propagating bounds on the execution difference \emph{up to layer} $l$ (i.e.\ ${f_1^{(l)}\left(\vec{x}\right)-f_2^{(l)}\left(\vec{x}\right)}$) through the NN.
For example, the difference ${\tilde{\vec{\Delta}}^{(l)} = \tilde{\vec{y}}_1^{(l)} - \tilde{\vec{y}}_2^{(l)}}$ between two pre-activation values can be expressed in terms of the difference ${\vec{\Delta}^{(l-1)} = \vec{y}_1^{(l-1)} - \vec{y}_2^{(l-1)}}$
between the results of the previous layer via the identity
%between the post activation values of the previous layer via the identity
\[
    \tilde{\vec{\Delta}}^{(l)} = \left(W_1^{(l)} - W^{(l)}_2\right) \vec{y}^{(l-1)}_2 + W^{(l)}_1 \vec{\Delta}^{(l-1)} + \left(\vec{b}^{(l)}_1 - \vec{b}^{(l)}_2\right).
\]
The post-activation difference can be similarly rewritten as 
\[
\vec{\Delta}^{(l)} = \sigma^{(l)}_1(\tilde{\vec{y}}_1^{(l)}) - \sigma^{(l)}_2(\tilde{\vec{y}}_2^{(l)})=\sigma^{(l)}_1(\tilde{\vec{y}}_1^{(l)}) - \sigma^{(l)}_2(\tilde{\vec{y}}_1^{(l)} - \tilde{\vec{\Delta}}^{(l)} )\;.
\]
While this allows to efficiently bound the difference between executions of classical NNs, this approach~\cite{Paulsen2020Neurodiff,Teuber2025verydiff} is also not directly applicable to high-degree polynomial NNs.

%$f^{(l)}\mleft(\inputVal\mright)$ rather than $f_\pi^{(l)}\mleft(\inputVal\mright)$

To address both of these limitations, a recent line of work leverages NN verification for sound range estimation and to bound the numerical error between a regular NN $f\mleft(\inputVal\mright)$ and its polynomial variant $f_\pi\mleft(\inputVal\mright)$~\cite{Manino2023fhe,Kern2025fhe}.
LiGAR~\cite{Manino2023fhe} allocates the optimal polynomial approximations for an NN under a given maximal error budget by extending the  optimization problem in Equation~\eqref{eq:nn_optimization} with bounded error variables by redefining $\vec{z}^{(j)}$ as follows:
\begin{align}
\label{eq:nn_delta_optimization}
\vec{z}^{(j)} &= \sigma^{(j)}\left( \tilde{\vec{z}}^{(j)}\right) + \vec{\delta}^{(j)}\;,
\end{align}
where the admissible range of $\vec{\delta}^{(j)}$ is determined via an iterative refinement-cycle.
Importantly, this approach only works with the degree of the best polynomial approximation, but does not actually compute the polynomial coefficients.
Furthermore, it estimates the impact via (possibly conservative) Lipschitz constants.

In contrast, \textsc{ZonoPoly}~\cite{Kern2025fhe} does compute concrete polynomial coefficients and can prove tighter bounds than \textsc{LiGAR}. However, its verification strategy is based on zonotope propagation, which cannot compete with state-of-the-art NN verifiers with respect to computing tight pre-activation ranges. Additionally, both \textsc{LiGAR} and \textsc{ZonoPoly} are only applicable to neural networks $f\mleft(\inputVal\mright)$ with $\relu$ activations, and do not demonstrate whether their bounds translate to a robust CKKS circuit. These are severe limitations, as smoother activations can drastically reduce the approximation error (see Section~\ref{sec:poly-approx-smooth}) and designing a CKKS circuit that can accommodate robust polynomial approximations requires non-trivial work (see Section~\ref{sec:concrete-ckks}).

\section{Overflow Events in CKKS Networks}
\label{sec:overflow}

At an abstract level, private inference over encrypted data consists of computing the \textit{total} function $f_\text{CKKS}(\mathbf{x})=f_\pi(\mathbf{x}) + {\bm e}_\text{CKKS}(\mathbf{x})$, which will always return \textit{some} value. Any issue that may occur during the computation is by definition undetectable until the output is decrypted.

One such issue is that of a \textit{plaintext overflow}~\cite{10.1145/3626232.3653277}. More specifically, consider that the plaintext space of CKKS, i.e. $\mathcal{R}$ %\todo{add the modulus here?} 
is embedded in $\mathcal{R}_{Q_\ell}^2$ for security reasons. This implies that, any coefficient in the plaintext polynomial being larger than $Q_\ell/2$ can corrupt the encoded message. In order to avoid an explosion in the small ciphertext error ${\bm e}_\text{CKKS}(\mathbf{x})$, the infinity norm of any plaintext message must remain smaller than half the modulus $Q_\ell$ in which it is encrypted. Whenever such condition is violated, the plaintext coefficients will wrap around $Q_\ell$, thus corrupting the result:

\begin{definition}[Plaintext Overflow]
\label{def:plaintext_overflow}
    For any ciphertext ${\bm c}$ encrypting ${\bm m} \in \mathcal{R}$, we say that ${\bm c}$ contains a \emph{plaintext overflow} if the underlying ${\bm m}$ is such that $\|{\bm m}\|_\infty \ge Q_\ell/2$, where $Q_\ell$ is the RLWE modulus.
\end{definition}

Plaintext overflows happen when the message to be represented grows too large. Here, we argue that evaluating high-degree polynomial activations may easily trigger such an event. Indeed, a single activation exceeding its approximation interval may suffice to yield values that quickly diverge to $\pm \infty$ (see Figure \ref{subfig:gelu:approx}), thus rapidly exhausting the available modulus $Q_\ell$. We formalise an overflow event as follows:

\begin{definition}[Polynomial Overflow Events]
\label{def:overflow_atk}
\looseness=-1
Let $f_\pi:\inputSpace \to \mathbb{R}^O$
be a polynomial network with expected ranges
$[\mathbf{l}^{(l)},\mathbf{u}^{(l)}]$ for all
layers $l$.
An input $\vec{x} \in \mathcal{I}$ will cause an \emph{overflow event on $f_\pi$} iff there is a layer $l$ where
$f^{(l)}_\pi\left(\vec{x}\right)\not\in [\mathbf{l}^{(l)},\mathbf{u}^{(l)}]$. 
\end{definition}

\looseness=-1
Note that there is no one-to-one correspondence between plaintext and polynomial overflows: the modulus $Q_\ell$ might be large enough to accommodate intermediate layers $f^{(l)}_\pi\left(\vec{x}\right)$ that exceed their expected ranges $[\mathbf{l}^{(l)},\mathbf{u}^{(l)}]$ by a very small amount. However, more significant violations can exhaust the modulus and cause unrecoverable outputs. Indeed, we show how such inputs can easily be constructed in Section \ref{sec:overflow_search} and demonstrate how this issue can occur in practice in Section \ref{sec:design_atk_compare}.

In contrast, we would like to design polynomial networks for which polynomial overflow events can never occur. That way, the plaintext message will always remain within its design constraints from Definition \ref{def:plaintext_overflow}:

\begin{definition}[Overflow Robustness]
\label{def:overflow_rob}
\looseness=-1
    A network $f_\pi$ is \emph{overflow robust} on $\mathcal{I}$ if no $\vec{x} \in \mathcal{I}$ can cause an overflow event on $f_\pi$.
\end{definition}

\noindent We give one such construction in Section \ref{sec:cert_design}.

\subsection{A Note on the Threat Model}

Before introducing our method to find overflow events, let us clarify our assumptions on the threat model. Here, we are concerned with the \textit{possibility} of overflow events. That is, an honest user providing a legitimate input which \textit{accidentally} causes an overflow event. %In Table \ref{tab:eval:overflow-attack}, the reader can observe that overflow can also naturally happen simply by replacing the test set (e.g., the MNIST case).
% E.g. by using as input the USPS dataset over a MNIST trained network, we observed that 5% of the inputs caused overflows)
This can already happen when processing images from a different test set: for instance, training a handwritten digits classifier on the MNIST dataset and evaluating it on USPS  (see Table~\ref{tab:eval:overflow-attack}).
% We remark that our method to find such overflows implies having access to the weights of the network, which far from real-world scenarios. However, we want to stress that the point of finding overflows is not about causing overflow itself, but rather to show that such overflows can occur in realistic inputs (i.e., not  
% 

Below, we additionally provide an auditing technique that searches for benign-looking inputs that cause overflows. While our technique constructs inputs in an adversarial fashion, these inputs could also be supplied in practice by an honest user.
%they are meant to represent legitimate data that an honest user might want to process.

% while executing our method requires access to the networks's weights, it still demonstrates that overflow events can happen even for realistic inputs that may occur in a non-adversarial setting in practice.

\subsection{Finding Overflows}
\label{sec:overflow_search}
%

% regarding Lorenzo's comment for the gamma function:
% it is a combination of
% - adding noise
% - rotating the image
% - small x/y translation of the image
% Thanks :-)))

To uncover overflow events in polynomial networks, we need to find legitimate inputs $\vec{x} \in \mathcal{I}$ that exceed the expected ranges $[\mathbf{l}^{(l)},\mathbf{u}^{(l)}]$. To do so, we leverage classic techniques from adversarial example generation: for any given benign input $\inputVal\in\inputSpace$, we construct a perturbed input $\inputVal'=\perturbedVersion\mleft(\inputVal,\noiseVector\mright)$. Here, the parameter $\noiseVector$ regulates the intensity and direction of local perturbations around $\inputVal$. In our experiments (see Section \ref{sec:design_atk_compare}), $\perturbedVersion$ is a combination of small image rotations, small translations, and additive noise (see, e.g., Figure \ref{fig:mnist-perturbations}).

%For a given input $\inputVal\in\inputSpace$ we generate random noise $\noiseVector$ which we use to produce a perturbed version of the input (denoted $\perturbedVersion\mleft(\inputVal,\noiseVector\mright)$).\todo{LR: if we have space, one line to better define such $\perturbedVersion$?}

We then use projected gradient descent~\cite{DBLP:conf/iclr/MadryMSTV18} to optimize $\noiseVector$ towards inputs $\inputVal'$ that violate the expected ranges of $f_{\pi}$. Since direct optimization on $f_{\pi}$ is numerically unstable, we optimize with respect to $f$ instead. Note that the internal activation values $f^{(l)}\left(\vec{x}\right)$ and $f_\pi^{(l)}\left(\vec{x}\right)$ are tightly related (until the latter violates its bound) and thus can act as a natural surrogate of each other.
More formally, we maximize the distance between the internal activation values and the centre point of their expected ranges $(\mathbf{u}^{(l)}+\mathbf{l}^{(l)})/2$:
%Formally, to attack a network at layer $l$, we minimize $\noiseVector$ w.r.t. the loss function
\begin{equation}
\label{eq:adv_loss}
\sum_j
\left\lVert \left(f^{(l)}\mleft(\perturbedVersion\mleft(\inputVal,\noiseVector\mright)\mright)\right)_j-\left(\mathbf{u}^{(l)}_j+\mathbf{l}^{(l)}_j\right)/2\right\rVert_\infty.
\end{equation}%
%Overflows occur once the distance from the midpoint of layer $i$ exceed $(\mathbf{u}^{(l)}-\mathbf{l}^{(l)})/2$ which makes our loss a natural surrogate for boundary-crossing.
%
As will be shown below, optimising over small perturbations $\noiseVector$ of benign inputs $\inputVal$ generates realistic inputs causing overflows.

\begin{table*}[t]
    \centering
    \caption{Comparison between sampled ranges and certified construction of CKKS networks. While the design-time accuracy is similar, the former are susceptible to overflows. These already happen naturally on MNIST, as soon as we change the test set, but our auditing technique (Found Overflows) reveals a larger percentage of them on both HELOC and CIFAR10.}
    \footnotesize
    \setlength{\tabcolsep}{5.2pt} % Default value: 6pt
    \renewcommand{\arraystretch}{1.25} % Default value: 1
    \begin{tabular}{l l  l r r l r r l r r}
        & & \multicolumn{3}{c}{\makecell{\textbf{Design-Time Accuracy}\\(higher is better)}} & \multicolumn{3}{c}{\makecell{\textbf{Naturally-Occurring Overflows}\\(lower is better)}} & \multicolumn{3}{c}{\makecell{\textbf{Found Overflows Events}\\(lower is better)}}
        \\[1.5mm]\cline{3-11}
        \textbf{Network} & \textbf{Size} & \textbf{Dataset} & \textbf{Sampled} & \textbf{Certified} & \textbf{Dataset} & \textbf{Sampled} & \textbf{Certified} & \textbf{Dataset} & \textbf{Sampled} & \textbf{Certified} \\
        \midrule
        \textbf{HELOC} & $[64,32]$
        & HELOC (1) & \textbf{71.86\%} & \textbf{71.86\%}
        & HELOC (2) & \textbf{0.0\%} & \textbf{0.0\%}
        & HELOC (2) & 0.05\% & \textbf{0.0\%}\\
        \midrule
        \textbf{MNIST} & [256,256,256,256]
        & MNIST & \textbf{97.80\%} & \textbf{97.80\%}
        & USPS & 5.18\% & \textbf{0.0\%}
        & USPS & 12.30\% & \textbf{0.0\%}\\
        \midrule
        \textbf{CIFAR10} & $[900,392,144,256]$
        & CIFAR 10 & \textbf{69.54\%} & 69.50\%
        & CIFAR 10.1 & \textbf{0.0\%} & \textbf{0.0\%}
        & CIFAR 10.1 & \hphantom{0}0.10\% & \textbf{0.0\%}\\
        & $[1.8\mathrm{k},784,288,256]$
        & CIFAR 10 & \textbf{77.28\%} & 75.13\%
        & CIFAR 10.1 & \textbf{0.0\%} & \textbf{0.0\%}
        & CIFAR 10.1 & 46.70\% & \textbf{0.0\%}\\
    \end{tabular}
    \label{tab:eval:overflow-attack}
\end{table*}

%\todo{I've moved the section here now, note we have to talk about certified design now before introducing it, but I think that's OK?}

\begin{figure}[b]
\centering
\begin{subfigure}{0.49\linewidth}
        \includegraphics[width=\linewidth]{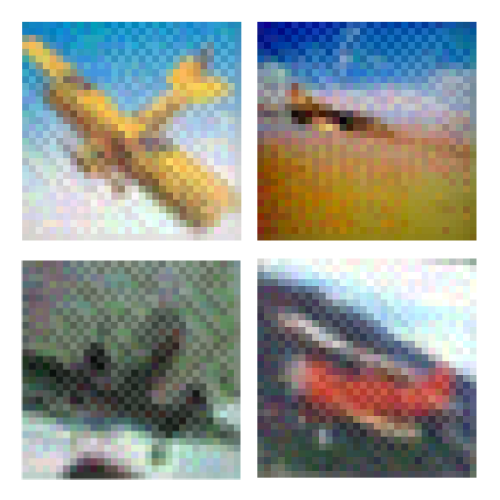}
\end{subfigure}
\hfill
\begin{subfigure}{0.49\linewidth}
        \includegraphics[width=\linewidth]{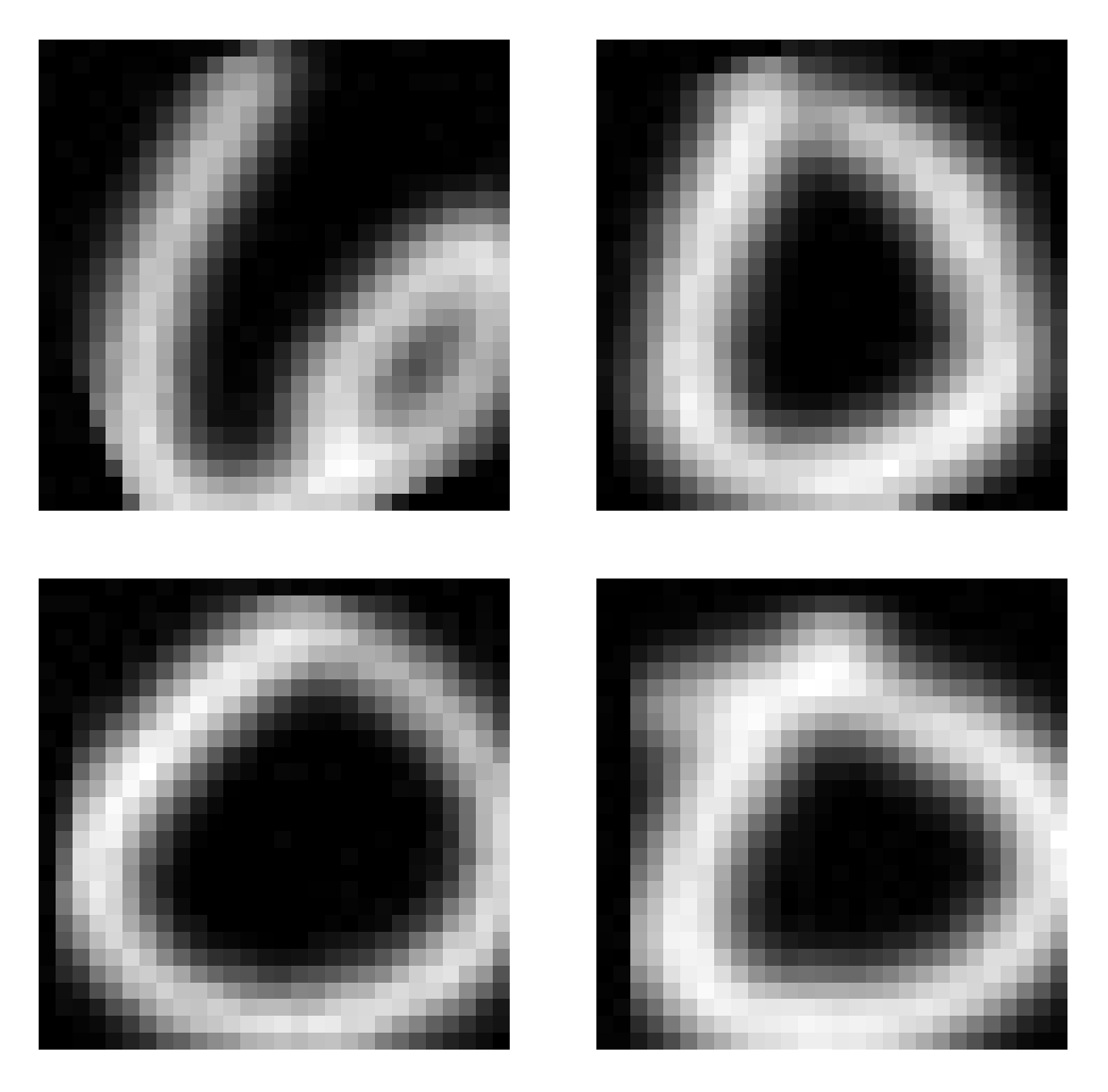}
\end{subfigure}
        \caption{Realistic perturbations of CIFAR and MNIST images causing overflow events}
        \label{fig:mnist-perturbations}
\end{figure}

\subsection{Overflow Events in Sampling-Based Networks}
\label{sec:design_atk_compare}

To demonstrate the vulnerability of sampling-based CKKS network construction to overflow events, we deploy the algorithm described above on classification NNs for tabular data (HELOC~\cite{FICOChallenge}), handwritten digit classification (MNIST~\cite{Lee2022fhe}), and image classification (CIFAR~\cite{krizhevsky2009cifar}).
We compare the vulnerability of a CKKS network constructed via sampled bounds to the vulnerability of an NN constructed with our certified approach described in Section \ref{sec:cert_design}.
In both cases, we use $\mathrm{GeLU}$ activation functions and heterogeneous polynomials.
Table \ref{tab:eval:overflow-attack} provides an overview and demonstrates that our approach presented in Section \ref{sec:cert_design} yields overflow robust networks.
Notably, the overflow-event rate yields an \emph{upper bound} on the possible accuracy of the CKKS circuit:
If $p\; \%$ of samples have an overflow event, then the accuracy of the circuit can be at most $100-p\;\%$ (as the remaining predictions cannot be decoded).

\looseness=-1
For HELOC we generate sampled bounds using the first half of the dataset and use the other half for finding overflows.
We allow perturbations of $15\%$ w.r.t. each feature's range.
To discretize values, our adversarial procedure uses a straight-through-estimator~\cite{DBLP:journals/corr/BengioLC13}.
We find three overflows on the sampled NN.

%\paragraph{MNIST.}
For MNIST we generate sampled bounds using the original MNIST dataset, then we search for overflow-inducing inputs on the USPS dataset~\cite{DBLP:journals/pami/Hull94} (same input configuration and task setting). Note that the USPS dataset already causes overflows in $5.18\%$ of cases without the need of adding any input perturbations.
When we allow rotations of $23^\circ$, translation by 20\% and $L_\infty$-noise of up to $\varepsilon=5/255$, we manage to increase the overflow rate to approx. 12.3\% of cases, while still producing realistic-looking inputs.

%\paragraph{CIFAR.}
For CIFAR, we generate sampled bounds using the original CIFAR dataset.
We search for overflows on the independent CIFAR10.1 dataset~\cite{torralba2008tinyimages,recht2018cifar10.1} (same input configuration and task setting).
We allow rotations of $17^\circ$ and $L_\infty$ noise of up to $\varepsilon=15/255$ (no translation due to non-black background in images).
We find overflows in both a small and a deeper CIFAR NN.
For the latter, we manage to cause an overflow on $46.7\%$ of images (see Figure \ref{fig:mnist-perturbations}).
This implies the CKKS circuit accuracy is below 53.3\% on the evaluated dataset.

\paragraph*{Reproduction in CKKS Circuit and Discussion}
Our adversarial procedure initially only searches for perturbations that push the polynomial NN $f_\pi$ outside its bounds.
As discussed above, this overflow is likely to induce decoding errors for the CKKS circuit.
We validated this assumption on a subset of MNIST inputs and showed that all overflow events on $f_\pi$ lead to arithmetic errors in the CKKS circuit, which underscores the importance of eliminating overflows.
Table \ref{tab:eval:overflow-attack} shows that our certified design introduced in the next section eliminates overflows without major impacts on accuracy.
The subsequent sections will describe a principled approach for the construction of such certified CKKS circuits.

%\todo{Move 7.3 (including Figure 8 and Table 3) to here}

\section{Avoiding Overflows by Certified Design}
\label{sec:cert_design}

Given a neural network $f$, we propose a method to construct robust CKKS circuits in the sense of Definition \ref{def:overflow_rob}, i.e.\ NNs $f_{\pi}$ without overflows,
while simultaneously bounding the output difference $\| f_{\pi}(\vec{x}) - f(\vec{x}) \|_\infty$.
%. Our approach furthermore provides provable bounds on the output difference $\| f_{\pi}(\vec{x}) - f(\vec{x}) \|_\infty$.
Our approach transforms $f$ into its polynomial version $f_{\pi}$ by iteratively rewriting its layers.
For each layer $l$, we first compute provable bounds such that ${f^{(l)}_\pi\left(\vec{x}\right)\in[\vec{l}^{(l)},\vec{u}^{(l)}]}$ for all $\vec{x}\in\mathcal{I}$ (Section \ref{ssec:range-analysis}).
Linear layers then remain unchanged while activation functions are substituted by a polynomial approximation, obtained on the fly for the interval $[\vec{l}^{(l)},\vec{u}^{(l)}]$ using the Remez algorithm~\cite{remez1934}.
The algorithm relies on a bound for the maximum (local) error between the activation function and its approximation (Section \ref{ssec:pw-overapprox}) which is taken into account for the range analysis in subsequent layers.
We additionally leverage differential verification via zonotopes~\cite{Teuber2025verydiff} to compute bounds on the difference $\|f_{\pi}^{(l)}(\vec{x})-f^{(l)}(\vec{x})\|_\infty$ (Section \ref{ssec:diff-verification}).
Furthermore, we provide a direct translation from the generated polynomial network $f_{\pi}$ to a CKKS circuit (Section \ref{sec:cert-nn-to-ckks}).

\subsection{Sound Range Analysis for Polynomial Networks}
\label{ssec:range-analysis}
We construct the polynomial NN $f_{\pi}$ from a given NN $f$ by iteratively substituting a non-linear activation function $\sigma^{(l)}$ in layer $l$ by its polynomial approximation $\pi^{(l)}$.
The latter is computed by the Remez algorithm~\cite{remez1934} which requires a method for computing a sound bound on $| \pi^{(l)}(\mathbf{x}) - \sigma^{(l)}(\mathbf{x}) |$ (see Section~\ref{ssec:pw-overapprox}) and an approximation range $[\vec{l}^{(l)},\vec{u}^{(l)}]$.
%This section presents how we obtain a sound approximation interval from our input space $\mathcal{I}$.
% To construct the polynomial network $f_{\pi}$ from a given neural network $f$,
% we iteratively compute interval bounds $[\vec{l}^{(l-1)},\vec{u}^{(l-1)}]$ w.r.t. the input space $\mathcal{I}$.
% We then substitute a non-linear activation function $\sigma^{(l)}$ in layer $l$ by its polynomial approximation $\pi^{(l)}$ on the range $[\vec{l}^{(l-1)},\vec{u}^{(l-1)}]$.
% We compute the latter using the Remez algorithm~\cite{remez1934}.

%By relying on \emph{verified} bounds for the reachable values of layer $\left(l-1\right)$, we ensure the absence of catastrophic overflows.
%

To compute valid ranges for $f_{\pi}^{(l)}$ (beyond the first polynomial layer) it is insufficient to analyse NN $f$ in isolation, as its output bounds also depend on the polynomial approximation errors introduced in earlier layers.
To this end, we follow the approach by Manino \emph{et al.}~\cite{Manino2023fhe} and compute sound ranges by
optimizing the problem given in Equation (\ref{eq:nn_optimization}) with the modification given in Equation (\ref{eq:nn_delta_optimization}).
However, in contrast to prior work, we bound $\vec{\delta}^{(j)}$ a priori  with the interval $[-\vec{\epsilon}^{(l)},\vec{\epsilon}^{(l)}]$
where $\vec{\epsilon}^{(l)}$ is computed as seen in Section~\ref{ssec:pw-overapprox} (i.e., $| \pi^{(l)}(\vec{x}) - \sigma^{(l)}(\vec{x}) |\leq\vec{\epsilon}^{(l)}$).

\looseness=-1
As the resulting optimization problem only contains regular activations $\sigma^{(l)}$ (instead of polynomials $\pi^{(l)}$), % using only the original activation functions,
we can use state-of-the-art NN verifiers~\cite{Xu21aCROWN} to obtain guaranteed bounds.
This approach allows us to construct $f_{\pi}$ in a single pass over $f$ with the soundness guarantee given in Theorem \ref{thm:overall-soundness} (see proof on page \pageref{proof:prop:cert-construction}).
Below, we describe how errors on the $f_{\pi}$ are bounded locally (Section \ref{ssec:pw-overapprox}) and globally (Section \ref{ssec:diff-verification}).
%Thus, we construct $f_{\pi}$ in a single pass over $f$ while guaranteeing that the resulting network $f_{\pi}$ is robust to overflows :
\begin{theoremE}[Soundness][end, text link={}]
\label{thm:overall-soundness}
For any neural network $f$ with input space $\mathcal{I}$, the construction approach given in Section \ref{ssec:range-analysis} yields an overflow robust network $f_{\pi}$ on $\mathcal{I}$ (see Definition \ref{def:overflow_rob}).
\end{theoremE}
\begin{proofE}
\label{proof:prop:cert-construction}
A neural network $f_{\pi}$ is overflow robust iff for all layers $l$ it holds that its polynomials $\pi^{(l)}$ are designed for an interval $[\vec{l}^{(l)},\vec{u}^{(l)}]$ such that for all $\vec{x}\in\mathcal{I}$ it holds that $f^{(l)}_{\pi}\left(\vec{x}\right) \in [\vec{l}^{(l)},\vec{u}^{(l)}]$.

Note that our approach computes the intervals $[\vec{l}^{(l)},\vec{u}^{(l)}]$ by solving the maximization problem in equation (\ref{eq:nn_optimization}) with the modification given in (\ref{eq:nn_delta_optimization}).
To prove that our approach generates certified neural networks $f_{\pi}$ we must thus prove that at the time of polynomial construction the optimization problem in equations (\ref{eq:nn_optimization}) and (\ref{eq:nn_delta_optimization}) correctly over-approximates the behaviour of $f^{(l)}_{\pi}$.
Before we do so below, observe that this property suffices to guarantee certified network construction:
Every time an activation function is substituted by its polynomial approximation, we then compute valid bounds for $f^{(l)}_{\pi}$ using equations (\ref{eq:nn_optimization}) and (\ref{eq:nn_delta_optimization}) and construct an appropriate polynomial using the Remez algorithm for the sound range estimate.

Formally, proving that equations (\ref{eq:nn_optimization}) and (\ref{eq:nn_delta_optimization}) correctly over-approximates the behaviour of $f^{(l)}_{\pi}$ amounts to showing that for any $\vec{x}\in\mathcal{I}$ and any $1 \leq l \leq L$ it holds that there exist assignments to the optimization problem such that $\tilde{\vec{y}}^{(l)}=f^{(l)}_{\pi}\left(\vec{x}\right)$.

We prove the latter inductively:
First, observe that for $l=1$ the equations \emph{exactly} represent the behaviour of $f^{(1)}_{\pi}$.
As $f^{(1)}_{\pi}=f^{(1)}$, $\tilde{\vec{y}}^{(l)}$ and $f^{(l)}_{\pi}\left(\vec{x}\right)$ naturally coincide for all $\vec{x}\in\mathcal{I}$.

For our induction step ($l+1$), we assume that we have shown the over-approximation property for layer $l$.
Due to this, we know that the Remez algorithm computes a polynomial approximation $\pi^{(l)}$ for $\sigma^{(l)}$ on the sound range $[\vec{l}^{(l)},\vec{u}^{(l)}]$ for which we know that for all $\vec{x}\in\mathcal{I}$ we have $f^{(l)}\left(\vec{x}\right)\in[\vec{l}^{(l)},\vec{u}^{(l)}]$.
Since we assume correct local bound computation (see Section \ref{ssec:pw-overapprox} for detailed approach), we also know that $\left| \pi^{(l)}\left(f^{(l)}\left(\vec{x}\right)\right)-\sigma^{(l)}\left(f^{(l)}\left(\vec{x}\right)\right)\right|\leq\vec{\epsilon}^{(l)}$ for any $\vec{x}\in\mathcal{I}$.

For any $\vec{z}^{(0)}=\vec{x}\in\mathcal{I}$, we now consider some concrete assignment of the variables in equations (\ref{eq:nn_optimization}) and (\ref{eq:nn_delta_optimization}) such that $\tilde{\vec{z}}^{(l)} = f^{(l)}\left(\vec{x}\right)$ (exists due to induction hypothesis).
We can then set $\vec{\delta}^{(l)}$ to
$\pi^{(l)}\left(\tilde{\vec{z}}^{(l)}\right)-\sigma^{(l)}\left(\tilde{\vec{z}}^{(l)}\right) = \pi^{(l)}\left(f^{(l)}\left(\vec{x}\right)\right)-\sigma^{(l)}\left(f^{(l)}\left(\vec{x}\right)\right)$
which we know to satisfy the constraint $\vec{\delta}^{(l)}\in[-\vec{\epsilon}^{(l)},\vec{\epsilon}^{(l)}]$.
Consequently, we can assign $\vec{z}^{(l)}$ to $\pi^{(l)}\left(\tilde{\vec{z}}^{(l)}\right)$ (as $\sigma^{(l)}\left(\tilde{\vec{z}}^{(l)}\right)+\left(\pi^{(l)}\left(\tilde{\vec{z}}^{(l)}\right)-\sigma^{(l)}\left(\tilde{\vec{z}}^{(l)}\right)\right)=\pi^{(l)}\left(\tilde{\vec{z}}^{(l)}\right)$).
Finally, we assign $\tilde{\vec{z}}^{(l+1)}$ to $W^{(l+1)} \vec{z}^{(l)} + \vec{b}^{(l+1)}$.
First, note that both of these assignments satisfy all constraints of equations (\ref{eq:nn_optimization}) and (\ref{eq:nn_delta_optimization}).
Second, observe that by definition of $f_{\pi}$, if $\tilde{\vec{z}}^{(l)} = f^{(l)}\left(\vec{x}\right)$ then now it also holds that $\tilde{\vec{z}}^{(l+1)} = f^{(l+1)}\left(\vec{x}\right)$.
This concludes our induction step of the over-approximation property.
\end{proofE}

\subsection{Piecewise-Polynomial Overapproximation of Activation Functions}
\label{ssec:pw-overapprox}
\begin{figure*}[ht]
    \begin{minipage}[b]{0.32\linewidth}
        \centering 
        \includegraphics[width=\linewidth]{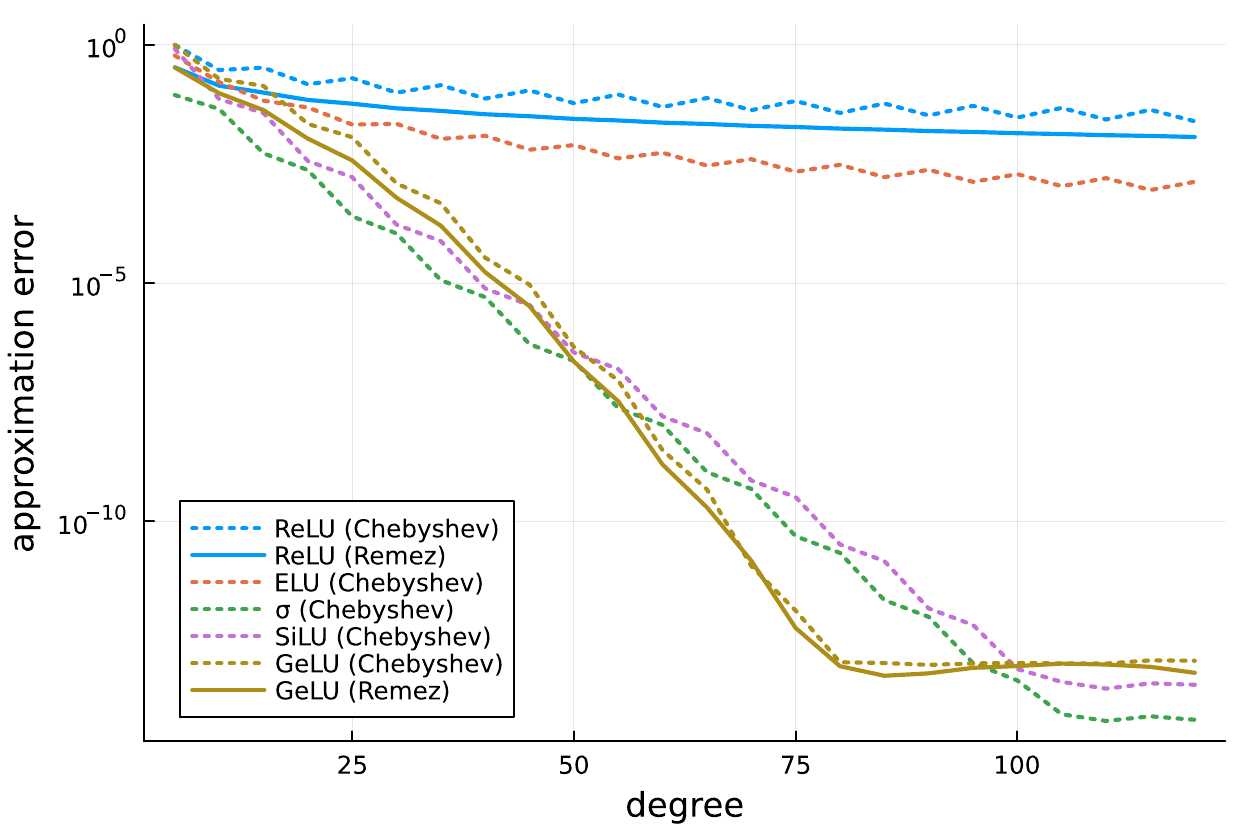}
        \captionof{figure}{Polynomial approximation error for functions over ${x \in [-1, 1]}$: Smooth functions converge faster.}
        \label{fig:convergence}
    \end{minipage}\hfill
    \begin{minipage}[b]{0.32\linewidth}
        \centering
        \includegraphics[width=0.9\linewidth]{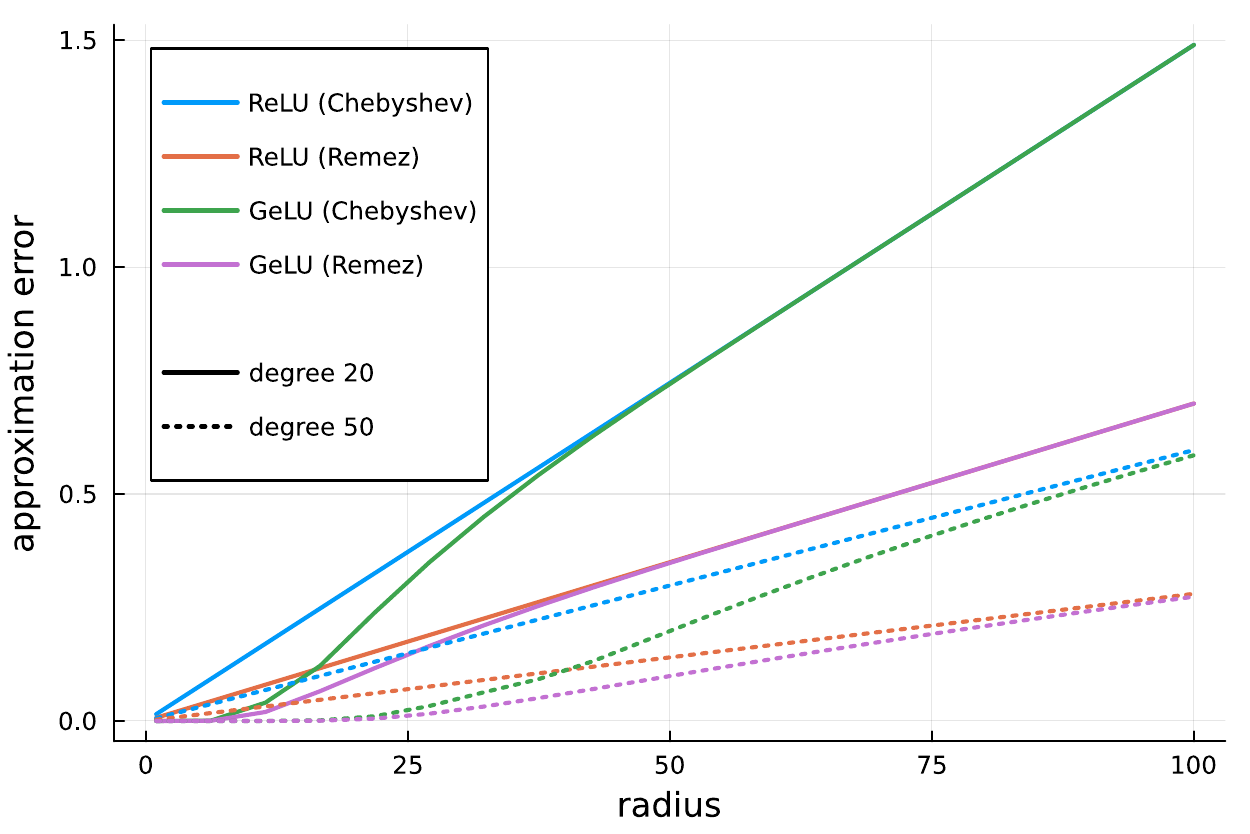}
        \captionof{figure}{Polynomial approximation error for $\mathrm{ReLU}$ and $\gelu$ for different intervals of size $[-r, r]$.}
        \label{fig:convergence-radius}
    \end{minipage}\hfill 
    \begin{minipage}[b]{0.32\linewidth}
        \centering
        \includegraphics[width=0.9\linewidth]{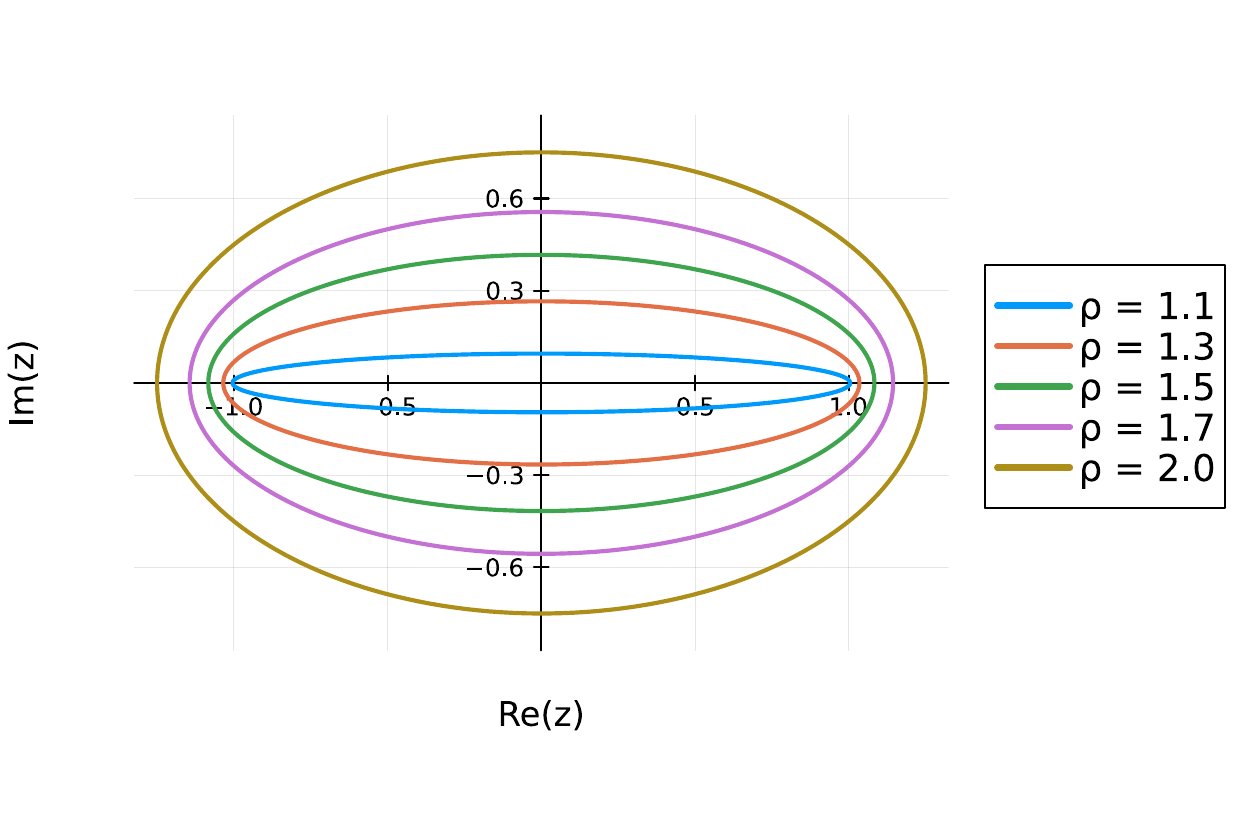}
        \captionof{figure}{Boundary of Bernstein ellipses $E_\rho$ for different values of parameter $\rho$.\\} %newline necessary to make all subcaptions 3 lines and align the images vertically
        \label{fig:bernstein-ellipse}
    \end{minipage}
    %\caption{Smoothness of activation functions affects the polynomial approximation error.}
\end{figure*}

%\todo[inline]{Sec Title: something more general about \textit{all} activations, not just GELU}
%\todo[inline]{Text: present ReLU briefly (cite literature). Explain why smooth functions are trickier, than use GELU as example, and claim novelty of our approach.}
Computing bounds on the univariate approximation error $\pi^{(l)}(x) - \sigma^{(l)}(x)$ is indispensable for our approach and for the Remez algorithm~\cite{remez1934} to fit high-quality polynomial approximations.
%.
%Similarly, computing the extrema of $\pi(x) - \sigma(x)$ is required to fit high-quality polynomial approximations using the Remez algorithm~\cite{remez1934}.
%
Prior work on verified approximation of $\relu(x) = \max(0, x)$ uses its piecewise linearity~\cite{Kern2025fhe}.
This allows bounding the approximation error $\pi(x) - \relu(x)$ by finding the extrema of the polynomials $\pi(x) - 0$ for $x \leq 0$ and $\pi(x) - x$ for $x \geq 0$.
Finding the extrema of a polynomial $\pi(x)$ in Chebyshev basis reduces to finding the eigenvalues\footnotemark of the Colleague matrix associated with its derivative $\frac{d}{dx}\pi(x)$~\cite{trefethen2012atap}.
\footnotetext{This works assumes the correctness of numerically computed eigenvalues.}
%\todo{place assumption on correct eigenvalues here!}.
%
\newcommand{\ourTargetFunctions}{LAE}%
\newcommand{\ourTargetFunctionsLong}{linear almost everywhere}%
\newcommand{\OURTARGETFUNCTIONS}{Linear Almost Everywhere}%
\newcommand{\ourMonotoneTargetFunctions}{MLAE}%
We extend this approach to activation functions that are \emph{monotonically \ourTargetFunctionsLong{}}:
\begin{definition}[Monotonically \OURTARGETFUNCTIONS{}]
\label{def:mlae_funcitons}
A function $g:\mathbb{R}\to\mathbb{R}$ is \emph{monotonically \ourTargetFunctionsLong{}} iff it has bounded total variation, there are $a_1,a_2,b_1,b_2\in\mathbb{R}$ with
$\lim_{x \to \infty} \left(g\left(x\right) - \left(a_1 x + b_1\right)\right) = \lim_{x \to -\infty} \left(g\left(x\right) - \left(a_2 x + b_2\right)\right) = 0$ and there are $c_1,c_2\in\mathbb{R}$ such that
$g\left(x\right)-\left(a_1 x + b_1\right)$ (resp. $g\left(x\right)-\left(a_2 x + b_2\right)$)
is monotonic on $\left(-\infty,c_1\right]$ (resp. $\left[c_2,\infty\right)$).
\end{definition}

\begin{textAtEnd}
%
%Our approach can be applied to networks with any activation function that is \emph{\ourTargetFunctionsLong}:\todo{merge LAE and MLAE and move to 4.2}
If a function satisfies all requirements of Definition~\ref{def:mlae_funcitons} except for the monotonicity requirement, we call it linear almost everywhere (LAE).
We can then derive the following result which enables us to approximate its outermost regions with the linear functions representing the asymptotes:
% In principle, our approach is applicable to functions which are \ourTargetFunctionsLong{}
% \begin{definition}[\OURTARGETFUNCTIONS]
% A function $g:\mathbb{R}\to\mathbb{R}$ is \emph{\ourTargetFunctionsLong} (\ourTargetFunctions{}) if it has bounded total variation (see Section \ref{ssec:total-variation}) and there exist $a_1,a_2,b_1,b_2\in\mathbb{R}$ such that
% $\lim_{x \to \infty} \left(g\left(x\right) - \left(a_1 x + b_1\right)\right) = \lim_{x \to -\infty} \left(g\left(x\right) - \left(a_2 x + b_2\right)\right) = 0$.
% % \begin{align*}
% %     \lim_{x \to \infty} g\left(x\right) - \left(a_1 x + b_1\right) &= 0 &
% %     \lim_{x \to -\infty} g\left(x\right) - \left(a_2 x + b_2\right) &= 0
% % \end{align*}
% \end{definition}
% If a function is \ourTargetFunctions{}, this enables us to approximate its outermost regions with the linear functions representing the asymptotes (see Proposition \ref{prop:method:arbit_lin_approx}).
% In this work, we focus on \emph{monotonically \ourTargetFunctions{} functions} which enables us to easily identify the region where $g$ is not adequately represented by its asymptotes.
% %Approximating \ourTargetFunctions{} functions in practice requires us to identify a region in which $g\left(x\right)$ is not adequately represented by its asymptotes.
% %To achieve this, our work focuses on \emph{monotonically \ourTargetFunctions{} functions}:
% %
% Functions which are \ourTargetFunctions{} admit linear approximation with arbitrary precision $\epsilon^*$ for the intervals approaching infinity:
\begin{proposition}[Arbitrarily Precise Linear Approximation]
\label{prop:method:arbit_lin_approx}
If a function is \ourTargetFunctions{}, then for any $\epsilon^*$ there exists a $\delta < 0$ (resp. $\delta > 0$) such that $\left|g\left(x\right)-\left(a_i x + b_i\right)\right| \leq \epsilon^*$ for all $x \in \left(-\infty,\delta\right]$ for $i=1$ (resp. for all $x \in \left[\delta,\infty\right)$ for $i=2$).
\end{proposition}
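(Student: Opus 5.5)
The argument is a direct unpacking of the limit conditions in the definition of an \ourTargetFunctions{} function; neither bounded total variation nor monotonicity is needed. Fix $\epsilon^*>0$ and treat the two tails separately.

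First, a remark on indexing. In Definition~\ref{def:mlae_funcitons} the pair $(a_1,b_1)$ is the asymptote at $+\infty$ and $(a_2,b_2)$ is the asymptote at $-\infty$. I will therefore prove the statement in the form that matches each tail with its own asymptote. For the left tail $(-\infty,\delta]$ with $\delta<0$, the relevant line is the asymptote at $-\infty$. For the right tail $[\delta,\infty)$ with $\delta>0$, the relevant line is the asymptote at $+\infty$. The labels $i=1,2$ in the proposition should be read accordingly.

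For the right tail, set $h_1(x) = g(x) - (a_1x+b_1)$. By assumption $\lim_{x\to\infty} h_1(x) = 0$. By the $\epsilon$--$M$ definition of a limit at infinity, applied with tolerance $\epsilon^*$, there is some $M\in\mathbb{R}$ such that $|h_1(x)|\le\epsilon^*$ for all $x\ge M$. Now take $\delta := \max(M,1)$. Then $\delta>0$, and because $[\delta,\infty)\subseteq[M,\infty)$ the bound holds on the whole tail.

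The left tail is symmetric. Set $h_2(x) = g(x) - (a_2x+b_2)$. Since $\lim_{x\to-\infty} h_2(x)=0$, there is some $M'$ with $|h_2(x)|\le\epsilon^*$ for all $x\le M'$. Take $\delta:=\min(M',-1)$, which is negative, and again $(-\infty,\delta]\subseteq(-\infty,M']$.

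There is no substantive obstacle here. The only points needing care are the index convention just discussed and the clipping of $\delta$ to obtain the required sign. The proposition would only become non-trivial if one wanted to compute $\delta$ explicitly for a given $\epsilon^*$. In that case I would use the monotonicity of $h_i$ on the tail from Definition~\ref{def:mlae_funcitons}: since $h_i$ is monotone there and tends to $0$, $|h_i|$ is monotone non-increasing towards infinity. It then suffices to check $|h_i(\delta)|\le\epsilon^*$ at the single point $\delta$, chosen beyond $c_1$ or $c_2$, for example by bisection. That refinement is not required by the statement itself.
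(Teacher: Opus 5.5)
Your proof is correct and takes the same route as the paper, which instantiates the $\epsilon$--$c$ definition of the limit at infinity and sets $\delta := c$. Your version is more careful than the paper's one-line argument in four respects. You treat both tails rather than only $x\to+\infty$. You clip $\delta$ to obtain the required sign. You get the non-strict bound on the closed interval, whereas the paper's formula only covers $x>c$. And you correctly flag that the proposition pairs $i=1$ with the left tail, while Definition~\ref{def:mlae_funcitons} makes $(a_1,b_1)$ the asymptote at $+\infty$.
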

\begin{proof}
This result follows directly from the definition of function limits which states $g\left(x\right)-a_i x + b_i$ has limit $0$ iff the following formula is valid:
\[
\forall \epsilon > 0\;\exists c > 0\;\forall x\enspace
x > c \rightarrow \left|\left(g\left(x\right)-a_i x + b_i\right) - 0 \right| < \epsilon.
\]
For any given $\epsilon^*$ we thus choose $\delta$ as the $c$ from the formula above.
\end{proof}

% \begin{table}[]
%     \centering
%     \begin{tabular}{l|l l|l l|l}
%          Activation Function & $a_1$ & $b_1$ & $a_2$ & $b_2$ & \ourMonotoneTargetFunctions{}\\\midrule
%          Sigmoid~\cite[3.2]{kunc2024decades} &
%          $0$ & $0$ &
%          $0$ & $1$ &
%          yes\\
%          Tanh~\cite[3.2]{kunc2024decades} &
%          $0$ & $-1$ &
%          $0$ & $1$ &
%          yes\\\midrule
%          ReLU~\cite[3.6]{kunc2024decades} &
%          $0$ & $0$ &
%          $1$ & $0$ &
%          yes\\
%          Leaky ReLU~\cite[3.6.2]{kunc2024decades} &
%          $1/\alpha$ & $0$ &
%          $1$ & $0$ &
%          yes\\\midrule
%          GeLU~\cite[3.3.1]{kunc2024decades} & 
%          $1$ & $0$ &
%          $0$ & $0$ &
%          yes \\
%          SiLU~\cite[3.3]{kunc2024decades} & 
%          $1$ & $0$ &
%          $0$ & $0$ &
%          yes \\
%          ELU~\cite[3.6.48]{kunc2024decades} & 
%          $1$ & $0$ &
%          $0$ & $-1$ &
%          yes \\
%     \end{tabular}
%     \caption{Overview on \ourTargetFunctionsLong{}}
%     \label{tab:MLAE-activation functions}
% \end{table}

\begin{lemma}[GeLU is MLAE]\label{lemma:gelu-mlae}
    The function $\gelu\left(x\right)=x\cdot\mathbf{\Phi}\left(x\right)$ (for $\mathbf{\Phi}\left(x\right)$ being the gaussian error function) is monotonically linear almost everywhere.
\end{lemma}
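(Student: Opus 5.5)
The plan is to verify the three requirements of Definition~\ref{def:mlae_funcitons} directly, writing $\mathbf{\Phi}$ for the standard normal CDF and $\varphi=\mathbf{\Phi}'$ for its density. The only analytic tool needed is the standard Mills-ratio bound $1-\mathbf{\Phi}(x) < \varphi(x)/x$ for $x>0$. By symmetry this also gives $\mathbf{\Phi}(x) < \varphi(x)/|x|$ for $x<0$. It follows from $1-\mathbf{\Phi}(x)=\int_x^\infty \varphi(t)\,dt < \int_x^\infty \frac{t}{x}\varphi(t)\,dt = \varphi(x)/x$, using $\varphi'(t)=-t\varphi(t)$.

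\emph{Bounded variation.} $\gelu$ is $C^1$ with derivative $\mathbf{\Phi}(x)+x\varphi(x)$, which is bounded on $\mathbb{R}$ because $|x\varphi(x)|\le \varphi(1)$. Hence $\gelu$ is Lipschitz and has bounded total variation on every bounded interval. Since $\gelu$ grows linearly, I read the total-variation requirement in this local sense. This is also the sense in which it is used, since approximation only happens on the compact ranges $[\vec{l}^{(l)},\vec{u}^{(l)}]$. Equivalently, the residuals $\gelu(x)-(a_ix+b_i)$ computed below have bounded variation on their respective half-lines.

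\emph{Asymptotes.} For $x\to\infty$ I take $a=1,b=0$. The residual is $\gelu(x)-x=-x(1-\mathbf{\Phi}(x))$, and by the Mills bound $|x(1-\mathbf{\Phi}(x))|<\varphi(x)\to 0$. For $x\to-\infty$ I take $a=b=0$. The residual is $x\mathbf{\Phi}(x)$, and again $|x\mathbf{\Phi}(x)|<\varphi(x)\to0$.

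\emph{Monotonicity of the tail residuals.} Let $h_+(x)=\gelu(x)-x$. Then $h_+'(x)=x\varphi(x)-(1-\mathbf{\Phi}(x))$, and the Mills bound gives $h_+'(x)>\varphi(x)(x^2-1)/x>0$ for $x>1$. So $h_+$ is increasing on $[1,\infty)$. Let $h_-(x)=x\mathbf{\Phi}(x)$. Then $h_-'(x)=\mathbf{\Phi}(x)+x\varphi(x)$, and for $x<-1$ we get $h_-'(x)<\varphi(x)/|x|-|x|\varphi(x)=\varphi(x)(1-x^2)/|x|<0$. So $h_-$ is decreasing on $(-\infty,-1]$. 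Thus the constants $c=\pm1$ work, with each residual matched to the tail of its own asymptote; the indexing of $a_1,a_2$ in Definition~\ref{def:mlae_funcitons} is chosen to fit this matching.

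The main obstacle is the monotonicity step, because the sign of the residual derivative is not obvious. It is settled cleanly by the Mills-ratio inequality, which is therefore the one estimate to establish carefully. The remaining work is interpreting the bounded-variation clause consistently for a function that grows linearly.
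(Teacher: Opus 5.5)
Your proof is correct, and for the two asymptotes it is the paper's argument. Both derive the Mills-ratio bound $1-\mathbf{\Phi}(x)\le\varphi(x)/x$ by inserting $t/x\ge 1$ into the tail integral, and both then squeeze $x\mathbf{\Phi}(x)$ and $\gelu(x)-x$ between $-\varphi$ and $0$.

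The monotonicity step is where you take a different route. The paper lets $x^*<0$ be the minimiser of $\gelu$ and takes for granted that $\gelu$ is decreasing on $(-\infty,x^*]$. It then transfers this to the right tail through the reflection identity $\gelu(-x)=\gelu(x)-x$. This halves the work and gives the sharp thresholds $x^*\approx-0.75$ and $-x^*$. However, the claim about $x^*$ is not proved there; it needs $\gelu''(x)=\varphi(x)(2-x^2)$ together with $\gelu'(x)\to0$ as $x\to-\infty$.

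You instead sign the derivatives of both residuals directly with the same Mills bound. The whole proof then rests on a single inequality and is self-contained. The price is the non-sharp constants $c=\pm1$, which does not matter here because the cut points $d_0,d_n$ used later lie far out in the tails. Your direct computation also gets the direction right: $\gelu(x)-x$ is increasing on the right tail, whereas the paper's text says ``decreasing''. That slip is harmless, since only monotonicity is required.

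Finally, you address the bounded-variation clause, which the paper's proof does not mention at all. Your local reading is the appropriate one, since $\gelu$ has infinite total variation on $\mathbb{R}$.
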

\begin{proof}
    We know that $0 \leq \mathbf{\Phi}\left(x\right) \leq 1$ and therefore $0 \leq 1-\mathbf{\Phi}\left(x\right) \leq 1$.

    Additionally, it is easily derivable that:
    \begin{align*}
    1-\mathbf{\Phi}\left(x\right) &= \int_x^\infty \phi\left(u\right) du = \int_x^\infty \frac{u}{u}\phi\left(u\right) du\\
    &\leq \int_x^\infty \frac{u}{x}\phi\left(u\right) du =
    \frac{1}{x}\int_x^\infty u\phi\left(u\right) du
    \end{align*}
    Since $\phi'(x)=-x\cdot\phi(x)$, the last term can be simplified to $\frac{1}{x}\left[-\phi(u)\right]_x^\infty$ which yields $\frac{\phi\left(x\right)}{x}$ as $\phi$'s limit is $0$ for $x\to\pm\infty$.

    We then know that for $x<0$ it holds that $\left|\gelu\left(x\right)\right|=\left|x\cdot\mathbf{\Phi}\left(x\right)\right|=\left(-x\right)\cdot(1-\mathbf{\Phi}\left(-x\right))\leq \left(-x\right)\frac{\phi\left(-x\right)}{-x}=\phi\left(-x\right) \to 0$ (for $x\to -\infty$).
    Further, since $\gelu\left(x\right)<0$ for $x<0$, this yields the limit for $-\infty$ via squeezing.

    We can also show that $\gelu\left(x\right)-x = x\left(\mathbf{\Phi}\left(x\right)-1\right)=-x\left(1-\mathbf{\Phi}\left(x\right)\right)
    \geq (-x)\frac{\phi\left(x\right)}{x} = -\phi\left(x\right)
    $.
    Since $-x(1-\mathbf{\Phi}\left(x\right)) \leq 0$ for $x \geq 0$ and $-\phi\left(x\right) \to 0$ (for $x \to \infty$) this yields that $\lim_{x\to\infty} \gelu\left(x\right)-x=0$ via squeezing.

Also, let $0 > x^* = \arg\min_{x \in \R} \gelu(x)$, then for $x \in (-\infty, x^*]$, $\gelu(x)$ is monotonically decreasing.

Additionally, we have
\begin{align*}
    \gelu(-x) &= (-x) \cdot \Phi(-x) \\
    &= (-x) \cdot (1 - \Phi(x)) = \gelu(x) - x \;.
\end{align*}
Therefore, $\gelu(x) \leq x$ for $x \in [0, \infty)$ and $\gelu(x) - x$ is monotonically decreasing for $x \in [-x^*, \infty)$.
\end{proof}

%\todo{Conjecture: Any S-shaped activation function?}
\end{textAtEnd}%

\noindent
Many activation functions (e.g. GELU, ELU or Sigmoid) satisfy this property (for $\gelu$, we give a proof in Lemma~\ref{lemma:gelu-mlae} in the Appendix).
We approximate functions of this type via piecewise polynomial approximations of the form
\begin{align*}
q(x)=&\mathds{1}_{x \leq d_0}(x)\cdot(a_1 x + b_1) + \mathds{1}_{x > d_n}(x)\cdot (a_2 x + b_2) +\\
&\sum_{i=1}^{n} \mathds{1}_{x \in \left(d_{i-1}, d_i\right]}(x) \cdot q_i\left(x\right)
\end{align*}
%
%In order to extend this approach to other activation functions, we propose to compute piecewise polynomial approximations 
% \begin{align}
%     q(x) = \begin{cases}
%         a_1 x + b_1 &, x \in (-\infty, d_0] \\
%         q_i(x) &, x \in (d_{i-1}, d_i], ~1 \leq i \leq n \\
%         a_2 x + b_2 &, x \in [d_n, \infty)
%     \end{cases}
% \end{align}
with known approximation error $\epsilon_q \geq | q(x) - \sigma(x) |$ for all $x \in \R$ (with %, where $a_j x + b_j$ are the asymptotes and
$d_0 \leq c_1, d_n \geq c_2$, and the indicator function $\mathds{1}_{p(x)}(x)\in\{0,1\}$ being $1$ for $x$ iff $p(x)$).
%relate to the points after which $\sigma$ monotonically converges to these asymptotes. 
%
We can then use $q(x)$ as a surrogate in the Remez algorithm to fit a polynomial $\pi^{(l)}(x) \approx q(x) \approx \sigma^{(l)}(x)$ and bound its approximation error
\begin{align}
    | \pi(x) - \sigma(x) | \leq | \pi(x) - q(x) | + | q(x) - \sigma(x) | \leq \epsilon_\pi + \epsilon_q \;,
\end{align}
where finding $\epsilon_\pi$ reduces to $n+1$ computations of polynomial extrema and $\epsilon_q$ is a known constant.

We can always construct such a function $q(x)$ given a target tolerance $\epsilon_q$.
To this end, we first find points $d_0 \leq c_1, d_n\geq c_2$ with $| a_i x + b_i - \sigma(x) | \leq \epsilon_q$ via binary search.
%
%We use monotonic convergence towards the asymptotes to find points $d_0, d_n$ with $| a_i x + b_i - \sigma(x) | \leq \epsilon_q$ via binary search.
Then, we use Chebyshev approximation to find polynomials $q_i(x) \approx \sigma(x)$ over partitions of $[d_0, d_n]$.
Given concrete $q_i(x)$, we use interval arithmetic~\cite{Moore1966Interval} to check if $| q_i(x) - \sigma(x) | \leq \epsilon_q$ over the interval of interest.
If the target tolerance is violated, we increase the number of sub-intervals.
While this last step is expensive, it only has to be computed \emph{once} for each activation function.

\subsection{Differential Bounds via Differential Verification}
\label{ssec:diff-verification}

%\todo{Text: find ways to reduce it a bit}
Construction of $f_\pi$ according to Section \ref{ssec:range-analysis} guarantees robustness, but we also desire bounds on the deviation $\| f_\pi(\vec{x}) - f(\vec{x}) \|_\infty$ between the polynomial and the original NN. %
Prior work shows that standard verifiers struggle to tightly bound the difference between two NNs~\cite{Paulsen2020Reludiff,Paulsen2020Neurodiff,Teuber2025verydiff}.
%\todo{What are $\net_1$ and $\net_2$?} -- Samuel: Removed
However, when NNs have the same architecture and similar weights, approaches for differential verification~\cite{Paulsen2020Reludiff,Paulsen2020Neurodiff,Banerjee2024Raven,Teuber2025verydiff} produce tighter bounds by
propagating local bounds on the difference through the NN (see Section \ref{sssec:diff-verification}).
%maintaining information about intermediate neuron-wise differences between NNs.

Differential verification %approaches 
leverages linear over-approximations for activation functions.
%are also based on linear overapproximation of the networks' activation functions \todo{should we introduce that earlier somewhere?}.
While existing work provides relaxations for the bivariate non-linearity $\sigma_1^{(l)}(x) - \sigma_2^{(l)}(x - \Delta)$ for $\sigma_1^{(l)} = \sigma_2^{(l)}$ with bounded derivative~\cite{Banerjee2024Raven}, it cannot handle polynomial $\sigma_1^{(l)}$ for $\sigma_2^{(l)}\neq\relu$.
Based on the relaxation by Banerjee \textit{et al.}~\cite{Banerjee2024Raven}, we extend the differential verification tool \textsc{VeryDiff}~\cite{Teuber2025verydiff} with parallel linear relaxations 
%for this case (proof see page \pageref{proof:p-sigma-relaxation}):
for the case where one NN contains an activation function with bounded derivative and the other contains a polynomial activation (proof see page \pageref{proof:p-sigma-relaxation}):

\begin{theoremE}[$p$-$\sigma$ Relaxation][end, text link={}]\label{thm:p-act-relax}
    Given bounds $x \in [l_x, u_x], y \in [l_y, u_y], \Delta \in [l_\Delta, u_\Delta]$, activation function $\sigma(x)$ with derivative bounds $\partial_l \leq \frac{d}{dx} \sigma(x) \leq \partial_u$ for all $x \in [\min(l_x, l_y), \max(u_x, u_y)]$ and polynomial $p(x)$ with $\epsilon = \max_{x \in [l_x, u_x]} | p(x) - \sigma(x)|$, the activation difference is bounded by
    \begin{align*}
        \alpha \cdot \Delta + \beta_l - \epsilon \leq p(x) - \sigma(x - \Delta) \leq \alpha \cdot \Delta + \beta_u + \epsilon \;,
    \end{align*}
    where $\alpha = (\alpha_l + \alpha_u)/2$ and
    \begin{align*}
        %\alpha &= \frac{1}{2} (\alpha_l + \alpha_u), &
        \beta_l &= 
        \begin{cases}
            \lambda_l l_\Delta + \hat{\beta}_l &\!\!\!\!, \lambda_l \geq 0\\
            \lambda_l u_\Delta + \hat{\beta}_l &\!\!\!\!, \lambda_l < 0
        \end{cases}, &\!\!\!\!
        \beta_u &= \begin{cases}
            \lambda_u l_\Delta + \hat{\beta}_u &\!\!\!\!,\lambda_u \leq 0\\
            \lambda_u u_\Delta + \hat{\beta}_u &\!\!\!\!,\lambda_u > 0 \;,
        \end{cases}
    \end{align*}
    The $\alpha_l, \alpha_u, \hat{\beta}_l, \hat{\beta}_u$ are the parameters of the non-parallel linear relaxations given by \textsc{RaVeN}~\cite{Banerjee2024Raven} as
    \begin{align*}
        (\alpha_l, \alpha_u) &= \begin{cases}
            (\partial_l, \partial_u) &, l_\Delta \geq 0\\
            (\partial_u, \partial_l) &, u_\Delta \leq 0\\
            \left(\frac{\partial_l u_\Delta - \partial_u l_\Delta}{u_\Delta - l_\Delta}, 
                  \frac{\partial_u u_\Delta - \partial_l l_\Delta}{u_\Delta - l_\Delta}\right), & \text{otw.}
        \end{cases},\\
        % \alpha_l &= \begin{cases}
        %     \partial_l &, l_\Delta \geq 0\\
        %     \partial_u &, u_\Delta \leq 0\\
        %     \frac{\partial_l u_\Delta - \partial_u l_\Delta}{u_\Delta - l_\Delta}, & \text{otw.}
        % \end{cases}, &
        % \alpha_u &= \begin{cases}
        %     \partial_u, & l_\Delta \geq 0\\
        %     \partial_l, & u_\Delta \leq 0\\
        %     \frac{\partial_u u_\Delta - \partial_l l_\Delta}{u_\Delta - l_\Delta}, & \text{otw.}
        % \end{cases},\\
        \lambda_l &= (\alpha-\alpha_l)\;,
        %&
        \quad
        \lambda_u = (\alpha-\alpha_u)\;,\\
        \hat{\beta}_l &= -\hat{\beta}_u = \frac{(\partial_u - \partial_l) l_\Delta u_\Delta}{u_\Delta - l_\Delta}\;.
        % \begin{array}{ll}
        %      \\
        %      \lambda_l &= (\alpha-\alpha_l),\\
        %      \lambda_u &= (\alpha-\alpha_u)\;.
        % \end{array}
    \end{align*}
\end{theoremE}
%\todo{LR: Is $\lambda_u = (\alpha_u - \alpha_u)$ a typo?} -- Samuel: Yes, fixed now (alpha-alpha_u)
\begin{textAtEnd}
    Soundness of our linear relaxation depends on the relaxation for activation functions with bounded derivative given by \textsc{RaVeN}~\cite[Appendix G.2]{Banerjee2024Raven}.
    % We restate their soundness result here:
    %
    % \begin{lemma}[Correctness of \textsc{RaVeN} Relaxation~\cite{Banerjee2024Raven}]\label{lemma:raven-relax}
    %     Given bounds $x \in [l_x, u_x], y \in [l_y, u_y], \Delta \in [l_\Delta, u_\Delta]$ and $\partial_l \leq \frac{d}{dx} \sigma(x) \leq \partial_u ~~ \forall x \in [\min(l_x, l_y), \max(u_x, u_y)]$, then 
    %     \begin{align}
    %         \alpha_l \cdot \Delta + \beta_l \leq \sigma(x) - \sigma(x - \Delta) \leq \alpha_u \cdot \Delta + \beta_u \;,
    %     \end{align}
    %     where
    %     \begin{align*}
    %         \alpha_l &= \begin{cases}
    %             \partial_l &, l_\Delta \geq 0\\
    %             \partial_u &, u_\Delta \leq 0\\
    %             \frac{\partial_l u_\Delta - \partial_u l_\Delta}{u_\Delta - l_\Delta}, & \text{otw.}
    %         \end{cases}, &
    %         \alpha_u &= \begin{cases}
    %             \partial_u, & l_\Delta \geq 0\\
    %             \partial_l, & u_\Delta \leq 0\\
    %             \frac{\partial_u u_\Delta - \partial_l l_\Delta}{u_\Delta - l_\Delta}, & \text{otw.}
    %         \end{cases}, &
    %         \beta_l &= -\beta_u = \frac{(\partial_u - \partial_l) l_\Delta u_\Delta}{u_\Delta - l_\Delta} \;.
    %     \end{align*}
    % \end{lemma}
    %
    We extend the relaxation introduced by Banerjee \textit{et al.}~\cite{Banerjee2024Raven} to a differential transformer for $p(x) - \sigma(x - \Delta)$ for polynomial $p$ and activation function $\sigma$ using a verified bound on the approximation error $\epsilon \geq \max_{x \in [l, u]} | p(x) - \sigma(x)|$.
    Given this, we can overapproximate the difference $\delta=p(x) - \sigma(x - \Delta)$ by
    \begin{align}
        \sigma(x) - \sigma(x - \Delta) - \epsilon \leq \delta \leq \sigma(x) - \sigma(x - \Delta) + \epsilon 
        \label{eq:poly-act}
    \end{align}
    and if $\sigma$ has bounded derivative, we can therefore apply the corresponding relaxation for $f = g = \sigma$ given in \textsc{RaVeN}~\cite{Banerjee2024Raven}.
    
    However, since \textsc{VeryDiff}~\cite{Teuber2025verydiff} requires parallel linear relaxations, we have to slightly adjust the non-parallel \textsc{RaVeN}-relaxation.
\end{textAtEnd}
\begin{proofE}
    \label{proof:p-sigma-relaxation}
    We construct parallel lower and upper linear relaxations for $\sigma(x) - \sigma(x - \Delta)$ by slightly modifying the non-parallel \textsc{RaVeN} relaxation~\textsc{RaVeN}~\cite[Appendix G.2]{Banerjee2024Raven}:
    \begin{align}
        \alpha_l \cdot \Delta + \hat{\beta}_l \leq \sigma(x) - \sigma(x - \Delta) \leq \alpha_u \cdot \Delta + \hat{\beta}_u \;.
    \end{align}
    We first compute the mean slope
    \begin{align}
        \alpha = \frac{1}{2} (\alpha_l + \alpha_u)
    \end{align}
    of the non-parallel linear lower and upper relaxations.
    Then we compute upward and downward shifts for the linear function $\alpha \cdot \Delta$ s.t. it is larger than \textsc{RaVeN}'s upper relaxation $\alpha_u \cdot \Delta + \beta_u$ and smaller than \textsc{RaVeN}'s lower relaxation $\alpha_l \cdot \Delta + \beta_l$.
    Finally, we apply Equation~\ref{eq:poly-act} to lift the correctness result from $\sigma(x) - \sigma(x - \Delta)$ to a result for $p(x) - \sigma(x - \Delta)$.
    
    For the upward shift, we compute
    \begin{align*}
        \beta_u &= \max_{\Delta \in [l_\Delta, u_\Delta]} \alpha \Delta - (\alpha_u \Delta + \hat{\beta}_u) \\
        &= \max_{\Delta \in [l_\Delta, u_\Delta]} (\alpha - \alpha_u) \Delta + \hat{\beta}_u \\
        &= \begin{cases}
            (\alpha - \alpha_u) l_\Delta + \hat{\beta}_u &,\alpha - \alpha_u \leq 0\\
            (\alpha - \alpha_u) u_\Delta + \hat{\beta}_u &,\alpha - \alpha_u > 0 \;,
        \end{cases}
    \end{align*}
    where the case distinction is due to changing monotonicity of $(\alpha - \alpha_u) \Delta + \hat{\beta}_u$.
    
    Similarly, we compute the shift for the lower bound as
    \begin{align*}
        \beta_l &= \min_{\Delta \in [l_\Delta, u_\Delta]} \alpha \Delta - (\alpha_l \Delta + \hat{\beta}_l)\\&= \min_{\Delta \in [l_\Delta, u_\Delta]} (\alpha - \alpha_l) \Delta + \hat{\beta}_l \\
        &= \begin{cases}
            (\alpha - \alpha_l) l_\Delta + \hat{\beta}_l &, \alpha - \alpha_l \geq 0\\
            (\alpha - \alpha_l) u_\Delta + \hat{\beta}_l &, \alpha - \alpha_l < 0
        \end{cases}
    \end{align*}

    With the above steps, we obtain 
    \begin{align}
        \alpha \Delta + \beta_l \leq \sigma(x) - \sigma(x - \Delta) \leq \alpha \Delta + \beta_u
    \end{align}
    over the domain of interest.
    Applying Equation~\ref{eq:poly-act}, we obtain our final statement
    \begin{align}
        \alpha \Delta + \beta_l - \epsilon \leq p(x) - \sigma(x - \Delta) \leq \alpha \Delta + \beta_u + \epsilon \;.
    \end{align}
\end{proofE}

\section{Polynomial Approximation and Smoothness}
\label{sec:poly-approx-smooth}
The techniques presented in the previous section enable the certified construction of polynomially approximated networks as well as computing verified bounds on the difference to the original network.% for a wide range of activation functions.
These verified bounds can be significantly larger than bounds estimated from a dataset and may thus result in polynomial networks that -- although free from overflows -- poorly approximate the original networks.

We recover from this loss of precision by choosing networks with smooth activation functions like $\gelu$ as the basis for our polynomial networks.
Our evaluation (Section~\ref{sec:experiments}) will show that this allows our \emph{verified} polynomially approximated networks to eventually achieve smaller approximation error, than polynomial approximation of $\relu$ networks based on sampled preactivation bounds.

This is based on the fact that the polynomial approximation error of an activation $\sigma: \R \to \R$ converges faster to zero, the smoother the activation function is.
The following bounds given by Trefethen \textit{et al.}~\cite{trefethen2012atap} for the error of degree $n$ Chebyshev polynomial approximations $\pi_n(x) \approx \sigma(x)$ (w.l.o.g. $x \in [-1, 1]$) formally capture this phenomenon:
%\begin{minipage}{.5\linewidth}
%    \begin{equation}
%        | \sigma(x) - \pi_n(x) | \leq \frac{4V}{\pi \nu (n - \nu)^\nu}
%        \label{eq:diffable}
%    \end{equation}
%\end{minipage}%
%\begin{minipage}{.5\linewidth}
%    \break 
%    \begin{equation}
%        | \sigma(x) - \pi_n(x) | \leq \frac{4 M \rho^{-n}}{\rho - 1}
%        \label{eq:analytic}
%    \end{equation}
%\end{minipage}
\begin{align}
    | \sigma(x) - \pi_n(x) | &\leq \frac{4V}{\pi \nu (n - \nu)^\nu}
    \label{eq:diffable} \\
    | \sigma(x) - \pi_n(x) | &\leq \frac{4 M \rho^{-n}}{\rho - 1}
    \label{eq:analytic}
\end{align}
The first bound, implies that the approximation error decays with rate $O(n^{-\nu})$ and is applicable, if $\sigma$ is $\nu$ times differentiable and $n > \nu$.
Equation~\eqref{eq:analytic} is applicable to activations that are analytic, i.e. their Taylor expansion converges to $\sigma$ for any $x \in [-1, 1]$, and leads to an even faster decay with rate $O(\rho^{-n})$, where a larger $\rho$ for the Bernstein ellipse $E_\rho$ indicates that $\sigma$ can be analytically continued further into the complex plane.
The constants $V$ and $M$ in the above equations are independent of the polynomial degree $n$.
The former is given by the total variation $V = V_{[-1,1]}(\sigma^{\nu})$ of the $\nu$-th derivative of $\sigma$, where 
\begin{align}
    V_{[a,b]}(\sigma) = \int_a^b |\sigma'(x)| \;dx
\end{align}
for continuously differentiable $\sigma$ (see e.g.~\cite{appell2014boundedvariation}).
The latter constant $M = \max_{x \in E_\rho} |\sigma(x)|$ is defined using a parameter $\rho$, s.t. $\sigma$ can be analytically continued to the interior of the Berstein ellipse
\begin{align*}
    E_\rho = \left\{ z \in \mathbb{C} ~\middle|~ z = \frac{\rho e^{i \theta} + \left(\rho e^{i \theta}\right)^{-1}}{2},\; \rho \geq 1,\; 0 \leq \theta < 2\pi \right\}
\end{align*}
(definition of $E_\rho$ taken from Wang \textit{et al.}~\cite{wang2018bernstein}).
Bernstein ellipses for varius choices of $\rho$ are illustrated in Figure~\ref{fig:bernstein-ellipse}.
%\todo[inline]{Add a caption to Fig 4}
If $\sigma$ is \emph{entire} (i.e. it can be analytically continued to all $\mathbb{C}$), then, one can choose any parameter $\rho$ for the Bernstein ellipse that leads to a favourable value of $M = \max_{x \in E_\rho} |\sigma(x)|$ in Equation~\eqref{eq:analytic}.

%\begin{figure}
%    \vspace*{-1.2em}
%    \begin{center}
%    \includegraphics[width=\linewidth]{images/convergence/convergence_approximation_error.pdf}
%    \end{center}
%    \vspace*{-1.3em}
%    \caption{Polynomial approximation error for different activation functions over $x \in [-1, 1]$: Smooth functions converge faster.}
%    \label{subfig:convergence}
%\end{figure}
The effects of these bounds are visualised in Figure~\ref{fig:convergence}:
The approximation error for the once-differentiable $\elu$ activation converges faster than the error for the non-differentiable $\relu$ function. %, which is not differentiable at all.
The errors for the analytic functions $\text{Sigm}(x) = 1/(1 + e^{-x})$ and $\silu(x) = x \cdot \text{Sigm}(x)$ converge much faster.
While $\text{Sigm}(x)$ and $\silu(x)$ have poles at $x = \pm \pi i$, $\gelu(x)$ can be analytically continued to the entire complex plane.
Therefore, we see even faster convergence for this activation function -- at least before convergence of the approximation error levels off due to limits of floating point accuracy.
Consequently, $\silu$ and even more so $\gelu$ makes precise polynomial approximation of neural networks easier.

Figure~\ref{fig:convergence-radius} illustrates how increases in width of the approximation interval affect the polynomial approximation error, for fixed polynomial degrees.
We can see that for smaller radii $r$ of the approximation interval $[-r, r]$, the $\gelu$ approximation has a much smaller error than the $\relu$ approximation, whereas the difference gets smaller as the radius increases.
The Figure further shows the benefits of fitting polynomial approximations using the Remez algorithm~\cite{remez1934} as opposed to standard Chebyshev interpolation.
High-quality polynomial approximations computed by the Remez algorithm are roughly twice as precise as Chebyshev approximations.

\section{Concrete CKKS Procedures}
\label{sec:concrete-ckks}
\begin{figure*}
    \centering
    \includegraphics[width=\linewidth]{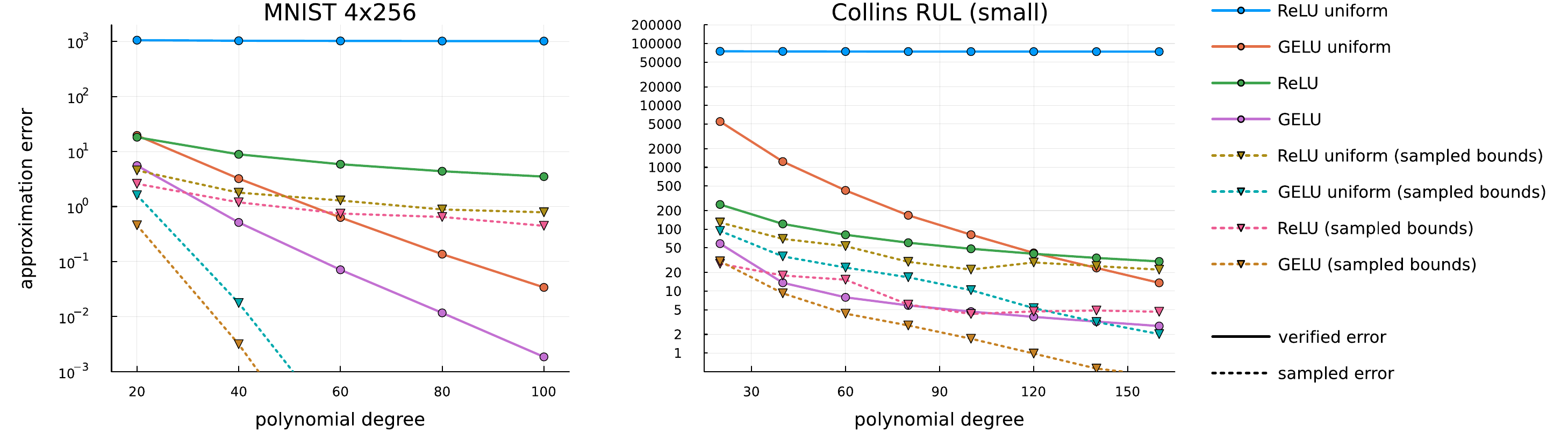}
    \caption{Verified bound on the error $| f_\pi(x) - f(x) |$ for networks with verified pre-activation bounds and error values sampled for networks constructed using sampled bounds. We show results for $\relu$ and $\gelu$ networks with heterogeneous polynomials per layer vs. networks with a single \emph{uniform} polynomial per layer.}
    \label{fig:ablation}
\end{figure*}

%\todo[inline]{Add motivation sentence, like beginning of Section \ref{sec:poly-approx-smooth}.}

%{ssec:pw-overapprox}
%[\mathbf{l}^{(l)},\mathbf{u}^{(l)}]

The construction of polynomial activations in Section \ref{ssec:pw-overapprox} implicitly assumes that we can evaluate a different polynomial for each neuron in the same layer $l$. This goes against standard usages of CKKS-based NNs where typically one evaluates a single polynomial approximation of, e.g.\ ReLU, over all the neurons in the same layer~\cite{Ebel2025orion,Lee2022fhe,LLAMACKKS}. However, the per-neuron ranges $[l_i^{(l)},u_i^{(l)}]$ may differ considerably from each other, and thus much can be gained by approximating them independently, as we demonstrate in Section \ref{ssec:ablations}.
%As we already mentioned, at the core of our method there is the evaluation of different polynomials activations for each neuron in the same layer.
%\todo{Where do we discuss this before? If not, we need to rephrase this sentence (ST)}

Here, we show how to evaluate different polynomials in the different slots of a CKKS ciphertexts by simply generalizing the Paterson-Stockmeyer (PS) algorithm.
In particular, Chen, Chillotti and Song \cite[Theorem 1]{CCS19} first gave a PS version adapted for Chebyshev polynomials. We report, in Algorithm \ref{alg:ps}, a modified version of the algorithm which works over Chebyshev bases and \emph{sets of coefficients}. In particular, the algorithm proposed in \cite{CCS19} is simply an instantiation of ours when all the sets of coefficients are equal.

%\todo[inline]{What are the definitions of $q_i(x), r_i(x), d_i(x)$? LR: they are the quotient and remainder of the poly division}
%\todo[inline]{LR: still some sanity check needed}

We refer to Section \ref{sec:cert-nn-to-ckks} for further discussion about the CKKS procedures.

%\todo[inline]{LR: Describe mods to standard CKKS}

\section{Experiments}
\label{sec:experiments}

%We aim to measure the following quantities:
%\begin{itemize}
    %\item Error bounds (diff verif)
    %\item Error bounds (FHE noise)
    %\item Accuracy comparisons
    %\item Certified accuracy comparisons
    %\item Speed comparisons
%\end{itemize}

%\todo[inline]{
%- MNIST
%- CIFAR10
%- HELOC
%- EC
%- NN4Sys
%- Collins RUL
%}
Our experiments\footnote{Refer to \url{https://github.com/lorenzorovida/encrypted-neural-networks-without-overflows}} aim at
evaluating the impact of each contribution on the performance of our approach (Section~\ref{ssec:ablations}),
%(ii)
%showing the existence of overflow events (Section~\ref{sec:design_atk_compare}), and
%shed light on the effect of each of our contributions on the performance of the polynomial NNs (Section~\ref{ssec:ablations}) and
%(iii) 
and demonstrate that certified design (with all our improvements) yields comparable accuracy to non-certified design for both plaintext and encrypted inference (Section~\ref{ssec:large-scale}). %\todo{when using our improvements GeLU and heterogenuous}. %
%demonstrate that certified design of polynomial NNs leads to comparable accuracy in the plaintext setting as well as under encrypted inference (Section~\ref{ssec:large-scale}) 

We run our evaluation on fully connected (FC) NNs trained on the HELOC (Home Equity Line of Credit)~\cite{FICOChallenge}, MNIST~\cite{Lee2022fhe} and EC (Energy Consumption)~\cite{CER2011electricity} datasets, as well as convolutional (Conv) NNs trained on CIFAR10~\cite{krizhevsky2009cifar}.
We further retrained FC NNs based on the NN4Sys benchmark~\cite{Lin2024NN4Sys} for learned index prediction and Conv NNs for remaining useful life prediction (Collins RUL)~\cite{Kirov2023CollinsRUL} taken from the annual NN verification competition \textsc{VNNComp}~\cite{vnncomp2025}.
Overall, the NNs cover a wide range of sizes containing between $64$ and $10300$ neurons.
Following prior work on certified design of polynomial NNs based on $\relu$ NNs~\cite{Kern2025fhe}, we trained all NNs with $L_1$ regularization to encourage tighter pre-activation bounds.

We use a piecewise-polynomial overapproximation $q(x)$ for the $\gelu$ activation function (see Section~\ref{ssec:pw-overapprox}) with verified error $|q(x) - \gelu(x)| \leq 10^{-10}$.
The overapproximation uses $6$ pieces of degree $15$ in addition to the linear asymptotes.

Our implementation is written in Julia~\cite{bezanson2017julia} and builds on top of the differential verifier \textsc{VeryDiff}~\cite{Teuber2025verydiff}.
Execution of polynomial NNs under CKKS is based on OpenFHE~\cite{OpenFHE}.
To compute verified pre-activation ranges, we leverage the state-of-the-art NN verifier $\alpha$-CROWN~\cite{Xu21aCROWN}.
Experiments for plaintext NNs were run on Ubuntu with a $4$-Core Intel Xeon E5-1630v3 and $128$ GiB of RAM.
CKKS inference was run on a M5 Max CPU with $36$ GiB of RAM.
Additional information and results for comparison to the state-of-the-art certified construction approach for $\relu$ neural networks~\cite{Kern2025fhe} can be found in Appendix~\ref{sec:additional-results}.
%Additional evaluation results can be found in Appendix~\ref{sec:additional-results}.

\subsection{From Certified Networks to CKKS Programs}
\begin{figure*}[ht]
    \centering
    \begin{minipage}{0.48\linewidth}
        \centering
        \includegraphics[width=\linewidth]{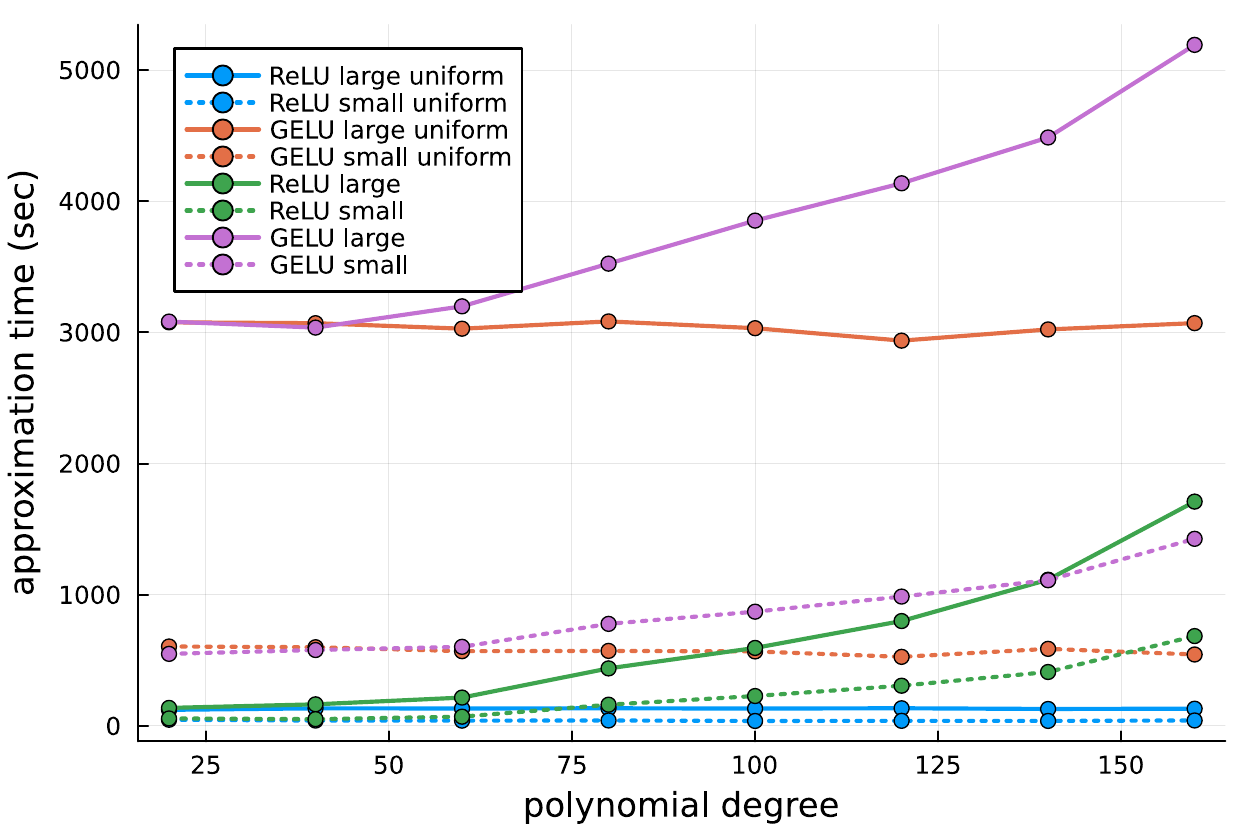}
        \captionof{figure}{Runtime of certified design for Collins RUL.}
        \label{fig:collins-runtime}
    \end{minipage}\hfill
    \begin{minipage}{0.48\linewidth}
        \centering
        \includegraphics[width=\linewidth]{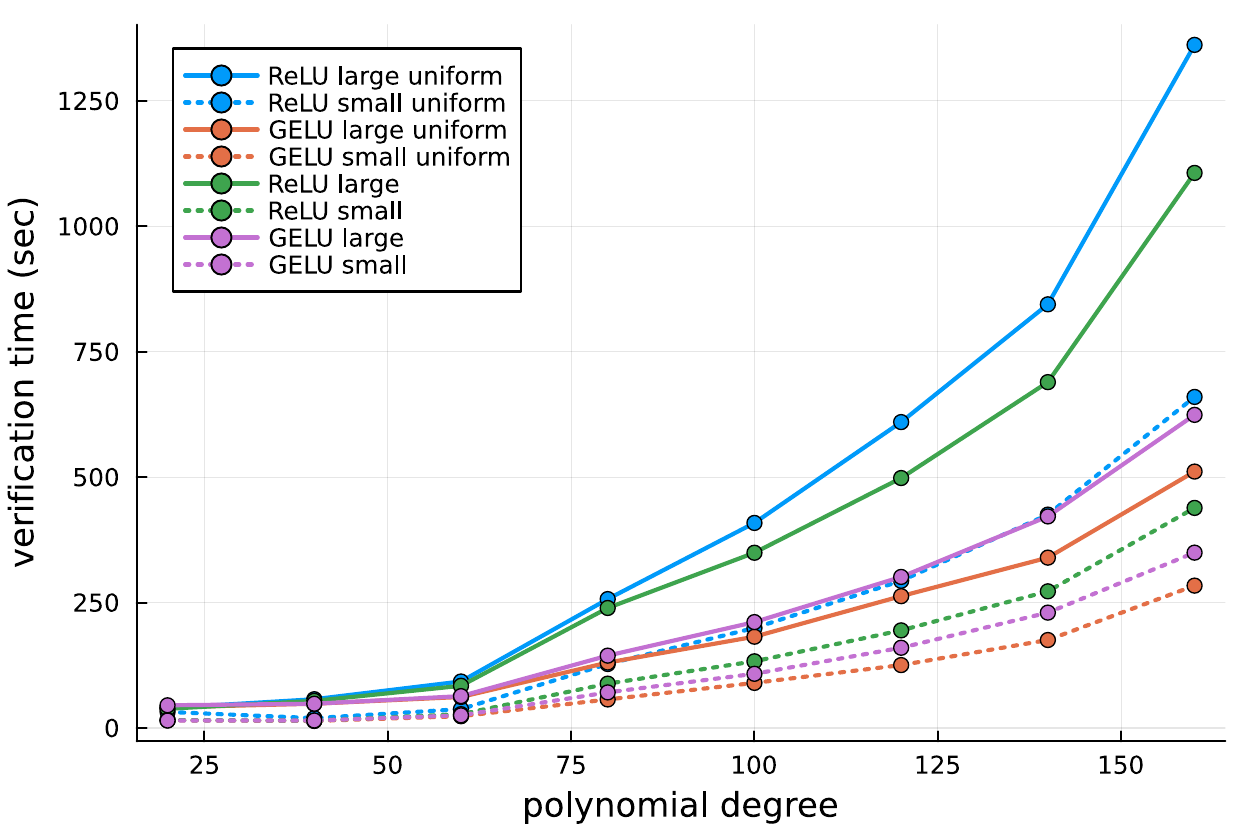}
        \captionof{figure}{Runtime of verified error bound computation for Collins RUL.}
        \label{fig:collins-verification-time}
    \end{minipage}\hfill
    \end{figure*}
%\todo[inline]{LR: Reduce this and comment grammar (put it into the readme GitHub)}
\label{sec:cert-nn-to-ckks}
In order to evaluate our encrypted circuits, we design a compiler that, given a network (defined as a list of linear and Chebyshev layers), generates a JSON file specifying the concrete, certified CKKS circuit.
%that follows the grammar defined in Figure \ref{fig:grammar}.
% Moved grammar here:
%\input{old_draft/grammar}
Our OpenFHE implementation takes the JSON file as input and automatically evaluates the layers in the network.
We give an overview of our methods in Appendix \ref{sec:network-to-json}. We designed our software
to choose the FHE parameters in a way that minimizes the required multiplicative depth
%in such a way that the FHE parameters are automatically chosen to support the minimum multiplicative depth required by the network
and at the same time to support $\lambda > 128$ bits of (classical) security according to \cite{FheGuidelines}. 
\label{sec:ckks_nn_translation}

\subsection{Ablation Study}
\label{ssec:ablations}

% \begin{figure*}[ht]
%     \centering
%     \begin{minipage}{0.32\linewidth}
%         \centering
%         \includegraphics[width=0.95\linewidth]{images/cifar_large_sampled_nan.png}
%         \vspace*{0.25cm}
%         \captionof{figure}{Realistic perturbations of CIFAR images causing overflow events}
%         \label{fig:mnist-perturbations}
%     \end{minipage}
%     \begin{minipage}{0.32\linewidth}
%         \centering
%         \includegraphics[width=\linewidth]{images/experiments/ablation_collins_runtime.pdf}
%         \captionof{figure}{Runtime of certified design for Collins RUL.}
%         \label{fig:collins-runtime}
%     \end{minipage}\hfill
%     \begin{minipage}{0.32\linewidth}
%         \centering
%         \includegraphics[width=\linewidth]{images/experiments/ablation_collins_verification_time.pdf}
%         \captionof{figure}{Runtime of verified error bound computation for Collins RUL.}
%         \label{fig:collins-verification-time}
%     \end{minipage}\hfill
%     %\caption{\textcolor{orange}{Probably ditch one figure or put it somewhere else. Also: How can we get rid of the common caption? ((a),(b) are not too related to (c))}}
% \end{figure*}

%\todo[inline]{Different FHE designs: e.g. multiple polynomial activations per layer, rather than worst-case uniform one.}

%\todo[inline]{Polynomial degree vs error bound size: plots and discussion. Impact on FHE latency. GELU vs ReLU}

%\todo[inline]{Show that FHE noise is ``negligible'': compare errors of normal net vs polynomial net vs FHE net.}

\looseness=-1
We construct polynomial NNs approximating a base NN for multiple values of the polynomial degree $d$ and measure the verifiable bound on the error $\| f_\pi(\vec{x}) - f(\vec{x}) \|_\infty$ as well as the maximum observed error (sampled over the respective dataset).
Comparing the error metrics between different construction settings 
%-- using a base NN with $\relu$ vs. $\gelu$ activation, fitting a single polynomial approximation for all neurons in a layer vs. one polynomial per neuron, using verified pre-activation bounds vs. sampled bounds \todo{is this all we want to show here? In earlier TODOs, we also had: Impact on FHE latency, negligible FHE noise, normal net vs polynomial net vs FHE net} -- 
enables us to evaluate the effect of our individual contributions.
To show the effects on FC and Conv NNs, we report the results for the MNIST $4\times 256$ NN and the small Collins RUL NN ($[1600,2400,800,600,100]$ neurons per layer) in Figure~\ref{fig:ablation}.

%\todo[inline]{PK: The mini-paragraphs below can be tidied up and expanded (we have space now!). Perhaps use bullet points instead?}

\subsubsection*{GELU vs. ReLU Networks}  % \paragraph{} in this template doesn't really create emphasis
Figure~\ref{fig:ablation} clearly shows that the approximation error for $\gelu$ NNs is far smaller than for NNs with $\relu$ activation. 
This effect gets stronger with increasing polynomial degree $d$, reflecting the superior approximation properties of the smooth $\gelu$ activation (see Section~\ref{sec:poly-approx-smooth}).
For large degrees and the Collins benchmark, the error values are at least an order of magnitude smaller for $\gelu$ NNs than for $\relu$ NNs, on the MNIST benchmark the improvement is at least three orders of magnitude for $d = 100$.
This greater advantage in the MNIST case is likely caused by significantly smaller pre-activation bounds for this benchmark.
For larger preactivation bounds, higher polynomial degrees are required to realize the improved convergence of the $\gelu$ polynomial approximation (see also Figure~\ref{fig:convergence-radius}).

\subsubsection*{Heterogeneous vs. Uniform Polynomial Layers}
Error values when using separately fitted polynomial approximations for each neuron are significantly smaller compared to using a single uniform polynomial approximation for all neurons in the same layer.
Exploitation of tighter individual preactivation bounds leads to at least $4.7\times$ smaller approximation error on the Collins benchmark for large degrees.
For verified construction of MNIST networks, we achieve even larger improvements of at least $18\times$ for $d = 100$. 

\subsubsection*{Verified vs. Sampled Bounds}
Using sampled pre-activation bounds leads to substantially smaller sampled error values.
This is expected as sampled values (even if bounds are widened by a constant factor) do not suffer from overapproximation to the same extent as bounds computed using NN verification techniques.
However, these polynomial NNs are prone to overflow events.
Moreover, the above effect is only valid if we fix the activation function:
When we compare the sampled error for $\relu$ NNs with sampling-based bounds to the sampling error for $\gelu$ NNs with verified bounds, the $\gelu$ NNs achieve a better or similar approximation error already for $d \leq 40$ on both MNIST and Collins RUL.
The effect is even more pronounced comparing $\relu$ NNs with a single polynomial per layer (which is the current approach in modern workflows, as in Figure \ref{fig:lit_map}) %\todo{clearly point out that this is the previous state of the art?} 
and sampled bounds to verified $\gelu$ NNs with multiple polynomials per layer.

\subsection{Certified Design at Scale}
\label{ssec:large-scale}

\begin{table*}[th]
\newcommand{\reluSym}{R}
\newcommand{\geluSym}{G}
\renewcommand{\arraystretch}{1.16} % Default 
\setlength{\tabcolsep}{4pt} % Default value: 6pt
    %\caption{Verified error bounds and comparison of performance metrics between base networks and their polynomial approximations using verified pre-activation bounds.
    %The correct number of digits (matching results for $f_\pi$ and $f_{CKKS}^\bot$) are computed as $L_\infty|{\bm e}_\text{\normalfont CKKS}|$, where ${\bm e}_\text{CKKS}$ is final CKKS error.\vspace*{1mm}}
    \caption{Verified error bounds and cryptographic error of certified FHE inference. We report the $L_1$ penalty used for training the original NN, $\mathbf{\sigma}$ is the activation function ($\reluSym \equiv \relu{}, \geluSym\equiv\gelu{}$),
    $N$ is the ring size, $QP$ is the modulus size. $D$ refers to the multiplicative depth of the circuit. The (max) CKKS error is computed as $| f_{\text{CKKS}}(x) - f_\pi(x) |$ obtained over a sample of the respective dataset. Latencies ($t$) are in seconds and have been measured on a M5 Max CPU with 36GB of memory. Moduli with $^*$ are reported although their security is slightly smaller than 128-bits. Refer to \cite[Section 2.3]{KPP22} for the definition of $d_\text{num}$\vspace*{2mm}}
    \label{tab:large-scale-long}
    \label{tab:large-scale}
    \footnotesize
    \centering
    % for wide source-code view this table is nicely formatted
    \begin{tabular}{llllll rrr llllllll}
         & \textbf{Size}        & $L_1$      & $\mathbf{\sigma}$  & \begin{tabular}[c]{@{}c@{}}\textbf{Output range}\\\textbf{(sampled)} \end{tabular}    & \begin{tabular}[c]{@{}c@{}}\textbf{Acc, MSE,}\\\textbf{MAE of $f$} \end{tabular}  & \makecell[l]{\textbf{Poly.}\\\textbf{deg.}}    & \begin{tabular}[c]{@{}c@{}}\textbf{Acc, MSE,}\\\textbf{MAE of $f_{\pi}$} \end{tabular}      & \begin{tabular}[c]{@{}c@{}}\textbf{Verified}\\\textbf{error} \end{tabular}    & \begin{tabular}[c]{@{}c@{}}\textbf{CKKS}\\\textbf{error} \end{tabular}
&
$N$ & $QP$ & $q_i$ & $\Delta$ & $d_\text{num}$ &
$D$ & 
$t$
\\\midrule
    %%% HELOC
          \parbox[t]{2mm}{\multirow{2}{*}{\rotatebox[origin=c]{90}{\textbf{HEL.}}}}          
          & $[64,32]$         & 2e-5 & \reluSym        &
          $[-5.3, 3.6]$
          & $73.31\%$     & $27$      & $73.24\%$           & $0.75$            & $10^{-9}$ &
          $2^{16}$ & $2^{1306}$       &  45 & 60 & 2 & 21 & 1
          \\ 
        &                   &  & \geluSym          &         
        $[-4.5, 2.9]$
        & $72.25\%$     & $27$      & $72.25\%$           & 3.76e-6         & $10^{-9}$ &
        $2^{16}$ & $2^{1306}$       &  45 & 60 & 2 & 21 & 1
\\\midrule
    %%% MNIST
         \parbox[t]{2mm}{\multirow{4}{*}{\rotatebox[origin=c]{90}{\textbf{MNIST}}}}          
         & $4 \times 256$    & 1e-4 & \reluSym          & 
         %$[-32.02, 19.65]$
         $[-32.0, 19.7]$
         & $98.64\%$     & $119$     & $98.64\%$           & $2.93$            & $10^{-7}$ &
         $2^{17}$ & $2^{2746}$        &   45       & 60 & 2 & 49 & 39
         \\
        &                   & & \geluSym          & 
        %$[-18.87, 25.03]$
        $[-18.9, 25.0]$
        & $98.93\%$     & $119$     & $98.93\%$           & 3.08e-4         & $10^{-7}$ &
        $2^{17}$ & $2^{2746}$        &   45       & 60 & 2 & 49 & 39
        \\
        & $6 \times 256$    & 1e-4 & \reluSym          &
        %$[-39.37, 25.17]$
        $[-39.4, 25.2]$
        & $99.12\%$     & $119$     & $99.09\%$           & $9.34$            & $10^{-7}$ &
        $2^{17}$ & $2^{3841*}$        &   45       & 60 & 7 & 72 & 74
        \\
        &                   & & \geluSym          &
        %$[-40.33, 16.74]$
        $[-40.3, 16.7]$
        & $98.94\%$     & $119$     & $98.93\%$           & $0.046$           & $10^{-7}$ &
        $2^{17}$ & $2^{3841*}$        &   45       & 60 & 7 & 72 & 74
\\\midrule
    %%% CIFAR10
         \parbox[t]{2mm}{\multirow{8}{*}{\rotatebox[origin=c]{90}{\textbf{CIFAR10}}}}        
         & \multirow{2}{*}{\makecell[l]{$[900,392,$\\$144,256]$}}        & 1e-4 & \reluSym          &
         %$[-16.22, 14.71]$
         $[-16.2, 14.7]$
         & $67.24\%$     & $119$     & $66.18\%$           & $102.47$          & $10^{-4}$ &
         $2^{17}$ & $2^{3511}$ & 45 & 60 & 5 & 62 & 95
         \\
         &                 &   & \reluSym          &                   &               & $247$     & $67.57\%$           & $49.08$           & $10^{-4}$ &
         $2^{17}$ & $2^{3511}$ & 45 & 60 & 7 & 66 & 119
        \\
        &               &    & \geluSym          &
                        %$[-13.93, 17.21]$
                        $[-13.9, 17.2]$
                        & $69.56\%$     & $119$     & $68.70\%$           & $26.98$           & $10^{-4}$ &
                        $2^{17}$ & $2^{3511}$ & 45 & 60 & 5 & 62 & 95
        \\
        &               &    & \geluSym          &                   &               & $247$     & $69.50\%$           & $0.93$            & $10^{-4}$ &
                        $2^{17}$ & $2^{3511}$ & 45 & 60 & 7 & 66 & 119
        \\
        & \multirow{2}{*}{\makecell[l]{$[1.8\mathrm{k},784,$\\$288,256]$}}         & 1e-4   & \reluSym          &
                        %$[-13.33, 16.36]$
                        $[-13.3, 16.4]$
                        & $76.30\%$     & $119$     & $72.35\%$           & $232.48$          & $10^{-4}$    &
                        $2^{17}$ & $2^{3511}$ & 45 & 60 & 5 & 62 & 167
        \\
        &               &    & \reluSym          &                   &               & $247$     & $76.31\%$           & $111.40$          & $10^{-4}$     &
                        $2^{17}$ & $2^{3511}$ & 45 & 60 & 7 & 66 & 191
        \\
        &               &    & \geluSym          &
                        %$[-14.67, 20.28]$
                        $[-14.7, 20.3]$
                        & $77.28\%$     & $119$     & $65.20\%$           & $85.58$           & $10^{-4}$    &
                        $2^{17}$ & $2^{3511}$ & 45 & 60 & 5 & 62 & 167
        \\
        &               &    & \geluSym          &                   &               & $247$     & $75.13\%$           & $18.64$           & $10^{-4}$    &
                        $2^{17}$ & $2^{3511}$ & 45 & 60 & 7 & 66 & 191
\\\midrule
    %%% EC
         \parbox[t]{2mm}{\multirow{2}{*}{\rotatebox[origin=c]{90}{\textbf{EC}}}}         &  $1 \times 64$  & $-$   & \reluSym          &
         %$[0.18, 1.95]$
         $[0.2, 2.0]$
         & $0.0839$      & $27$      & $0.0839$            & $0.11$            & $10^{-6}$    &
         $2^{15}$ &  $2^{781}$    & 45 & 60    & 2 & 12 & 0.5
         \\
        &            &       & \geluSym          &
            %$[0.21, 1.96]$
            $[0.2, 2.0]$
            & $0.0844$      & $27$      & $0.0844$            & 3.64e-7         & $10^{-6}$    &
            $2^{15}$ &  $2^{781}$    & 45 & 60    & 2 & 12 & 0.5
\\\midrule
    %%% NN4Sys
         \parbox[t]{2mm}{\multirow{4}{*}{\rotatebox[origin=c]{90}{\textbf{NN4Sys}}}}   & $3 \times 128$    & 5e-7     & \reluSym          & 
         %$[0.00, 1.01]$
         $[0.0, 1.0]$
         & 4.56e-4     & $59$      & 4.83e-3           & $0.089$           & $10^{-9}$    &
         $2^{17}$ & $2^{2431}$ & 45 & 60  & 2 & 34 &  6
         \\
        &            &       & \geluSym          & 
         %$[0.00, 1.01]$
         $[0.0, 1.0]$
             & 3.57e-4     & $59$      & $3.57e$-$4$         & 1.75e-6         & $10^{-9}$   &
             $2^{17}$ & $2^{2431}$ & 45 & 60  & 2 & 34 &  6
        \\
        & $5 \times 128$  & 5e-7   & \reluSym          & 
         %$[0.00, 1.01]$
         $[0.0, 1.0]$
             & 5.75e-4     & $59$      & 3.10e-3           & $0.11$            & $10^{-9}$    &
             $2^{17}$ & $2^{3136}$ & 45 & 60 & 2 & 55  & 22
        \\
        &           &        & \geluSym          & 
         %$[0.00, 1.01]$
         $[0.0, 1.0]$
             & 2.93e-4     & $59$      & 2.93e-4           & 1.15e-5         & $10^{-9}$    &
             $2^{17}$ & $2^{3136}$ & 45 & 60 & 2 & 55 &  22
\\\midrule
    %%% Collins
         \parbox[t]{2mm}{\multirow{4}{*}{\rotatebox[origin=c]{90}{\textbf{Collins}}}}    
         & \multirow{2}{*}{\makecell[l]{$[1.6\mathrm{k},2.4\mathrm{k},$\\$800,600,100]$}}      & 1e-3       & \reluSym          & 
         %$[0.42, 648.10]$
         $[0.4, 648.1]$ & $457.00$      & $119$     & $456.91$            & $40.43$           & $10^{-1}$   &
         $2^{17}$ & $2^{3511}$ & 45 & 60 & 5 & 61 & 91 
         \\
        &            &       & \geluSym          & 
        %$[0.96, 634.80]$
        $[1.0, 634.8]$
        & $527.89$      & $119$     & $527.93$            & $3.87$            & $10^{-1}$   &
        $2^{17}$ & $2^{3511}$ & 45 & 60 & 5 & 61 & 91
        \\
        & \multirow{2}{*}{\makecell[l]{$[3.2\mathrm{k},4.8\mathrm{k},$\\$1.6\mathrm{k},600,100]$}}   & 1e-3 & \reluSym          & 
    %$[-8.58, 644.74]$
    $[-8.6, 644.7]$
    & $512.60$      & $119$     & $518.46$            & $68.99$           & $10^{-4}$    &
    $2^{17}$ & $2^{3511}$ & 45 & 60 & 5 & 61 & 158
    \\
    &              &     & \geluSym          &
    %$[0.38, 615.81]$
    $[0.4, 615.8]$
    & $699.21$      & $119$     & $686.50$            & $9.28$            & $10^{-4}$    &
    $2^{17}$ & $2^{3511}$ & 45 & 60 & 5 & 61 & 158\\
    \end{tabular}
\end{table*}

In this section, we demonstrate that our certified design approach can be applied to networks of varying sizes from different domains.
As per Section \ref{ssec:ablations}, we compute verified pre-activation bounds and fit  multiple polynomials per layer.
We compare approximations of $\relu$ and $\gelu$ NNs and discuss the quality of the approximated networks as well as the runtime of our approach.

\subsubsection*{Approximation Quality}
For each polynomial network, we measure the verified bound on the approximation error $\| f(\vec{x}) - f_\pi(\vec{x})\|_\infty$ and cryptographic error over a sample of the respective dataset.
We report these values in Table~\ref{tab:large-scale} along with the output range of the NNs sampled over the corresponding dataset to put the approximation error into perspective.
The table also contains traditional metrics of the NNs: 
Classification accuracy (HELOC, MNIST, CIFAR), mean squared error (EC, Collins RUL) and mean absolute error (NN4Sys).

Regarding the traditional metrics (classification accuracy, mean squared error, mean absolute error), the results show only very minor drops in performance between the original NN $f$ and its polynomial approximation $f_\pi$.
The large CIFAR10 NNs with degree $119$ being the lone exception.

Although the verified bound on the approximation error $\| f_\pi(x) - f(x) \|_\infty$ increases with network size, it is still fairly tight for most $\gelu$ NNs -- especially when compared to their typical output ranges.
The same is true for the maximum CKKS error $| f_{\text{CKKS}}(x) - f_\pi(x) |$ of the $\gelu$ networks.
However, polynomial approximations for NNs with $\relu$ activations, always lead to a much worse verified error bound.

\subsubsection*{Runtime}
We analyze the runtime of our approach for both certified construction of a polynomial NN (see Figure~\ref{fig:collins-runtime}), and computation of a verified bound on the approximation error (Figure~\ref{fig:collins-verification-time}) over a range of polynomial degrees.
This experiment was carried out on the Collins RUL benchmark, which contains the largest networks in our benchmark set.

As shown in Figure~\ref{fig:collins-runtime}, the runtime of our certified design approach increases for larger polynomial degrees as computation of polynomial roots is more expensive.
Synthesis of networks with a single uniform polynomial per layer only requires to run the Remez algorithm once per layer, instead of doing so for every neuron, explaining the large discrepancy in runtime.

Independent of the polynomial degree, there is a large cost related to the runtime of the underlying NN verifier, which scales with the size of the network, but also depends on the computational efficiency of computing relaxations for different activation functions.
For \textsc{AutoLiRPA}~\cite{Xu2020autolirpa}, the implementation of the $\alpha$-CROWN NN verification algorithm~\cite{Xu21aCROWN} we use, the relaxations for the $\gelu$ activation function are significantly more expensive to compute than the $\relu$ relaxations.

For computation of the verified bound on the approximation error, we instead use the differential NN verifier \textsc{VeryDiff}~\cite{Teuber2025verydiff} as a backend.
In this setting, we observe faster runtimes for the $\gelu$ networks, since the relaxation described in Theorem~\ref{thm:p-act-relax} for the difference between $\gelu$ and its polynomial approximation does not require any computation of polynomial roots -- in contrast to the utilized $\relu$ difference relaxation~\cite{Kern2025fhe}.
Runtime still increases with the polynomial degree, since overapproximation of the individual polynomial activations requires executing the Remez algorithm. More details about how the CKKS experiments have been performed are given in Section \ref{sec:cert-nn-to-ckks}. 

\section{Conclusions and Directions for Future Work}
\label{sec:conclusions}

\looseness=-1
In this paper, we demonstrate for the first time the vulnerability of current CKKS-based neural networks to overflow events. To solve this, we provide a framework to design certified neural networks. By construction, our approach yields networks that are provably overflow robust on the full input space, without sacrificing predictive performance.
Furthermore, our approach can bound the polynomial approximation error of the plaintext via a differential verification procedure. These bounds complement existing estimates on the magnitude of the cryptographic noise ${\bm e}_\text{CKKS}$, thus providing a complete toolkit to assess the maximum output error introduced during CKKS private inference.
%In particular, going back to Eq.\ref{eq:error_sources}, we can now bound the $f_\pi(\mathbf{x})$ term, and combined with a bound on ${\bm e}_\text{CKKS}$ due to, e.g., $\mathsf{Estimate}$ (Def. \ref{def:correctness}), we have strong tools to bound values in CKKS-based neural networks.

%We believe this can serve as a starting point for new more secure FHE pipelines. 
%An exciting direction for future work is represented by the application of our techniques on CKKS networks that require bootstrapping, thus enabling the evaluation of arbitrary depth networks.

\looseness=-1
\subsubsection*{Limitations and future work} We see this work as a starting point for more robust FHE inference.
Precise modelling of the CKKS noise like in~\cite{CCHMOP23,JCMNT25} could strengthen our bounds to natively include ${\bm e}_\text{CKKS}$ as an additional probabilistic term.
Extending our certified design to arbitrary depth networks is possible in theory but requires the inclusion of bootstrapping and co-design of convolution layers~\cite{10.1145/3658644.3690375} to reduce the CKKS noise.
Without that, we expect our method to produce slack bounds on deeper architectures, such as those commonly used in state-of-the-art image classifiers, thus reducing its accuracy. %\todo{Mention that our accuracies are lower than the sota, and a small discussion about scalability}
Finally, our method may occasionally generate polynomials with ill-scaled coefficients, which cause a reduction on the CKKS precision.

\subsubsection*{Acknowledgements}
%Generative AI was used for editorial purposes in this manuscript, and all outputs were inspected by the authors to ensure accuracy and originality. Further, we used LLMs to facilitate running experiments and implementing standard methods.

The Gemini, Claude, Copilot and Qwen AI models were used for minor editing purposes in this manuscript, such as spellchecking and shortening a few lengthy descriptions. Furthermore, we used them as coding assistants to help with plotting code in Julia and as a starting point in the implementation of the update step (finding points with alternating error) in the Remez algorithm. 
All outputs were inspected by the authors to ensure accuracy of the final output.

%\todo[inline]{Add, as a limitation, the fact that the obtained poly degrees are pretty large and forces us to scale them down by say 0.01. this reduces the final accuracy when the result is scaled back by 100}

%\clearpage
%\todo[inline]{Clean the bibliography.}

\bibliographystyle{IEEEtran}
\bibliography{ref_ieee}

\appendices

\section{Improved GELU Relaxation}

\begin{figure}[ht]
\resizebox{\linewidth}{!}{
\begin{forest}
    for tree={
        nodes={draw, rectangle, inner sep=2pt, align=center, text width=2.5cm},
        grow=south,
        l sep = 1.5cm,  % level separation
        s sep = 0.8cm,  % sibling separation
        yes/.style={edge label={node[midway, left]{yes}}},
        no/.style={edge label={node[midway, right]{no}}},
        if n children=0{
            % leaf nodes with light green background
            fill=green!10,
        }{
            % decision nodes with light blue background
            fill=blue!5
        }
    }
    [$u \leq -\sqrt{2}$
        [concave left, yes]
        [$u \leq \sqrt{2}$, no
            [$l \geq -\sqrt{2}$, yes
                [convex middle, yes]
                [$\gelu'(l) \leq \frac{\gelu(u) - \gelu(l)}{u - l}$, no
                    [non-convex middle increasing, yes]
                    [$\gelu'(u) \leq \frac{\gelu(u) - \gelu(l)}{u - l}$, no
                        [non-convex middle decreasing II, yes]
                        [non-convex middle decreasing I, no]
                    ]
                ]
            ]
            [$l \geq \sqrt{2}$, no
                [concave right, yes]
                [$\gelu'(u) \geq \frac{\gelu(u) - \gelu(l)}{u - l}$, no
                    [outer, yes]
                    [$\gelu'(l) \geq \frac{\gelu(u) - \gelu(l)}{u - l}$, no
                        [non-convex positive II, yes]
                        [non-convex positive I, no]
                    ]
                ]
            ]
        ]
    ]
\end{forest}}
\caption{Case distinction for improved $\gelu$ relaxation.}
\label{fig:improved-gelu-cases}
\end{figure}

In an attempt to compute tighter bounds for Equation~\eqref{eq:nn_optimization} for $\gelu$ activations, we tried to improve upon the parameterized $\gelu$ relaxation given by Shi \textit{et al.}~\cite{Shi2025general} used in $\alpha$-CROWN~\cite{Xu21aCROWN}.
While our relaxation achieves bounds that are significantly tighter at parameter initialization, the difference to the relaxation by Shi \textit{et al.} after parameter optimization is minimal.
Since we used our relaxation in our code, we nevertheless give a brief overview over its construction.

At a high level, we provide linear over- and underapproximations
\begin{align}
    \lo{a}&(\alpha_1, l, u) x + \lo{b}(\alpha_1, l, u) \leq \gelu(x) \\
    \leq&~ \up{a}(\alpha_2, \alpha_3, l, u) + \up{b}(\alpha_2, \alpha_3, l, u) \quad \forall x \in [l, u]\;,
\end{align}
where our relaxation is parameterized by a lower tangent point $\alpha_1$, an upper left tangent point $\alpha_2$ and an upper right tangent point $\alpha_3$.
We then use the case distinction shown in Figure~\ref{fig:improved-gelu-cases} to select if one of the tangents or a secant line is a valid lower or upper bound.

The $\gelu$ function is concave for $x \in (-\infty, -\sqrt{2}]$ and $x \in [\sqrt{2}, \infty)$. 
This relates to the cases \emph{concave left} and \emph{concave right} in Figure~\ref{fig:improved-gelu-cases}.
If $[l, u]$ is completely within these regions, the secant is always the best linear lower bound and the tangent at any point $x \in [l, u]$ is a valid linear upper bound.
For initialization, we set the tangent point (either $\alpha_2$ or $\alpha_3$) to $t = 1/2 \cdot (l + u)$.
This choice leads to an area-optimal overapproximation~\cite{Biktairov2023sol}.

Since $\gelu$ is convex for $x \in [-\sqrt{2}, \sqrt{2}]$, we can use the secant as an upper bound and set $\alpha_1$ similarly as above.

In the cases \emph{non-convex middle increasing} and \emph{outer}, we can use the secant as a valid upper bound and the tangent at $t \in [l', u']$ for some $l \leq l', u' \leq u$ as a valid lower bound.
We employ the same procedure as Shi \textit{et al.}~\cite{Shi2025general} for finding $l', u'$ and clamp the value of $\alpha_1$ to these bounds if necessary.

For the cases \emph{middle decreasing II} and \emph{non-convex positive II}, the secant is a valid lower bound and the tangent at $t \in [l, u']$, respectively the tangent at $t \in [l', u]$ is a valid upper bound.
In these cases, we initialize $\alpha_2$ or $\alpha_3$ by computing $1/2 \cdot (l + u)$ and clamping to the valid bounds.
%\todo{why is this optimal?}.

Finally, for \emph{middle decreasing I} and \emph{non-convex positive I}, the tangent at $t \in [l', u']$ is a valid lower bound and the tangent at $t \in [l, u'']$ or respectively $t \in [l'', u]$ is a valid upper bound. 
As initialization for $\alpha_1, \alpha_2$ and $\alpha_3$, we choose again the mid-point of the interval $[l, u]$ clamped to the valid tangent point ranges.

\section{Additional Information on Differential Verification for Neural Networks}
\label{sec:diff-nn}

In this section, we provide information for how to instantiate the differential relaxation presented in Theorem~\ref{thm:p-act-relax} for the $\gelu$ function and report an optimization used to tighten the verified differential bounds.

\subsection{Bounds on the Derivative of $\gelu$}
\label{ssec:gelu-derivative-bounds}

Instantiating the differential relaxation for $p(x) - \sigma(x - \Delta)$ as described in Theorem~\ref{thm:p-act-relax} for $\sigma = \gelu$, requires computing bounds on the derivative $\gelu'(x)$ of the $\gelu$ function over an interval $x \in [l, u]$.

Since $\gelu(x)$ is concave for $D_1 = (-\infty, -\sqrt{2}]$, convex for $D_2 = [-\sqrt{2}, \sqrt{2}]$ and concave for $D_3 = [\sqrt{2}, \infty)$, its derivative is monotonically decreasing, monotonically increasing and monotonically decreasing over the respective domains.
Therefore, the extrema can only be attained at the boundaries of these regions:
\begin{align*}
    C = \bigcup_{i \in \{1,2,3\}} \left\{x \mid x \in \{l_i, u_i\},~ [l_i, u_i] = D_i \cap [l, u]\right\} \;,
\end{align*}
where we do not add the interval boundaries of degenerate intervals.
We can then just optimize over this finite set and use
\begin{align}
    \min / \max \left\{\gelu'(x) \mid x \in C\right\} \;.
\end{align}

\subsection{Increased Precision using Construction-time Verified Ranges}
\label{ssec:increased-precision-ranges}

Note that the formulas to compute the neuron-wise differences $\tilde{\vec{\Delta}}^{(l)}$ and $\vec{\Delta}^{(l)}$ in Section~\ref{sec:background_nnv} still partially depend on the values $\vec{y}_i^{(l-1)}$ and $\tilde{\vec{y}}_i^{(l)}$ of neurons in the individual networks $\net_1$ and $\net_2$.
Our differential verification approach extends \textsc{VeryDiff}~\cite{Teuber2025verydiff}, which computes neuron ranges based on zonotope propagation and is thus limited to linear relaxations with parallel slopes.
State-of-the-art NN verifiers like $\alpha$-CROWN~\cite{Xu21aCROWN} are known to compute tighter bounds for ranges in single neural networks.
If such a NN verifier was used to compute the verified pre-activation bounds of the polynomial network, we can use the tightened lower and upper bounds
\begin{align}
    l &= \max\{l_{zono}, l_p\}, && u = \min\{u_{zono}, u_p\} \;,
\end{align}
where $[l_{zono}, u_{zono}]$ are the ranges computed by \textsc{VeryDiff} and $[l_p, u_p]$ are the verified ranges for the polynomial network.

\section{Additional Evaluation Results}
\label{sec:additional-results}
\FloatBarrier

%\todo[inline]{If we desperately need space, remove this and put it in the code repo.}

In this appendix, we provide a more detailed description of our benchmarks and base neural networks.
Furthermore, we present additional evaluation results.

% \subsection{Experiment Settings}
% \label{ssec:experiment-settings}

% We implemented our approach for certified design of polynomial NNs in Julia~\cite{bezanson2017julia} on top of the differential verifier \textsc{VeryDiff}~\cite{Teuber2025verydiff}.
% Execution of polynomial NNs under CKKS is based on OpenFHE~\cite{OpenFHE}.

% To compute verified pre-activation ranges, we leverage the state-of-the-art NN verifier $\alpha$-CROWN~\cite{Xu21aCROWN}.

% We use a piecewise-polynomial overapproximation $q(x)$ for the $\gelu$ activation function with verified error $|q(x) - \gelu(x)| \leq 10^{-10}$.
% The overapproximation uses $6$ polynomial pieces of degree $15$ in addition to the linear asymptotes.

\subsection{Benchmarks}
\label{ssec:benchmarks}

In the following, we give a brief description of each dataset and the associated machine learning task. %and the training procedure of the neural networks for this benchmark.
%The network architectures for each benchmark are summarized in Table~\ref{tab:architecture}.
%All of the networks listed in this table are later used as base networks that are approximated by polynomial networks.
As prior work~\cite{Kern2025fhe} has already shown that training the base networks with $L_1$-regularization significantly reduces the polynomial approximation error, we trained all NNs with $L_1$-penalty (see Table~\ref{tab:large-scale}).

We run our approach on benchmarks related to classification tasks to evaluate the difference in accuracy between base and polynomial network, and also evaluate on regression tasks, where the verified error bounds are a more meaningful measure for comparison between the base and the polynomial network.

\subsubsection{Classification Benchmarks}\hfill

\textbf{HELOC.}
The \emph{home equity line of credit} dataset~\cite{FICOChallenge} contains $23$ features used to predict if a person is credit-worthy.
Following the procedure of prior work on certified construction of polynomial networks~\cite{Kern2025fhe}, we normalized the features to $[0,1]$.
We also use the $\relu$ neural networks provided by them.
Neural networks with $\gelu$ activation are obtained by training directly on the HELOC dataset.

\textbf{MNIST.}
Neural networks trained on the MNIST dataset~\cite{Lecun1998mnist} aim to correctly classify the handwritten digit shown on $28 \times 28$ pixel grayscale images.
Pixel values are normalized to $[0, 1]$.
Since Kern \textit{et al.}~\cite{Kern2025fhe} also evaluated on MNIST, we simply reuse their $\relu$ networks.
Additionally, we train networks of the same architecture with $\gelu$ activation on the MNIST dataset to obtain base networks with this activation function.

\textbf{CIFAR10.}
The task associated with the CIFAR10 dataset~\cite{krizhevsky2009cifar} is to assign the correct labels to images represented by inputs of dimension $3 \times 32 \times 32$, where each dimension can attain values in $[0,1]$.
The dataset is a popular benchmark used to evaluate FHE neural networks~\cite{Rovida2025fhe,Ebel2025orion}.
However, those approaches rely on sampled pre-activation bounds.
To achieve good accuracy of the polynomial networks -- despite wide verified bounds due to overapproximation incurred for large networks by NN verifiers -- we train two convolutional networks of different sizes ourselves.
The difference in performance between the small and the large network can then be used to judge scalability of our verified approach.

\subsubsection{Regression Benchmarks}\hfill

\textbf{Residential Electricity Forecasting (EC).} 
Prior research on FHE used electricity demand forecasting as a use case \cite{Bos2017energy}. In such setting, the smart meter associated to a residential building would send encrypted consumption data to a centralised server and get a short-term (half-hour) prediction of the future load. Similar to \cite{Bos2017energy}, we train a network that receives the past 24 hours of consumption data as input (averaged across $10$ homes), and predicts the next half-hour consumption. Additional input features include day of the year, day of the week, and time of the day. The dataset is real-world consumption data from a large-scale study from 2011, Ireland \cite{CER2011electricity}. As discovered by \cite{Veit2014forecasting}, a small fully-connected network with a single hidden layer suffice to maximise predictive accuracy; we choose a width of $64$ neurons.
% \todo[inline]{Be careful to present it as purely an equivalence problem, rather than an accuracy one (simple linear classifiers are quite good here).}

\textbf{NN4Sys.}
We use the \emph{learned index} models from the NN4Sys benchmark~\cite{Lin2024NN4Sys} used in the annual neural network verification competition \textsc{VNNComp}~\cite{vnncomp2025}.
The associated task is, to map a key (a scalar in $[0, 1]$) to an index (also a scalar in $[0, 1]$) in a database.
Since the available networks are not trained using $L_1$ regularization, we extract a training dataset from the input-output specifications that were used in the competition to train $relu$ and $gelu$ neural networks from scratch.
We use the same architecture as the non-$L_1$ regularized networks in the original NN4Sys benchmark.

\textbf{Collins RUL.}
The Collins RUL benchmark~\cite{Kirov2023CollinsRUL} was also used in the competition \textsc{VNNComp}~\cite{vnncomp2025}.
It consists of convolutional networks that process multivariate time series data (windows of $20$ time steps for $20$ sensors) to predict the remaining useful life of aircraft components.
Since the original networks provided by the benchmark are, again, not trained using $L_1$ regularization, we train $\relu$ and $\gelu$ networks of the same structure ourselves.
However, as the benchmark is only accompanied by a small test dataset, we train our networks to minimize the mean squared error $(\net_{L_1}(x) - \net(x))^2$ to the original network over random data (while monitoring the loss over the test dataset).

\subsection{Comparison to Prior Certified Approaches}
\label{ssec:comparison-prior-verified}

\begin{figure}[ht]
    \centering
    \includegraphics[width=\linewidth]{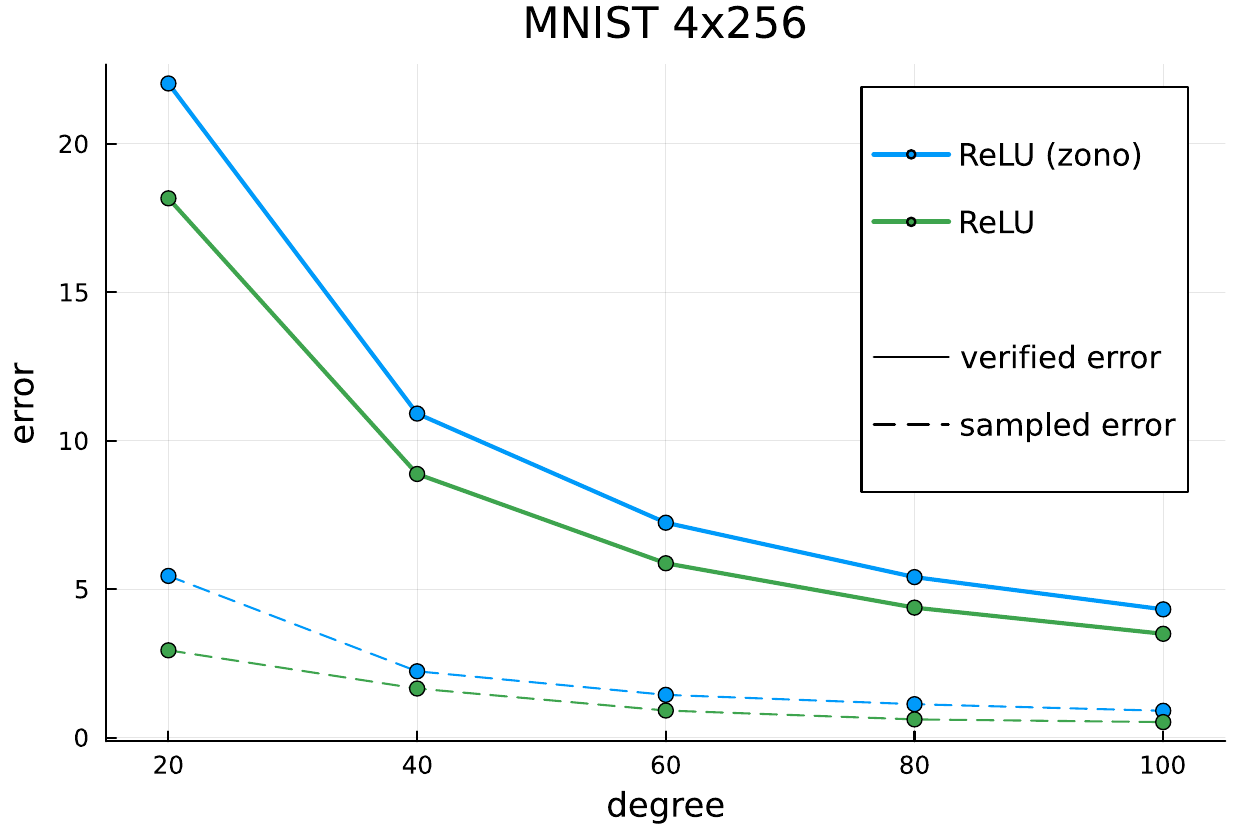}
    \caption{Comparison of our method to \textsc{ZonoPoly}~\cite{Kern2025fhe} for verified bounds and sampled values on $\|f_\pi(x) - f(x)\|_\infty$.}
    \label{fig:comparison_zonopoly}
\end{figure}

% \begin{figure*}
%     \centering
%     \includegraphics[width=\linewidth]{images/experiments/ablation_comparison_zonopoly.pdf}
%     \caption{Comparison our method and \textsc{ZonoPoly}~\cite{Kern2025fhe}  for verified and sampled bounds on $\| f_\pi(x) - f(x)\|_\infty$. \textsc{ZonoPoly} is only applicable to $\gelu$ and convolutional architectures due to our extensions.}
%     \label{fig:comparison_zonopoly}
% \end{figure*}

We also conducted experiments to compare our approach to $\textsc{ZonoPoly}$~\cite{Kern2025fhe} the prior state of the art in certified construction of polynomial neural networks.

Both our approach and \textsc{ZonoPoly} extend the zonotope-based differential verification tool \textsc{VeryDiff}~\cite{Teuber2025verydiff} to compute bounds on the output difference between original and polynomial neural networks.
However, our approach can leverage state of the art neural network verifiers (e.g.~\cite{Xu21aCROWN}) to compute bounds on the networks' preactivation ranges (see Section~\ref{ssec:range-analysis}), while \textsc{ZonoPoly} is restricted to less powerful zonotope-based methods for this step.

As \textsc{ZonoPoly} is only applicable to fully connected networks with $\relu$ activation function, we only compare performance on the MNIST $256 \times 4$ network with $\relu$ activation.
Starting from this base network, we constructed polynomial networks for a range of polynomial degrees and measured the verifiable bounds on the error $\| f_\pi(x) - f(x) \|_\infty$, as well as the sampled error over the MNIST dataset for each method.

The results shown in Figure~\ref{fig:comparison_zonopoly} demonstrate that our approach achieves consistently better error bounds than $\textsc{ZonoPoly}$ ($1.21 \times$ and $1.23 \times$ smaller verified error).

\FloatBarrier

% \begin{figure}
%     \centering
%     \includegraphics[width=0.7\linewidth]{images/cifar_large_sampled_nan.png}
%     \caption{Perturbed CIFAR 10.1 images on which the large CIFAR NN is vulnerable to an overflow events}
%     \label{fig:cifar-perturbations}
% \end{figure}
% \FloatBarrier

\section{Proofs}
\printProofs

\section{Network to JSON grammar}
\label{sec:network-to-json}
Our goal is to, given a polynomial network, being able to evaluate it under CKKS. In the literature we can find different examples of general purpose HE compilers, such as EVA~\cite{10.1145/3385412.3386023} or HECATE~\cite{9741265}. Unlike these frameworks, our input to the CKKS program is a JSON file containing a list of NN layers, with corresponding sizes and weights, as defined in the grammar in Figure \ref{fig:grammar}. Our goal is not to be as generic as the previously mentioned compilers.
Instead, we tailor the CKKS circuit construction to the NN use case and implemented an automatic FHE evaluation of each specified layer. Moreover, our evaluation takes as input coefficients of Chebyshev polynomials, one for each neuron, which enables the evaluation of overflow-free networks under CKKS using Algorithm \ref{alg:ps}. This evaluation of heterogenuous polynomials in a single layer is not, to the best of our knowledge, possible in any HE compiler.
%, as we simply require to evaluate in CKKS the networks crafted using our NN verification techniques, mostly implemented in Julia. For this reason we simply build, in OpenFHE, a program taking as input a JSON network and we implemented an automatic evaluation of each specified layer.
\subsubsection{Automatic evaluation of linear layers}
We implemented simple algorithms to evaluate linear layers. We first define the following encodings.
\begin{definition}[Repeated encoding]
    Given a vector ${\bm v} \in \mathbb{R}^n$, its \emph{repeated encoding} is a vector ${\bm v}' \in \mathbb{R}^{n^2}$ where ${\bm v}$ is repeated $n$ times. Visually:
    \[
    {\bm v}' := (v_1, v_2, \dots, v_n \,|\, v_1, v_2, \dots, v_n\,|\, \dots) \in \mathbb{R}^{n^2}.
    \]
\end{definition}
\begin{definition}[Expanded encoding]
    Given a vector ${\bm v} \in \mathbb{R}^n$, its \emph{expanded encoding} is a vector ${\bm v}' \in \mathbb{R}^{n^2}$ where each component $v_i$ of ${\bm v}$ is repeated $n$ times. Visually:
    \[
    {\bm v}' := (v_1, v_1, \dots, v_1 \,|\, v_2, v_2, \dots, v_2\,|\, \dots) \in \mathbb{R}^{n^2}.
    \]
\end{definition}
We use two different input encodings as we employ two different algorithms to evaluate the typical $W{\bm x} + {\bm b}$ transformation, depending on how the input is encoded (Algorithm \ref{alg:linear1} and \ref{alg:linear2}).
\begin{algorithm}[h]
\caption{Linear layer with expanded input}
\label{alg:linear1}
\begin{algorithmic}[1]
\Require Expanded ciphertext ${\bm c} \in \mathbb{R}^{n^2}$, column-wise weight matrix ${\bm w} \in \mathbb{R}^{n^2}$, repeated bias ${\bm b} \in \mathbb{R}^{n^2}$, dimension $n$
\Ensure Repeated encoding of $W{\bm x} + {\bm b} \in \mathbb{R}^{n^2}$
\State ${\bm r} := {\bm c} \odot {\bm w}$ \Comment{\it Component-wise multiplication}
\For{$i = 0, 1, \dots, \lfloor \log_2 n \rfloor - 1$}
    \State ${\bm r} := {\bm r} + \mathrm{rot}({\bm r}, n \cdot 2^i)$
\EndFor
\State \Return ${\bm r}$ + ${\bm b}$
\end{algorithmic}
\end{algorithm}
\begin{algorithm}[h]
\caption{Linear layer with repeated input}
\label{alg:linear2}
\begin{algorithmic}[1]
\Require Repeated ciphertext ${\bm c} \in \mathbb{R}^{n^2}$, row-wise weight matrix ${\bm w} \in \mathbb{R}^{n^2}$, expanded bias ${\bm b} \in \mathbb{R}^{n^2}$, dimension $n$
\Ensure Expanded encoding of $W{\bm x} + {\bm b} \in \mathbb{R}^{n^2}$
\State ${\bm r} \leftarrow {\bm c} \odot {\bm w}$
\For{$i = 0, 1, \dots, \lfloor \log_2 n \rfloor - 1$}
    \State ${\bm r} \leftarrow {\bm r} + \mathrm{rot}({\bm r},\, 2^i)$
\EndFor
\State ${\bm m} \leftarrow (1, \underbrace{0, \dots, 0}_{n-1}, 1, \underbrace{0, \dots, 0}_{n-1}, \dots) \in \mathbb{R}^{n^2}$ \Comment{\it Mask correct slots}
\State ${\bm r} \leftarrow {\bm r} \odot {\bm m}$
\For{$i = 0, 1, \dots, \lfloor \log_2 n \rfloor - 1$} \Comment{\it Repeat along empty slots}
    \State ${\bm r} \leftarrow {\bm r} + \mathrm{rot}({\bm r},\, -2^i)$
\EndFor
\State \Return ${\bm r} + {\bm b}$
\end{algorithmic}
\end{algorithm}

\label{sec:ckks_nn_translation}
%\todo[inline]{Final thing to write}
%\todo[inline]{Describe the vec x mat algorithms and the convolutions}
%\todo[inline]{Here we describe the software pipeline, include a specification of the PS algorithm, extended from ctxt*const to ctxt*ptxt}
For convolutions, we used ad-hoc source codes as their evaluation is harder to automate. However, our framework is flexible enough to allow for the automatic evaluation of convolutions by simply defining a function that handles the $\langle \text{conv-layer}\rangle$ tuple in the grammar presented in Figure \ref{fig:grammar}.
\subsubsection{Automatic evaluation of Chebyshev layers.}
The evaluation of Chebyshev polynomials follows the footprint firsly introduced in \cite{CCS19} --- which in turn implements a version of the Paterson-Stockmeyer algorithm \cite{PS73} adapted for Chebyshev bases --- however our approach is slightly more general as in \cite{CCS19} they assume to evaluate a single polynomial in all the slots of a ciphertext. On the other hand, in our construction we require to evaluate a different polynomial for each slot. Luckily, we can use the strategy described in \cite[Section 3.1]{CCS19} and replace the usage of constants coefficents $c_i$ to packed coefficients in a plaintext as ${\bm c}_i$. The following corollary easily follows.
\begin{corollary}[\mbox{generalizes \cite[Theorem 1]{CCS19}}]
    There exists an algorithm in CKKS to evaluate $d$ polynomials of degree $n$ given in Chebyshev base with $\sqrt{2n} + O(\log(n))$ non-scalar multiplications and $O(n)$ scalar multiplications, with $d$ being the number of available slots in a CKKS ciphertext.\label{cor:cheby}
\end{corollary}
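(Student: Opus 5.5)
The approach is to rerun the proof of \cite[Theorem 1]{CCS19} and check that nothing in it uses the fact that the coefficients are scalars rather than slot-wise plaintext vectors. Let $p_j(x)=\sum_{i=0}^{n} c_{i,j} T_i(x)$, $j=1,\dots,d$, be the polynomials to be evaluated in slot $j$. Pack the $i$-th coefficients into the plaintext vector ${\bm c}_i=(c_{i,1},\dots,c_{i,d})$, so that the target is the single "vector polynomial" $P(x)=\sum_i {\bm c}_i\odot T_i(x)$. CKKS operations are slot-wise, so evaluating $P$ on a ciphertext encrypting ${\bm x}$ yields $p_j(x_j)$ in slot $j$.

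The evaluation follows the Paterson--Stockmeyer recursion in Chebyshev basis. Choose $k\approx\sqrt{n/2}$ and $m$ with $k2^{m}>n$. First compute the baby steps $T_1,\dots,T_k$ and the giant steps $T_{2k},T_{4k},\dots,T_{2^{m-1}k}$ using $T_{a+b}=2T_aT_b-T_{|a-b|}$. These powers depend only on the ciphertext and not on the coefficients, so they cost exactly what they cost in \cite{CCS19}: $k+O(m)=O(\sqrt n)+O(\log n)$ non-scalar multiplications.

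Next, divide recursively by the giant steps: $P=Q\cdot T_{2^{m-1}k}+R$. The key point is that Chebyshev long division by $T_N$ is a \emph{linear} map on the coefficient vector. It follows from $T_iT_N=\tfrac12(T_{i+N}+T_{|i-N|})$, and its matrix has scalar entries independent of the $p_j$. Applying it coordinate-wise to the packed coefficients therefore gives packed quotient and remainder coefficients ${\bm q}_i,{\bm r}_i$, with $q_j,r_j$ being exactly the quotient and remainder of $p_j$ in slot $j$. Doing this division is offline, plaintext-side preprocessing.

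At the leaves, each remainder of degree $<k$ is evaluated as $\sum_{i<k}{\bm r}_i\odot T_i$. This uses only plaintext--ciphertext products, i.e.\ scalar multiplications in the paper's sense (\textsf{Mul}$({\bm c},{\bm p})$ in Table~\ref{tab:ckksoperations}). There are $O(n)$ of them in total, just as the constants $c_i$ gave $O(n)$ scalar products in \cite{CCS19}. Recombination multiplies each computed quotient by a giant step, costing one non-scalar multiplication per internal node, and there are $2^{m}-1\approx n/k$ such nodes. Balancing gives $k+n/k+O(\log n)$, which with $k\approx\sqrt{n/2}$ becomes $\sqrt{2n}+O(\log n)$. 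This is the count of \cite[Theorem 1]{CCS19} unchanged.

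The main obstacle is not the operation count but the depth and scale bookkeeping. A ciphertext--plaintext product consumes a level, exactly like a constant multiplication that is encoded with scale $\Delta$. One must check that the level and scale alignment argument of \cite{CCS19} goes through when the constants are replaced by plaintexts encoded at matching scale. I would argue that in RNS-CKKS \cite{KPP22} constants are themselves multiplied as scaled plaintexts, so the level analysis is identical.

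Finally, setting all ${\bm c}_i$ to constant vectors recovers \cite{CCS19} exactly, which justifies the word ``generalizes''. The count of $d$ polynomials is simply the number of slots, since each slot carries its own coefficient vector at no extra cost.
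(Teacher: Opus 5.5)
Your high-level strategy matches the paper's. The baby and giant steps do not depend on the coefficients, each constant $c_i$ is replaced by a packed plaintext ${\bm c}_i$, and a plaintext product costs what a constant product costs. The gap is in your reconstruction of the algorithm of \cite{CCS19}, and it breaks the operation count.

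\textbf{The count does not reach $\sqrt{2n}$.} What you describe is plain recursive baby-step/giant-step division: you recurse on both quotient and remainder down to leaves of degree $<k$. That tree has $2^m-1\approx n/k$ internal nodes, so your count is $k+n/k+O(\log n)$. Your balancing step is then an arithmetic error. At $k\approx\sqrt{n/2}$ this equals $\sqrt{n/2}+\sqrt{2n}=3\sqrt{n/2}\approx 2.12\sqrt n$, and even the optimal choice $k=\sqrt n$ gives $2\sqrt n$. A $\Theta(\sqrt n)$ excess cannot be absorbed into the $O(\log n)$ term, so the stated bound is not established.

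\textbf{The missing idea} is the Paterson--Stockmeyer \emph{monic} trick, which the paper's Algorithm~\ref{alg:ps} carries over.
\begin{enumerate}[label=(\roman*)]
\item Pad $\tilde p_j=p_j+T_{k(2^m-1)}$, where $k(2^m-1)>n$.
\item Split $\tilde p_j=T_{k2^{m-1}}q_j+r_j$.
\item Divide $r_j-T_{k(2^{m-1}-1)}$ by the quotient $q_j$ itself, obtaining $d_j$ of degree $<k$ and a remainder $s_j$.
\item This gives $\tilde p_j=(T_{k2^{m-1}}+d_j)\,q_j+\bigl(s_j+T_{k(2^{m-1}-1)}\bigr)$.
\end{enumerate}
A single non-scalar product now yields two subproblems of degree $k(2^{m-1}-1)$. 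The degree-$k$ base cases need only scalar products against the baby steps. The recursion therefore uses just $2^{m-1}-1\approx n/(2k)$ non-scalar products, and balancing $k+n/(2k)$ at $k\approx\sqrt{n/2}$ gives $\sqrt{2n}$.

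\textbf{Your key lemma then needs supplementing.} Division by $T_N$ is a slot-independent linear map, as you say. The second division, $r_j-T_{k(2^{m-1}-1)}=d_jq_j+s_j$, is by the slot-dependent polynomial $q_j$. So it is neither linear in the coefficients nor the same map in every slot. It is still harmless, because it is done in the clear, slot by slot, before the coefficients of $d_j$ and $s_j$ are packed.

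What you must verify instead is uniformity across slots. The common padding term $T_{k(2^m-1)}$ forces every $q_j$, and every $s_j+T_{k(2^{m-1}-1)}$, to have exact degree $k(2^{m-1}-1)$. Hence $\deg d_j<k$ in every slot, and all slots follow the same recursion tree with the same $k$ and $m$. This is what lets one ciphertext--ciphertext product compute $(T_{k2^{m-1}}+d_j)\,q_j$ in every slot $j$ at once. It is also why the paper requires that the degrees and the parameters $k,m$ coincide for all polynomials.
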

Refer to \ref{alg:ps}.
We remark that the cost of evaluating a single polynomial in all slots or a different polynomial, one for each slot, is the same as constants are typically encoded in plaintexts in order to be compatible with the CKKS multiplication\footnote{Minor differences might occur in practice depending on how the multiplication algorithms are implemented for constants vs. plaintexts, but we can ignore them}.
We therefore used such algorithm to allow for the evaluation of different polynomials in different slots.
Since the original OpenFHE implementation of CKKS does not provide such an algorithm, we used a custom fork that enables the usage of the \texttt{EvalChebyshevSeriesPSBatch}.

\subsubsection{Building the grammar}
Given all the just presented approaches, we define a grammar in Figure \ref{fig:grammar} that will be the input to our CKKS program. In practice, given a polynomial network, we export it into a JSON file following the grammar, examples are given in the \texttt{networks/paper} folder of our public repository. Notice that, in its current state, che convolutional layer is still unsupported --- and ad-hoc code must be written --- as we leave the implementation of arbitrary convolution under CKKS as future work.

\subsubsection{Automatic parameters selection}
Given a JSON network, the program adapts the choice of the underlying CKKS scheme by first computing the required multiplicative depth. In practice, Algorithm \ref{alg:linear1} requires 1 multiplications and Algorithm \ref{alg:linear2} requires 2. On the other hand, the depth required by the evaluation of Chebyshev polynomials (Corollary \ref{cor:cheby}) depends on the degree as shown in Table \ref{tab:mult-depth-cheby}.
\begin{table}[h]
\centering
\caption{Multiplicative depth by degree using the Paterson-Stockmeyer algorithm \cite{PS73,CCS19}\vspace*{1mm}}
\label{tab:mult-depth-cheby}
\renewcommand{\arraystretch}{1.1}
\begin{tabular}{cc}
\textbf{Degree} & \textbf{Multiplicative Depth} \\
\hline
3--5    & 4  \\
6--13   & 5  \\
14--27  & 6  \\
28--59  & 7  \\
60--119 & 8  \\
120--247 & 9  \\
\end{tabular}
\end{table}
Given the overall multiplicative depth, along with a specified precision parameters ($\Delta, q_i$), the program fixes the minimum required ring size $N$ that ensures at least 128-bits of security according to \cite{FheGuidelines}.

\section{SIMD Paterson-Stockmeyer}
In this section we simply provide a formal definition of our modified version of the Paterson--Stockmeyer algorithm that supports the evaluation of $s$ polynomials in the $s$ different slots of a CKKS ciphertext (Algorithm \ref{alg:ps}).
\begin{algorithm}[h]
\caption{SIMD Paterson--Stockmeyer algorithm}
\label{alg:ps}
\begin{algorithmic}[1]
\Require{Set of Chebyshev coefficients $({\bm c}_0, {\bm c}_1, \ldots, {\bm c}_s)$ of equal degree, encrypted vector ${\bm u}$ of size $s$}
\Ensure{$p_i(u_i) = \displaystyle\sum_{j=0}^{n} (c_i)_j\, T_j(u_i)$}, for each $0 \le i < s$, with $s$ being the number of CKKS slots.\vspace*{2mm}

\Statex \emph{\small Note: the notation used within the algorithm is local and does not coincide with the notation used elsewhere}
\vspace*{2mm}

\State
  Find positive integers $k, m$ such that $k \approx \sqrt{n/2}$ and $k(2^{m}-1) > n$.

\State
  $\widetilde{p}(x) \gets p(x) + T_{k(2^{m}-1)}(x)$

\State
  Compute $\mathcal{B}_i \gets \bigl(T_1(u_i),\, T_2(u_i),\, \ldots,\, T_k(u_i)\bigr)$ for each $i < s$ via the recurrence
  \[
  \begin{split}
    T_0(u_i)&=1,\quad T_1(u_i)=u_i,\quad\\
    T_j(u_i) &= 2u_i\cdot T_{j-1}(u_i) - T_{j-2}(u_i),
    \end{split}
  \]
  compute $\mathcal{G}_i \gets \bigl(T_k(u_i),\, T_{2k}(u_i),\, T_{4k}(u_i),\, \ldots)$
  via the doubling formula $T_{2\ell}(u_i) = 2\,T_{\ell}(u_i)^{2} - 1$. 

\State
  Apply \cite[Lemma~1]{CCS19} to define $q_i(x), r_i(x)$ as
  \[
    \widetilde{p}_i(x) \gets T_{k\cdot 2^{m-1}}(x)\cdot q_i(x) + r_i(x),
  \]
  with $\deg q_i = k(2^{m-1}-1)$ and $\deg r_i < k\cdot 2^{m-1}$.

\State
  $\widetilde{r}_i(x) \gets r_i(x) - T_{k(2^{m-1}-1)}(x)$.

\State
  Apply \cite[Lemma~1]{CCS19} again to define $d_i(x)$ and $s_i(x)$ such that
$\widetilde{r}_i(x) = d_i(x)\cdot q_i(x) + s_i(x)$.

\State
  Evaluate $d_i(u_i)$ using the Chebyshev coefficients of ${\bm c}_i$ and the
  precomputed values $\mathcal{B}_i$

\State
  $\widetilde{s}_i(x) \gets s_i(x) + T_{k(2^{m-1}-1)}(x)$.

\State
  Recursively evaluate $q_i(u_i)$ and $\widetilde{s}_i(u_i)$ at $u_i$ (skip Steps~1--3; reuse $\mathcal{B}$ and $\mathcal{G}$).

\State
  Compute
  \[
    \widetilde{p}_i(u_i) \gets
    \bigl(T_{k\cdot 2^{m-1}}(u_i) + d_i(u_i)\bigr)\cdot q_i(u_i) + \widetilde{s}_i(u_i),
  \]
  where $T_{k\cdot 2^{m-1}}(u_i)$ is the last entry of $\mathcal{G}_i$.

\State
  Compute $T_{k(2^{m}-1)}(u_i)$ from $\mathcal{G}_i$ using the iterative formula
  derived from the Chebyshev product identity
  $T_a\,T_b = \tfrac{1}{2}\bigl(T_{a+b}+T_{|a-b|}\bigr)$:
  \[
  \begin{split}
    V_1 &\gets T_k(u_i),\qquad\\
    V_j &\gets 2\,V_{j-1}\cdot T_{k\cdot 2^{j-1}}(u_i) - T_k(u_i),
    \quad j=2,\ldots,m;
    \end{split}
  \]

\State \Return $p_i(u_i) \gets \widetilde{p}_i(u_i) - V_m$
\end{algorithmic}
\end{algorithm}

Notice that the just presented algorithm is simply an extension of the original one, taking as input sets of coefficients instead of a single set. Also notice that, since the flow %\todo{what is flux? LR: i changed to flux of execution :p} 
of execution of the algorithm heavily depends on the degree(s) of the polynomial(s), all the different degrees of ${\bm c}_i$ must be equal in our instantiation. Similarly, the integers $k, m$ must also be the same for all the different polynomials.

\begin{figure*}[h]
    \centering
\begin{align*}
\langle\text{root}\rangle &::= \texttt{[} \ \langle\text{layer-list}\rangle \ \texttt{]} \\
\langle\text{layer-list}\rangle &::= \langle\text{layer}\rangle \ \langle\text{layer-rest}\rangle \mid \varepsilon \\
\langle\text{layer-rest}\rangle &::= \texttt{,} \ \langle\text{layer}\rangle \ \langle\text{layer-rest}\rangle \mid \varepsilon \\
\langle\text{layer}\rangle &::= \langle\text{linear-layer}\rangle \mid \langle\text{chebyshev-layer}\rangle \mid \langle\text{conv-layer}\rangle \\
\\
\langle\text{linear-layer}\rangle &::= \texttt{[} \ \texttt{"linear"} \texttt{,} \ \langle\text{num-array}\rangle \texttt{,} \ \langle\text{num-array}\rangle \texttt{,} \ \langle\text{shape}\rangle \ \texttt{]} \\
\\
\langle\text{chebyshev-layer}\rangle &::= \texttt{[} \ \texttt{"chebyshev"} \texttt{,} \ \langle\text{num-array}\rangle \texttt{,} \ \langle\text{num-array}\rangle \texttt{,} \ \langle\text{num-array}\rangle \texttt{,} \ \langle\text{shape}\rangle \ \texttt{]} \\
\\
\langle\text{conv-layer}\rangle &::= \texttt{[} \ \texttt{"conv"} \texttt{,} \ \langle\text{shape}\rangle \texttt{,} \ \langle\text{integer}\rangle \texttt{,} \ \langle\text{integer}\rangle \texttt{,} \\
&\phantom{::=} \ \langle\text{num-array}\rangle \texttt{,} \ \langle\text{num-array}\rangle \texttt{,} \ \langle\text{shape}\rangle \texttt{,} \ \langle\text{shape2}\rangle \ \texttt{]} \\
\\
\langle\text{num-array}\rangle &::= \texttt{[} \ \langle\text{number-list}\rangle \ \texttt{]} \mid \texttt{[]} \\
\langle\text{number-list}\rangle &::= \langle\text{number}\rangle \mid \langle\text{number}\rangle \ \texttt{,} \ \langle\text{number-list}\rangle \\
\langle\text{shape}\rangle &::= \texttt{[} \ \langle\text{number}\rangle \texttt{,} \ \langle\text{number}\rangle \ \texttt{]} \\
\langle\text{shape2}\rangle &::= \texttt{[} \ \langle\text{number}\rangle \texttt{,} \ \langle\text{number}\rangle \texttt{,} \ \langle\text{number}\rangle \texttt{,} \ \langle\text{number}\rangle \ \texttt{]} \\
\\
\langle\text{number}\rangle &::= \langle\text{integer}\rangle \mid \langle\text{float}\rangle \\
\langle\text{integer}\rangle &::= \langle\text{digits}\rangle \mid \texttt{-} \ \langle\text{digits}\rangle \\
\langle\text{float}\rangle &::= \langle\text{digits}\rangle \ \texttt{.} \ \langle\text{digits}\rangle \mid \texttt{-} \ \langle\text{digits}\rangle \ \texttt{.} \ \langle\text{digits}\rangle \\
\langle\text{digits}\rangle &::= \langle\text{digit}\rangle \mid \langle\text{digit}\rangle \ \langle\text{digits}\rangle \\
\langle\text{digit}\rangle &::= \texttt{0} \mid \texttt{1} \mid \texttt{2} \mid \texttt{3} \mid \texttt{4} \mid \texttt{5} \mid \texttt{6} \mid \texttt{7} \mid \texttt{8} \mid \texttt{9}
\end{align*}
    \caption{Grammar used to build a JSON file executable in our CKKS circuit}
    \label{fig:grammar}
    \end{figure*}

% that's all folks
\end{document}